\newtheorem{theorem}{Theorem}[]
\newtheorem{assumption}{Assumption}[]
\newtheorem{lemma}{Lemma}[]
\newtheorem{definition}{Definition}[]
\newtheorem{corollary}{Corollary}[]
\newtheorem{remark}{Remark}[]
\def\BibTeX{{\rm B\kern-.05em{\sc i\kern-.025em b}\kern-.08em
    T\kern-.1667em\lower.7ex\hbox{E}\kern-.125emX}}
\begin{document}
\title{Distributed Average Consensus via Noisy and Non-Coherent Over-the-Air Aggregation}
\author{Huiwen~Yang, Xiaomeng~Chen, Lingying~Huang, Subhrakanti~Dey, and~Ling~Shi
\thanks{Huiwen Yang, Xiaomeng Chen, and Ling Shi are with the Department of Electronic and Computer Engineering, Hong Kong University of Science and Technology, Hong Kong 00852, China (e-mail: hyangbr@connect.ust.hk;
xchendu@connect.ust.hk; eesling@ust.hk).}
\thanks{Lingying Huang is with the School of Electrical and Electronic Engineering, Nanyang Technological University, Singapore (e-mail: lingying.huang@ntu.edu.sg).}
\thanks{Subhrakanti Dey is with the Department of Electrical Engineering, Uppsala University, Uppsala, Sweden (email: subhrakanti.dey@angstrom.uu.se).}
\thanks{* Corresponding author of this work is Lingying Huang.}}
\maketitle

\begin{abstract}
    Over-the-air aggregation has attracted widespread attention for its potential advantages in task-oriented applications, such as distributed sensing, learning, and consensus. 
    In this paper, we develop a communication-efficient distributed average consensus protocol by utilizing over-the-air aggregation, which exploits the superposition property of wireless channels rather than combat it. Noisy channels and non-coherent transmission are taken into account, and only half-duplex transceivers are required. We prove that the system can \textcolor{black}{achieve average consensus in mean square and even almost surely} under the proposed protocol. Furthermore, we extend the analysis to the scenarios with time-varying topology. Numerical simulation shows the effectiveness of the proposed protocol.
\end{abstract}

\begin{IEEEkeywords}
Multi-agent systems, average consensus, over-the-air aggregation, non-coherent transmission.
\end{IEEEkeywords}

\section{Introduction}
In distributed multi-agent systems, consensus problems have been studied extensively due to their wide applications~\cite{ren2007consensus, olfati2006flocking, tsianos2012consensus}. 
To achieve consensus, each agent should exchange information with their neighbors.
The research about distributed consensus span from time-invariant balanced graphs to time-varying unbalanced graphs~\cite{kingston2006discrete, li2010consensus, li2013consensus, nedic2016convergence, saldana2017resilient, zhang2020state}. To reduce communication costs, an event-triggered mechanism is introduced into the design of consensus protocols~\cite{ding2015event, ding2017overview}. 

In many practical scenarios where agents communicate with each other via wireless channels~\cite{he2013sats, kuriki2014consensus}, the studies mentioned above implicitly require the assumption that the communication links are orthogonal and the transmitted information can be decoded without error. Moreover, a set of weights should be predetermined for aggregating the received signals. 
To realize orthogonal transmission such that interference between different agents can be avoided, multiple access techniques should be adopted, such as time-division multiple access (TDMA) and frequency-division multiple access (FDMA), which require assigning different communication resource blocks to different agents. 
To achieve error-free decoding, perfect channel state information (CSI) is essential. However, the overhead of channel estimation can be very enormous, especially when there are hundreds of agents in the system. Moreover, CSI acquisition may be ravaged by pilot contamination~\cite{elijah2015comprehensive}, which defeats the protocols relying on error-free decoding. 
Although many issues in real communication systems have been considered, e.g., noises~\cite{kar2008distributed, huang2009coordination, li2010consensus}, quantization~\cite{kar2009distributed, lavaei2011quantized}, fading channels~\cite{huang2010stochastic, xu2018mean}, the interference between different agents still needs to be dealt with, which occupies numerous communication resources when the number of agents is considerable.

Recently, over-the-air aggregation, which exploits interference rather than combat it, is considered a candidate technique for task-oriented communication systems, such as the communication systems for distributed learning, sensing, and control~\cite{zhu2021over}. 
In these applications, each agent does not have to know the exact information transmitted by its neighbors, since the aggregated signals can be sufficient for the agent to extract the required information. For example, for agents performing federated learning, the required information can be the weighted summation of their neighbors' gradients~\cite{yang2020federated}. Similarly, only the convex combination of neighbors' information states is required by agents executing a consensus protocol~\cite{li2010consensus}. 
Under federated settings, over-the-air aggregation has been widely utilized and studied~\cite{yang2020federated, amiri2020federated, zhu2020one, su2021data, xu2021learning}. 
In these studies, the foundation of over-the-air aggregation is coherent transmission, which requires analog amplitude modulation and channel pre-compensation~\cite{zhu2021over}. Analog amplitude modulation realizes the analog signal representation of data and channel pre-compensation eliminates the effect of heterogeneous channel fading, so that each component of a received signal corresponds to the transmitted data scaled by a pre-determined factor. Since channel pre-compensation utilizes CSI, in existing works, perfect CSI is still needed for coherent transmission. 
\textcolor{black}{
Compared with traditional communication with multiple access techniques that aim to avoid interference among agents, over-the-air aggregation has the following advantages:
\begin{itemize}
    \item To enhance network capacity: By allowing all agents to transmit in the same communication resource block, it can be expected that less latency will be caused and more agents will be accommodated by the communication system. Furthermore, using analog transmission, coding-decoding delays are avoided, with only electromagnetic wave propagation delay to account for.
    \item To improve reliability: When multiple signals are aggregated coherently, the received signal power can be combined, while the noise power remains relatively constant. As a result, the overall signal-to-noise ratio (SNR) can be improved. A higher SNR generally leads to better signal quality and a more robust communication link, as the received signal is less susceptible to interference or noise. 
    \item To protect privacy: The signal distortion caused by fading and noisy channels is a mask for free that can protect data privacy. This can help in obfuscating individual data transmissions, making it more challenging for unauthorized entities to isolate and identify specific agent data. 
\end{itemize}
}
\textcolor{black}{Although over-the-air aggregation under federated settings, where systems are assisted by a central node, has been extensively studied, there are limited works investigating fully decentralized systems with over-the-air aggregation.} In~\cite{molinari2018exploiting}, Molinari~et~al. propose a consensus protocol, under which systems can achieve weighted average consensus via over-the-air aggregation. In~\cite{molinari2021max}, Molinari~et~al. develop a max-consensus protocol based on over-the-air aggregation. However, the main results of both~\cite{molinari2018exploiting} and~\cite{molinari2021max} are based on some unrealistic assumptions. First, each agent is assumed to be equipped with a full-duplex transceiver, which, however, has been relaxed in their latest work~\cite{molinari2022over} by clustering agents. Second, it is assumed that the channels are noiseless and the receivers are noise-free. Third, the transmitters are assumed to be coarsely synchronized.
In real communication systems, noises are difficult to eliminate, and most transceivers only operate in half-duplex mode. What is more, transmitter synchronization, and hence coherent transmission, is much more difficult to realize under a fully decentralized setting than under a federated setting. The reason is that there is only one destination node, i.e., the central node, in a system under the federated setting. Therefore, each edge node only needs to perform phase cancellation once at each time step. 
However, phase cancellation is a tricky task in fully decentralized systems. Since each node communicates with its neighbors over different channels, which cause different phase shifts, it should transmit different signals with different phase cancellation terms to its different neighbors, respectively. Obviously, over-the-air aggregation will be overshadowed by the constraint of coherent transmission. As a result, to take full advantage of over-the-air aggregation, it is necessary to investigate non-coherent transmission. 
Recently, Michelusi~\cite{michelusi2022decentralized}  investigates the system design for decentralized federated learning with non-coherent over-the-air aggregation, where the communication topology is modeled by a series of complete bipartite graphs, and the mean square convergence of the learning algorithm is proved.

In this paper, we consider the average consensus problem in multi-agent systems via over-the-air aggregation. We jointly design the communication mechanism and the consensus protocol which takes noises and asynchronous transmitters into account, and present simulation results to show the effectiveness of the proposed scheme. 
The main contributions are summarized as follows.
\begin{itemize}
    \item[1)] We propose a communication-efficient distributed average consensus protocol by utilizing over-the-air aggregation. Compared with~\cite{molinari2018exploiting, molinari2021max, molinari2022over}, the implementation of the proposed protocol does not require the assumption of noiseless channels, full-duplex transceivers and phase synchronization of transmitters. Moreover, the effect of noise is taken into account. In summary, our protocol is more practical than previous works.
    \item[2)] We analyze the convergence of the proposed protocol, and prove that the multi-agent system can achieve mean square average consensus and almost sure consensus with a suitable choice of decreasing stepsizes. 
    \textcolor{black}{The convergence analysis is tricky due to two main reasons. First, non-coherent transmission introduces more noise that is state-dependent and has a more complex form. Second, the resulting weight matrix is doubly stochastic in expectation, while existing works at least require the weight matrix to be row-stochastic almost surely and column-stochastic in expectation.}
    \item[3)] We further investigate the convergence performance of the system when the communication topology is time-varying. Specifically, we prove that the system can achieve mean square average consensus and almost sure convergence when the time-varying graph is jointly connected.
    \item[4)] We present numerical simulation results to validate the effectiveness of the proposed protocol. The simulation results show that the system is robust to noises under the proposed protocol. 
    
\end{itemize}

The remainder of this paper is organized as follows.
Section \uppercase\expandafter{\romannumeral2} describes the system design. 
Section \uppercase\expandafter{\romannumeral3} analyses the convergence performance of the system.
Section \uppercase\expandafter{\romannumeral4} extends the main results to the scenario with time-varying topology. 
Section \uppercase\expandafter{\romannumeral5} presents some numerical simulations. 
Section \uppercase\expandafter{\romannumeral6} concludes this paper.
	
\emph{Notations:} 
Let $\mathbb{R}$, $\mathbb{R}^n$, and $\mathbb{R}^{n\times m}$ denote the set of real numbers, the $n$-dimensional Euclidean space, and the set of real matrices with size $n\times m$, respectively. Denote the set of complex numbers as $\mathbb{C}$.
For a real number $r$, $|r|$ denotes its absolute value.
For a complex $c$, $|c|$ denotes its norm (amplitude) and $\mathrm{Re}[c]$ and $\mathrm{Im}[c]$ denotes its real part and imaginary part, respectively. 
For a vector $\boldsymbol{v}\in\mathbb{R}^n$ and a matrix $M\in\mathbb{R}^{n\times m}$, their transposes are denoted by $\boldsymbol{v}^T$ and $M^T$, respectively. \textcolor{black}{For the matrix $M$, $\Vert M \Vert$ denotes its 2-norm and $\Vert M \Vert_F$ denotes its Frobenius norm.}  
Moreover, $v_i$ denotes the $i$-the entry of vector $\boldsymbol{v}$, and $\Vert \boldsymbol{v} \Vert$ denotes the standard Euclidean norm of $\boldsymbol{v}$.
The $n$-dimensional column vector with all elements being $1$ is denoted by $\boldsymbol{1}_n$. 
Let $\{X(k)\}_{k\geq 0}$ denote the sequence $X(0), X(1), \ldots$, where $X(k),\forall k\geq 0$ can be matrices or vectors. Let $\lfloor \cdot \rfloor$ denote the floor function.

\section{System Description}\label{sec:system_description}
\subsection{Multi-agent Systems}
We consider a multi-agent system with $N$ agents and model the network topology as a connected time-invariant undirected graph $\mathcal{G} \triangleq (\mathcal{V}, \mathcal{E})$,
where $\mathcal{V}=\{1, 2, \ldots, N\}$ is the agent set and $\mathcal{E}\in\{(i, j)| i, j\in\mathcal{V}\}$ is the edge set. If agent $i$ and agent $j$ can communicate with each other, we have $(i,j)\in\mathcal{E}$ and $(j,i)\in\mathcal{E}$. 
\textcolor{black}{It is assumed that agents will not communicate with themselves, i.e., $(i,i)\notin\mathcal{E},\forall i\in\mathcal{V}$. 
Define the potential neighbor set of agent $i$ as $\widetilde{\mathcal{N}}_i\triangleq\{j\in\mathcal{V}|(i,j)\in\mathcal{E} \text{ and } (j, i)\in\mathcal{E} \}$.}
We will investigate time-varying network topology and relax the connectivity assumption on the graph in section~\ref{sec:extension}.

We consider that each agent $i\in\mathcal{V}$ has a bounded initial information state 
$x_i(0)\in[x_\mathrm{min},x_\mathrm{max}]\subset\mathbb{R}$. 
\textcolor{black}{In most cases, the bounds of the initial information states can be known based on prior knowledge. For example, the velocity of the agents should be within a finite range. }
Without loss of generality, we consider scalar cases here, but the main results can be easily extended to vector cases \textcolor{black}{(see Remark~\ref{rmk:vector})}. 
The aim of the system is to achieve average consensus by letting agents iteratively exchange information with each other and update their information state according to a pre-designed consensus protocol. 
To characterize the asymptotic behavior of the agents, we introduce the following definitions.
\textcolor{black}{
\begin{definition}[Weak consensus~\cite{huang2009coordination}]
The multi-agent system is said to achieve weak consensus if 
\begin{equation}
    \lim_{k\rightarrow\infty} \mathbb{E}\left[\left(x_i(k)-\frac{\mathbf{1}_N^T}{N}\boldsymbol{x}(k)\right)^2\right]=0, \forall i\in\mathcal{V}. 
\end{equation}
\end{definition}
}
\begin{definition}[Mean square average consensus~\cite{li2010consensus}]
The multi-agent system is said to achieve mean square average consensus if there exists a random variable $x^*$ such that
\begin{equation}
    \lim_{k\rightarrow\infty} \mathbb{E}[(x_i(k)-x^*)^2]=0, \forall i\in\mathcal{V},
\end{equation}
where $x^*$ satisfies $\mathbb{E}[x^*]=\frac{1}{N}\sum_{i\in\mathcal{V}}x_i(0)$ and $Var(x^*)<\infty$.
\end{definition}
\begin{definition}[Almost sure consensus~\cite{li2010consensus}]
The multi-agent system is said to achieve almost sure consensus if there exists a random variable $x^*$ such that 
\begin{equation}
    \lim_{k\rightarrow\infty} x_i(k) = x^* a.s., \forall i\in\mathcal{V}.
\end{equation}
\end{definition}
\textcolor{black}{
\begin{remark}
    Mean square average consensus ensures that the information state $x_i(k)$ is the asymptotically unbiased estimate of $\frac{1}{N}\sum_{i\in\mathcal{V}}x_i(0)$. 
    Almost sure consensus ensures that the information state of each agent can converge to the same random variable with probability $1$. 
    If both of them are achieved, $x_i(k), \forall i\in\mathcal{V}$ will converge to the same random variable $x^*$ with $\mathbb{E}[x^*]=\frac{1}{N}\sum_{i\in\mathcal{V}}x_i(0)$ and $Var(x^*)<\infty$ with probability $1$.
\end{remark}
}

\subsection{Communication Mechanism}
In~\cite{molinari2018exploiting} and~\cite{molinari2021max}, each agent is assumed to have a full-duplex transceiver such that it can transmit and receive signals simultaneously. However, this assumption will introduce significantly higher complexity and cost, and hence should be relaxed for the implementation in real systems. Therefore, in this paper we assume each agent only has a half-duplex transceiver. Moreover, each time step is divided into two time slots, i.e., $s_1$ and $s_2$, and each agent will randomly select one time slot to broadcast its messages and use the other one to receive aggregated signals. Define $\gamma_i(k)$ as follows
\begin{equation}\label{eq:gamma}
    \gamma_i(k)=\left\{
    \begin{aligned}
        &1,\quad\text{if agent $i$ selects $s_1$ for transmission,}\\ 
        &0,\quad\text{\color{black}if agent $i$ selects $s_2$ for transmission.}
    \end{aligned}
    \right.
\end{equation}
At each time step, agent $i$ will select time slot $s_1$ to transmit with probability $p_i\in(0,1)$, i.e., $\mathbb{P}(\gamma_i(k)=1)=p_i$, and hence $\mathbb{P}(\gamma_i(k)=0)=1-p_i$. Moreover, the random variables $\gamma_i(k),\forall i\in\mathcal{V}, k\geq 0$ are independent with each other (i.i.d).
\textcolor{black}{As a result, the actual communication topology is time-varying and not necessarily connected at every time step even the physical network topology is time-invariant and connected (see Fig.~\ref{fig:topology}).}

\begin{figure}[h]
    \centering
    \includegraphics[width=0.9\linewidth]{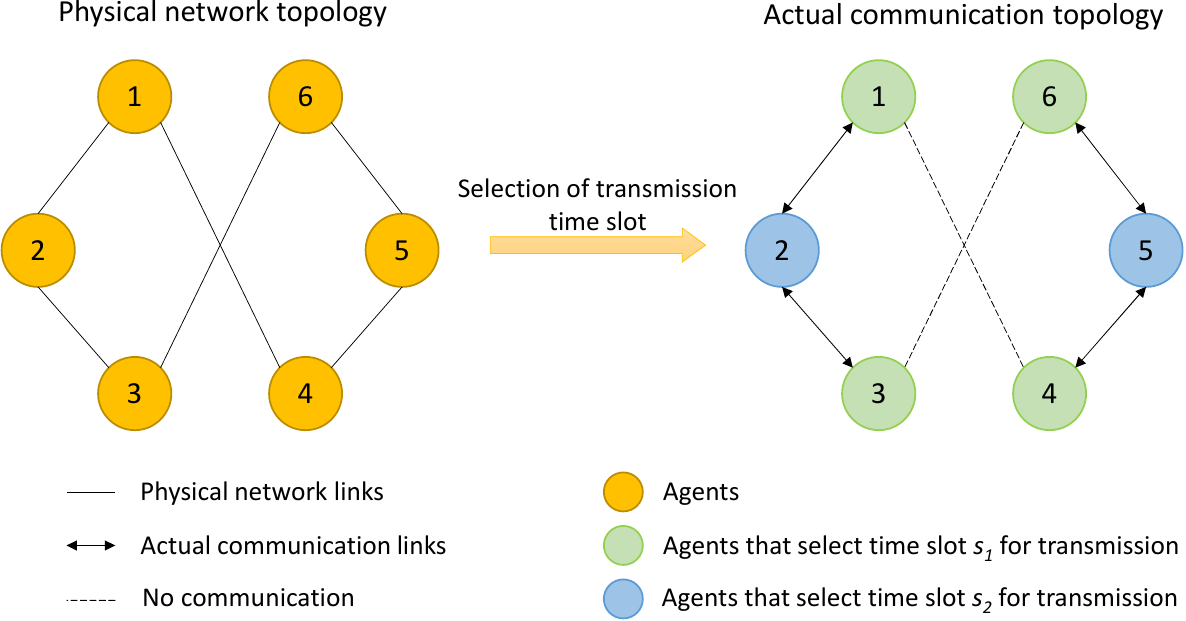}
    \caption{\textcolor{black}{Example of a physical network topology and its actual communication topology.}}
    \label{fig:topology}
\end{figure}


\begin{figure}[h]
	\centering
	\includegraphics[width=0.9\linewidth]{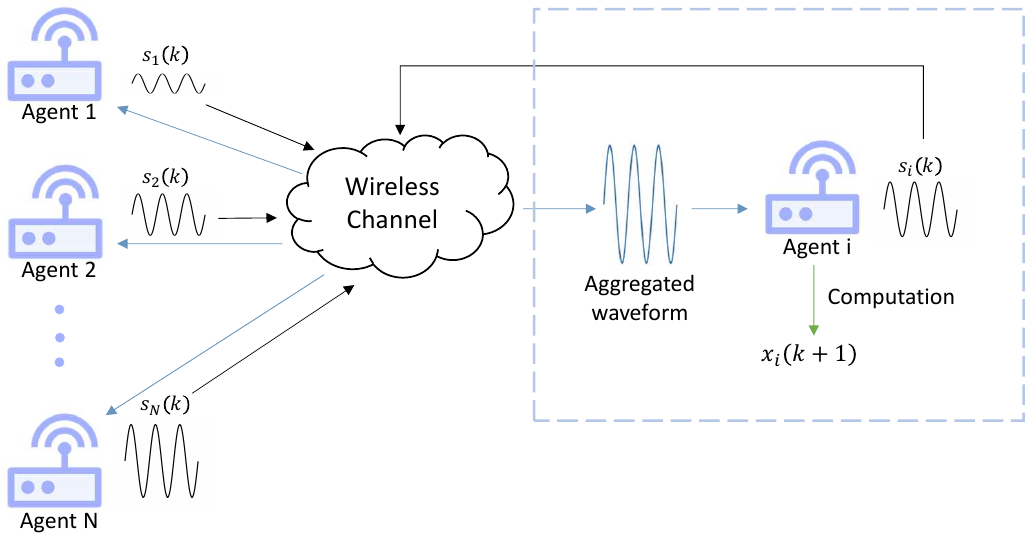}
	\caption{\textcolor{black}{Over-the-air aggregation.}}
	\label{fig:aggregation}
\end{figure}

\subsection{Noisy and Non-coherent Over-the-air Aggregation}

\textcolor{black}{Over-the-air aggregation is an information aggregation approach, which exploits the superposition property of wireless channels. It allows all agents to transmit analog signals carrying the information of their states in the same communication resource block. The information states of each agent are simulated by the variation (for example, in amplitude) of a sine wave. Then, all the signals will form an aggregated waveform, which involves the information carried by these signals. The} \textcolor{black}{aggregated waveform is received by the receiver of a target agent and used for computation (see Fig.~\ref{fig:aggregation}).}

\textcolor{black}{
In the literature exploring different consensus protocols, the analyses are all based on the setting of traditional multiple access schemes, e.g., TDMA and FDMA. The core idea of these traditional multiple-access schemes, which differs from over-the-air aggregation, is to avoid transmission interference between multiple agents by allocating distinct communication resource blocks, e.g., time slot and frequency bandwidth, to each agent.
That is to say, the information transmitted by all agents will be separately decoded by their neighbors. When the number of agents is considerable, the limited communication } \textcolor{black}{resources will become a bottleneck. 
Compared with traditional multiple access schemes, over-the-air aggregation can save more communication resources (see Fig.~\ref{fig:cmp}), and hence can enhance network capacity.}
\begin{figure}
    \centering
    \includegraphics[width=0.9\linewidth]{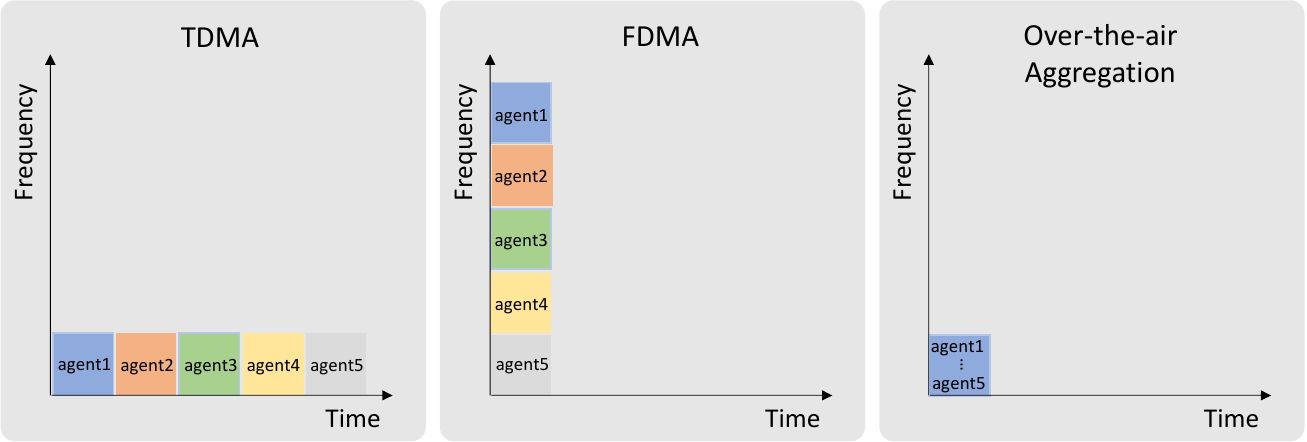}
    \caption{\textcolor{black}{Resource allocations of traditional multiple access techniques and Over-the-air aggregation.}}
    \label{fig:cmp}
\end{figure}

\textcolor{black}{Among almost all the previous works investigating over-the-air aggregation, transmitters are assumed to be perfectly (phase) synchronized, i.e., transmitters perform coherent transmission. 
To realize phase synchronization and coherent transmission, continuous channel estimation is required. Moreover, if one transmitter has multiple target receivers, it needs to do phase synchronization for each target receiver since the channels between the transmitter and its distinct targets can be different, and hence can induce different phase shifts. 
This is feasible for federated systems because all the transmitters only have one target receiver, i.e., the central server. However, this is difficult for fully decentralized systems since each agent (transmitter) usually has more than one neighbor (receiver). In such scenarios, over-the-air aggregation will lose its superiority in communication efficiency. As a result, we chose to design the system under a \textbf{non-coherent transmission} setting, where frequent phase synchronization is avoided, but waveform distortion is introduced (see Fig.~\ref{fig:coherent} and Fig.~\ref{fig:noncoherent}). }
\begin{figure}[h]
\centering
\begin{minipage}[t]{0.4\linewidth}
    \centering
    \includegraphics[width=0.9\linewidth]{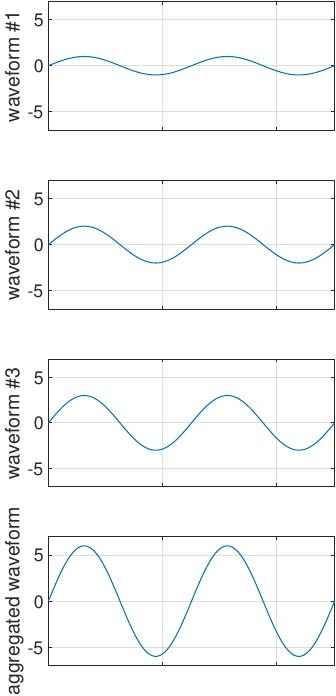}
    \caption{Waveforms of coherent transmission (waveform \#1 - \#3 have the same phase.)}
    \label{fig:coherent}
\end{minipage}
\qquad
\begin{minipage}[t]{0.4\linewidth}
    \centering
    \includegraphics[width=0.9\linewidth]{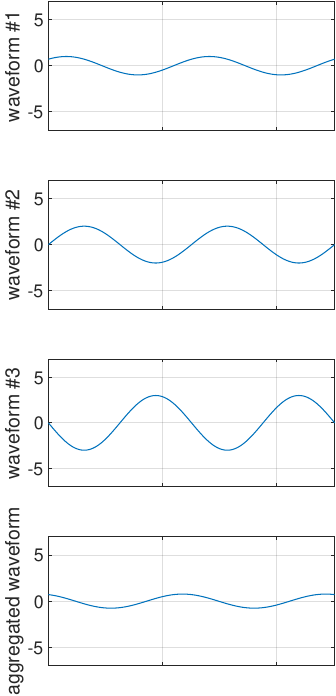}
    \caption{Waveforms of non-coherent transmission (waveform \#1 - \#3 have different phases.)}
    \label{fig:noncoherent}
\end{minipage}
\end{figure}

In the following, we consider that each agent is equipped with one transmit antenna and one receive antenna. Let $h_{ij}(k)\sim\mathcal{CN}(0,\Lambda_{ij})$ denote the fading channel between the receive antenna of agent $i$ and the transmit antenna of agent $j$ at time $k$. We assume that $h_{ij}(k), \forall i, j\in\mathcal{V}, k\geq 0$ are independent and $\Lambda_{ij}=\Lambda_{ji}$.

The signal transmitted by agent $i$ at time $k$ is 
\begin{equation}\label{eq:signal}
    s_i(k) = \sqrt{\rho x_i(k)}u,
\end{equation}
where $u\in\{c\in\mathbb{C}\ |\ |c|=1\}$ characterizes the carrier wave, and $\rho$ is the transmission coefficient for transmission power control.
\textcolor{black}{
\begin{remark}
    The transmitted signal in~\eqref{eq:signal} is an analog signal, which is different from a digital signal. The information state $x_i(k)$ is directly represented by the amplitude of the analog signal via signal power control. In contrast, digital transmission requires sampling, quantization, and coding~\cite{haykin2008communication}. 
\end{remark}
}
\begin{remark}
    The amplitude and phase of $u$ characterize the amplitude and phase of the carrier wave.
    Here, we let $|u|=1$ to facilitate power control. 
    \textcolor{black}{Moreover, we assume $x_\mathrm{min} = 0$ in the subsequent analysis. Note that when $x_\mathrm{min}\ne 0$, one can easily subtract $x_\mathrm{min}$ from $x_i(0)$. Then, for the cases with $x_\mathrm{min}\ne 0$, the analysis remains unchanged except that $x_\mathrm{min}$ needs to be added back. }
    \textcolor{black}{For the power control of the carrier wave, $x_i(k), \forall i\in\mathcal{V}, k\geq 0$ should be non-negative. However, it is possible that some of the information states is negative at some time. The probability that an information state turns to be negative can be reduced by adopting smaller stepsize or adding a larger offset to the initial information states so that $x_\textrm{min}$ is much larger than 0.}
\end{remark}
\textcolor{black}{
\begin{remark}\label{rmk:vector}
    If $x_i(k)$ is a $d$-dimensional vector, we can consider that all the agents need to achieve consensus on $d$ different values at each time step. As a result, all the subsequent analysis remains applicable as long as the different entries of $x_i(k)$ can be transmitted without interfering with each other, e.g., the entries of $x_i(k)$ are sequentially transmitted.
\end{remark}
}

\textcolor{black}{Define 
the actual neighbor set of agent $i$ at time $k$ as $\mathcal{N}_i(k)\triangleq\{j\in\tilde{\mathcal{N}}_i | \gamma_j(k) = 1-\gamma_i(k)\}$.}
Then, the signal received by agent $i$ can be expressed as
\begin{equation}
\begin{split}
    y_i(k) &= \sum_{j\in\mathcal{N}_i(k)} h_{ij}(k) s_{j}(k) + n_i(k)\\
    &=\sum_{j\in\widetilde{\mathcal{N}}_i}\Gamma_{ij}(k) h_{ij}(k) s_{j}(k) + n_i(k),
\end{split}
\end{equation}
where $\Gamma_{ij}(k)\triangleq \gamma_i(k)(1-\gamma_j(k))+\gamma_j(k)(1-\gamma_i(k))$ and $n_i(k)\sim\mathcal{CN}(0,\sigma_i^2)$ represents the additive white Gaussian noise (AWGN). Note that $\Gamma_{ii}(k)\equiv 0$. 
\textcolor{black}{
\begin{remark}
    We consider that the channel noise $n_i(k)$ is a complex Gaussian random variable since it is a common assumption in the wireless community. 
    In our proof, besides the independence of $n_i(k), \forall i\in\mathcal{V}, k\geq 0$ and $h_{ij}(k), \forall i,j\in\mathcal{V}, k\geq 0$, we only utilize the property that the real and imaginary parts of $n_i(k)$ are independent random variables with zero mean and bounded variance. Therefore, the main results hold as long as the noise has the mentioned properties.
\end{remark}
}

After receiving, agent $i$ can obtain the amplitude of the received signal
\begin{equation}\label{eq:received_signal}
\begin{split}
    &\left|y_i(k)\right|^2 
    \\&= \left|\sum_{j\in\widetilde{\mathcal{N}}_i}\Gamma_{ij}(k) h_{ij}(k) s_{j}(k) + n_i(k)\right|^2
    \\&=\left(\sum_{j\in\widetilde{\mathcal{N}}_i}\sqrt{\rho x_j(k)}\Gamma_{ij}(k)\mathrm{Re}\left[h_{ij}(k)u\right]+\mathrm{Re}[n_i(k)]\right)^2\\&\quad +\left(\sum_{j\in\widetilde{\mathcal{N}}_i}\sqrt{\rho x_j(k)}\Gamma_{ij}(k)\mathrm{Im}\left[h_{ij}(k)u\right]+\mathrm{Im}[n_i(k)]\right)^2
    \\&=\sum_{j\in\widetilde{\mathcal{N}}_i}\rho x_j(k)\Gamma_{ij}(k)|h_{ij}(k)|^2 + |n_i(k)|^2 \\&\quad + \sum_{j\in\widetilde{\mathcal{N}}_i}\sum_{l\in\widetilde{\mathcal{N}}_i\backslash j}Y_{i,jl}^{(1)}(k) + 2\sum_{j\in\widetilde{\mathcal{N}}_i}Y_{ij}^{(2)}(k),
\end{split}
\end{equation}
where
\begin{align*}
    Y_{i,jl}^{(1)}(k) \triangleq& \rho \sqrt{x_j(k) x_l(k)} \Gamma_{ij}(k)\Gamma_{il}(k) \left(\mathrm{Re}\left[h_{ij}(k)u\right]\right. \\&\left.\times\mathrm{Re}\left[h_{il}(k)u\right] +\mathrm{Im}\left[h_{ij}(k)u\right] \mathrm{Im}\left[h_{il}(k)u\right]\right),\\
    Y_{ij}^{(2)}(k) \triangleq& \sqrt{\rho x_j(k)}\Gamma_{ij}(k) \left(\mathrm{Re}\left[h_{ij}(k)u\right] \mathrm{Re}[n_i(k)] \right.\\&\left. + \mathrm{Im}\left[h_{ij}(k) u\right] \mathrm{Im}[n_i(k)]\right).
\end{align*}
Note that the third step is based on two facts:
1) $\Gamma_{ij}^2(k) = \Gamma_{ij}(k)$, and 2) for two complex number $c_1$ and $c_2$, $|c_1 c_2| = |c_1| |c_2|$.
\begin{remark}
    The second term and the fourth term of~\eqref{eq:received_signal} are introduced by the AWGN, and the third term is introduced by the non-coherent transmission. Note that $Y_{i,jl}^{(1)}(k)$ and $Y_{ij}^{(2)}(k)$ are dependent on the information states, which is more difficult to handle than state-independent noises~\cite{kar2008distributed}. 
\end{remark}

Define 
\begin{align*}
    \boldsymbol{x}(k) &\triangleq [x_1(k), x_2(k), \ldots, x_N(k)]^T,\\
    \boldsymbol{\gamma}(k) &\triangleq [\gamma_1(k), \gamma_2(k), \ldots, \gamma_N(k)]^T,\\
    \boldsymbol{n}(k) &\triangleq [n_1(k), n_2(k), \ldots, n_N(k)]^T,\\
    H(k) &\triangleq [h_{ij}(k)],
\end{align*}
and define an increasing sequence of $\sigma$-fields $\mathcal{F}_k$ as
\begin{equation}
\begin{split}
    \mathcal{F}_k\triangleq &\left\{\boldsymbol{x}(0), \ldots, \boldsymbol{x}(k),  \boldsymbol{\gamma}(0), \ldots, \boldsymbol{\gamma}(k-1), H(0), \ldots, \right. \\& \left. H(k-1), \boldsymbol{n}(0), \ldots, \boldsymbol{n}(k-1)\right\},
\end{split}
\end{equation}
then we have
\begin{align}
    \mathbb{E}[Y_{i,jl}^{(1)}(k)|\mathcal{F}_k] = 0,\label{eq:e_Y1}\\
    \mathbb{E}[Y_{ij}^{(2)}(k)|\mathcal{F}_k] = 0,\label{eq:e_Y2}
\end{align}
and hence 
\begin{equation}
\begin{split}
    \mathbb{E}\left[\left|y_i(k)\right|^2| \mathcal{F}_k\right] &= \mathbb{E}\left[\left.\sum_{j\in\mathcal{V}} a_{ij}(k)x_j(k) + |n_i(k)|^2 \right|\mathcal{F}_k \right]\\&=\sum_{j\in\mathcal{V}} \bar{a}_{ij} x_j(k) + \sigma_i^2,
\end{split}
\end{equation}
\textcolor{black}{where 
    $$a_{ij}(k) =\left\{
    \begin{aligned}
        &\rho \Gamma_{ij}(k)|h_{ij}(k)|^2, \forall j \in\widetilde{\mathcal{N}}_i,\\ 
        & 0, \forall j \notin\widetilde{\mathcal{N}}_i,
    \end{aligned}
    \right.$$ 
and 
    $$\bar{a}_{ij} =\left\{
    \begin{aligned}
        &\rho\left(p_i(1-p_j)+p_j(1-p_i)\right)\Lambda_{ij}, \forall j\in\widetilde{\mathcal{N}}_i,\\ 
        & 0, \forall j \notin\widetilde{\mathcal{N}}_i.
    \end{aligned}
    \right.$$
}
Moreover, we have $a_{ii}(k)=0$. \textcolor{black}{Sensor $i$ is assumed to have the knowledge of the statistics of the channel noise $n_i(k)$ and the fading $h_{ji}(k), \forall j\in\textcolor{black}{\widetilde{\mathcal{N}}_i}$, i.e., $\sigma_i^2$ and $\Lambda_{ji}, \forall j\in\textcolor{black}{\widetilde{\mathcal{N}}_i}$ are known by sensor $i$. The probabilities $p_i, \forall i\in\textcolor{black}{\widetilde{\mathcal{N}}_i}$ are pre-determined and known by all the sensors. Since it has been assumed that $\Lambda_{ij} = \Lambda_{ji}$, sensor $i$ has full knowledge of $\bar{a}_{ij}, \forall j\in\textcolor{black}{\widetilde{\mathcal{N}}_i}$.}

\textcolor{black}{
    \begin{remark}
        Generally, achieving average consensus over a directed graph requires that each agent at least knows its neighbors’ out-degree due to the requirement of column-stochastic matrices. Under the over-the-air aggregation setting, the out-degree of an agent consists of the channel statistics from it to all its out-neighbors, which is difficult to obtain by its in-neighbors while considering a directed graph. Therefore, it remains a challenging problem to consider directed graphs.
    \end{remark}
}




\subsection{Consensus Protocol}
\textcolor{black}{In this subsection, we propose a consensus protocol under which the system can achieve mean square average consensus and almost sure consensus (see Section~\ref{sec:convergence}). }

The agents update their information state according to the following consensus protocol.
\begin{equation}\label{eq:protocol}
    x_i(k+1) = \left(1 - \alpha(k) \sum_{j\in\mathcal{V}}\bar{a}_{ij} \right) x_i(k) + \alpha(k)\left(|y_i(k)|^2-\sigma_i^2\right),
\end{equation}
where $\alpha(k)>0$ is the time-varying step size. 


\begin{assumption}\label{asm:stepsize}
    The step size $\alpha(k)$ in~\eqref{eq:protocol} satisfies the following conditions:
    \begin{itemize}
        \item[a)] $\sum_{k=0}^\infty \alpha(k)=\infty,\quad \sum_{k=0}^\infty \alpha^2(k)<\infty.$
        \item[b)] $1-\alpha(k) \sum_{j\in\mathcal{V}}\bar{a}_{ij}>0, \forall i\in\mathcal{V}, k\geq 0.$
    \end{itemize}
\end{assumption}
\begin{remark}
    Condition a) is commonly used in distributed optimization~\cite{li2018distributed,xie2018distributed}
    and distributed consensus~\cite{huang2009coordination,huang2009stochastic,huang2010stochastic}. Moreover, condition b) is easy to be satisfied. 
    For example, we can set $\alpha(k) = \frac{1}{\max_{i\in\mathcal{V}}\sum_{j\in\mathcal{V}}\bar{a}_{ij} k^p}, \forall k\geq 0 (p\in(0.5,1])$. 
    Note that the choice of the step size $\alpha(k)$ will influence the convergence rate.
\end{remark}

Define
\begin{align*}
    v_i(k) &\triangleq \left(|n_i(k)|^2-\sigma_i^2\right) + \sum_{j\in\mathcal{V}}\sum_{l\in\mathcal{V}\backslash j}Y_{i,jl}^{(1)}(k) + 2\sum_{j\in\mathcal{V}}Y_{ij}^{(2)}(k),\\
    \boldsymbol{v}(k) &\triangleq \left[ v_1(k), v_2(k), \ldots, v_N(k) \right]^T,\\
    \mathcal{A}(k) &\triangleq [a_{ij}(k)],\\
    \mathcal{D}(k) &\triangleq  \mathrm{diag}\left(\sum_{j\in\mathcal{V}}\bar{a}_{1j}, \sum_{j\in\mathcal{V}}\bar{a}_{2j}, \ldots, \sum_{j\in\mathcal{V}}\bar{a}_{Nj}\right),\\
    \mathcal{L}(k) &\triangleq \mathcal{D}(k)-\mathcal{A}(k),
\end{align*}
then the protocol~\eqref{eq:protocol} can be written as a compact matrix form as follows
\begin{equation}\label{eq:protocol_compact}
   \boldsymbol{x}(k+1) = \left(I_N - \alpha(k)\mathcal{L}(k) \right) \boldsymbol{x}(k) + \alpha(k)\boldsymbol{v}(k).
\end{equation}
Define $\bar{\mathcal{L}}(k) \triangleq \mathbb{E}[\mathcal{L}(k)]$ and $\Delta\mathcal{L}(k) \triangleq \bar{\mathcal{L}}(k) - \mathcal{L}(k)$. Since $\gamma_i(k), \forall i\in\mathcal{V}$ and $h_{ij}(k), \forall i,j\in\mathcal{V}$ are i.i.d. random processes, $\bar{\mathcal{L}}(k), \forall k\geq 0$ are the same constant, which we denote as $\bar{\mathcal{L}}$ in the subsequent analysis. 
Then \eqref{eq:protocol_compact} can be rewritten as
\begin{equation}
   \boldsymbol{x}(k+1) = \left(I_N - \alpha(k)\bar{\mathcal{L}} \right) \boldsymbol{x}(k) + \alpha(k)\boldsymbol{w}(k).
\end{equation}
    where $\boldsymbol{w}(k)\triangleq \Delta\mathcal{L}(k)\boldsymbol{x}(k) + \boldsymbol{v}(k)$.

\section{Convergence Analysis}\label{sec:convergence}

In this section, we show that under the proposed protocol~\eqref{eq:protocol}, the system can achieve both mean square average consensus and almost sure consensus.
First, we have the following lemma, which is the cornerstone for the subsequent convergence analysis. 

\begin{lemma}\label{lm:property}
    The system has the following properties:
    \begin{itemize}
        \item[a)] $\mathbf{1}_N^T \bar{\mathcal{L}} = 0$ and $\bar{\mathcal{L}}\mathbf{1}_N = 0$.
        \item[b)] $\bar{\mathcal{L}}$ is a symmetric matrix with $N$ real eigenvalues
        \begin{equation}\label{eq:eigenvalue}
            0=\lambda_1(\bar{\mathcal{L}}) < \lambda_2(\bar{\mathcal{L}}) \leq \cdots \leq \lambda_N(\bar{\mathcal{L}}),
        \end{equation}
        and 
        \begin{equation}\label{eq:lambda2}
            \lambda_2(\bar{\mathcal{L}})=\min_{ x \ne 0, \mathbf{1}_N^T x = 0}\frac{x^T\bar{\mathcal{L}}x}{\Vert x\Vert^2}.
        \end{equation} 
        \item[c)] $\mathbb{E}[\Delta\mathcal{L}(k)] = 0$. 
        \item[d)] $\mathbb{E}[\boldsymbol{v}(k)|\mathcal{F}_k]=0$
        \item[e)] $\mathbb{E}[{x}_i(k)]<\infty, \forall i\in\mathcal{V}, k\geq 0$, 
        $\mathbb{E}\left[ \Vert \boldsymbol{x}(k)\Vert^2 \right]<\infty, \forall k\geq 0$, and $\mathbb{E}[x_i(k)x_j(l)]<\infty, \forall i,j\in\mathcal{V}, k,l\geq 0$.
    \end{itemize}
\end{lemma}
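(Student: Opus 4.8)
The plan is to verify the five properties in roughly the order stated, exploiting the structure of $\bar{\mathcal{L}} = \mathcal{D} - \mathbb{E}[\mathcal{A}(k)]$ together with the independence assumptions on $\gamma_i(k)$ and $h_{ij}(k)$.

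For (a), I would first compute $\bar a_{ij} = \mathbb{E}[a_{ij}(k)]$ explicitly. Using $\mathbb{E}[\Gamma_{ij}(k)] = p_i(1-p_j)+p_j(1-p_i)$ (by independence of $\gamma_i,\gamma_j$) and $\mathbb{E}[|h_{ij}(k)|^2]=\Lambda_{ij}$ (independent of the $\gamma$'s), one gets the stated formula for $\bar a_{ij}$, which is manifestly symmetric in $i,j$ because $\Lambda_{ij}=\Lambda_{ji}$. Since $\mathcal{D}$ has $i$-th diagonal entry $\sum_j \bar a_{ij}$, the matrix $\bar{\mathcal{L}}$ has row sums zero, i.e. $\bar{\mathcal{L}}\mathbf{1}_N = 0$; symmetry of $\bar a_{ij}$ (hence of the off-diagonal part, and trivially of $\mathcal{D}$) then gives $\mathbf{1}_N^T\bar{\mathcal{L}} = (\bar{\mathcal{L}}^T\mathbf{1}_N)^T = (\bar{\mathcal{L}}\mathbf{1}_N)^T = 0$. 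For (b), symmetry already established gives $N$ real eigenvalues and an orthonormal eigenbasis; $\bar{\mathcal{L}}$ is the (weighted) Laplacian of the graph $\mathcal{G}$ with edge weights $\bar a_{ij}>0$ on $\mathcal{E}$, so it is positive semidefinite with $\mathbf{1}_N$ in its kernel, giving $\lambda_1=0$. Connectedness of $\mathcal{G}$ (each $(i,j)\in\mathcal{E}$ carries a strictly positive weight, since $p_i\in(0,1)$ and $\Lambda_{ij}>0$) forces the kernel to be exactly $\mathrm{span}\{\mathbf{1}_N\}$, hence $\lambda_2>0$; the variational characterization \eqref{eq:lambda2} is the standard Courant–Fischer/Rayleigh-quotient restriction to the orthogonal complement of $\mathbf{1}_N$. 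Property (c) is immediate: $\Delta\mathcal{L}(k)=\bar{\mathcal{L}}-\mathcal{L}(k)$ and $\mathbb{E}[\mathcal{L}(k)]=\bar{\mathcal{L}}$ by definition.

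For (d), I would condition on $\mathcal{F}_k$, which fixes $\boldsymbol{x}(k)$ but not $\boldsymbol{\gamma}(k),H(k),\boldsymbol{n}(k)$. The term $|n_i(k)|^2-\sigma_i^2$ has conditional mean zero since $\mathbb{E}[|n_i(k)|^2]=\sigma_i^2$ and $n_i(k)$ is independent of $\mathcal{F}_k$. The cross terms $Y^{(1)}_{i,jl}(k)$ and $Y^{(2)}_{ij}(k)$ have conditional mean zero by \eqref{eq:e_Y1}–\eqref{eq:e_Y2}, which are already granted in the excerpt (they follow from the fact that, conditionally on $\mathcal{F}_k$ and on $\boldsymbol{\gamma}(k)$, the Gaussian quantities $\mathrm{Re}[h_{ij}(k)u], \mathrm{Im}[h_{ij}(k)u], \mathrm{Re}[n_i(k)], \mathrm{Im}[n_i(k)]$ are independent, zero-mean, and the products pair distinct independent factors). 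Summing gives $\mathbb{E}[v_i(k)\mid\mathcal{F}_k]=0$ for every $i$, i.e. $\mathbb{E}[\boldsymbol{v}(k)\mid\mathcal{F}_k]=0$.

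Property (e) is the one I expect to be the main obstacle, since it is a moment bound that must be propagated through the recursion \eqref{eq:protocol_compact}. I would argue by induction on $k$. The base case $k=0$ is trivial since $\boldsymbol{x}(0)$ is deterministic and bounded. For the inductive step, note that $\mathcal{L}(k)$ has entries that are products of bounded Bernoulli variables $\Gamma_{ij}(k)$ and $|h_{ij}(k)|^2$, so all moments of $\|\mathcal{L}(k)\|$ are finite and bounded by constants independent of $k$; likewise $\mathbb{E}[\|\boldsymbol{v}(k)\|^2\mid\mathcal{F}_k]$ is bounded by an affine function of $\|\boldsymbol{x}(k)\|^2$ (the $Y^{(1)}$ terms contribute $O(x_j x_l)$, the $Y^{(2)}$ terms $O(x_j\sigma_i^2)$, and the noise term a constant), because the fourth moments of Gaussians are finite. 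Taking squared norms in \eqref{eq:protocol_compact}, using $\|I_N-\alpha(k)\mathcal{L}(k)\|\le 1+\alpha(k)\|\mathcal{L}(k)\|$ and $\mathbb{E}[\boldsymbol{v}(k)\mid\mathcal{F}_k]=0$ to kill the cross term after conditioning, gives a bound of the form $\mathbb{E}[\|\boldsymbol{x}(k+1)\|^2]\le (1+C\alpha(k)+C\alpha^2(k))\,\mathbb{E}[\|\boldsymbol{x}(k)\|^2]+C\alpha^2(k)$ for a constant $C$; since $\sum_k\alpha(k)<\infty$ is \emph{not} assumed, I would instead only need finiteness at each fixed $k$, which follows immediately from the inductive hypothesis and the uniform moment bounds on $\mathcal{L}(k),\boldsymbol{v}(k)$ without any summability. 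Finiteness of $\mathbb{E}[x_i(k)]$ follows from $\mathbb{E}[x_i(k)]\le\mathbb{E}[\|\boldsymbol{x}(k)\|]\le\sqrt{\mathbb{E}[\|\boldsymbol{x}(k)\|^2]}$, and finiteness of the cross-moments $\mathbb{E}[x_i(k)x_j(l)]$ follows from Cauchy–Schwarz together with the second-moment bound. The delicate point to get right is that the noise injected at each step is state-dependent (through $Y^{(1)}$ and $Y^{(2)}$), so the variance increment is controlled only by an affine — not constant — function of the current second moment; this is exactly why the argument must be inductive in $k$ rather than a one-shot estimate.
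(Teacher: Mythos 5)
Your proposal is correct and follows essentially the same route as the paper: symmetry of $\bar{a}_{ij}$ from $\Lambda_{ij}=\Lambda_{ji}$ for (a)--(b), the definitional identities for (c)--(d), and an induction on $k$ with affine-in-$\mathbb{E}[\Vert\boldsymbol{x}(k)\Vert^2]$ bounds on the state-dependent noise variance plus Cauchy--Schwarz for (e). The only difference is that the paper carries out the second-moment recursion with explicit uniform constants (which it reuses later to define $\overline{M}_1$), whereas you correctly observe that per-step finiteness needs no summability of $\alpha(k)$.
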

\begin{proof}

    See Appendix~\ref{app:lm_property}.
\end{proof}




Define 
\begin{equation}
    V(k) \triangleq \Vert(I_N-J)\boldsymbol{x}(k)\Vert^2,
\end{equation} 
where $J=\frac{1}{N}\mathbf{1}_N\mathbf{1}_N^T$. Then we have the following theorem.
\begin{theorem}\label{thm:weak}
    Suppose Assumption~\ref{asm:stepsize} holds, then 
    \begin{equation}\label{eq:lyapunov1}
        \lim_{k\rightarrow\infty}\mathbb{E}[V(k)] = 0.
    \end{equation}
    That is, the multi-agent system can achieve weak consensus.
\end{theorem}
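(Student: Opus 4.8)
The plan is to study the disagreement vector $\boldsymbol{\delta}(k)\triangleq(I_N-J)\boldsymbol{x}(k)$, so that $V(k)=\Vert\boldsymbol{\delta}(k)\Vert^2$, and to derive a scalar recursion for $\mathbb{E}[V(k)]$. Since Lemma~\ref{lm:property}(a) gives $\bar{\mathcal{L}}\mathbf{1}_N=0$ and $\mathbf{1}_N^T\bar{\mathcal{L}}=0$, we have $\bar{\mathcal{L}}J=J\bar{\mathcal{L}}=0$; multiplying the $\bar{\mathcal{L}}$-form of~\eqref{eq:protocol_compact} on the left by $I_N-J$ and using $\bar{\mathcal{L}}\boldsymbol{x}(k)=\bar{\mathcal{L}}\boldsymbol{\delta}(k)$ yields the closed recursion $\boldsymbol{\delta}(k+1)=(I_N-\alpha(k)\bar{\mathcal{L}})\boldsymbol{\delta}(k)+\alpha(k)(I_N-J)\boldsymbol{w}(k)$. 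Squaring and conditioning on $\mathcal{F}_k$, the cross term vanishes because $\mathbb{E}[\boldsymbol{w}(k)\mid\mathcal{F}_k]=0$ (by Lemma~\ref{lm:property}(d) for the $\boldsymbol{v}(k)$ part, and because $\mathcal{L}(k)$ is independent of $\mathcal{F}_k$ together with Lemma~\ref{lm:property}(c) for the $\Delta\mathcal{L}(k)\boldsymbol{x}(k)$ part), giving
\begin{equation*}
  \mathbb{E}[V(k+1)\mid\mathcal{F}_k]=\Vert(I_N-\alpha(k)\bar{\mathcal{L}})\boldsymbol{\delta}(k)\Vert^2+\alpha^2(k)\,\mathbb{E}\big[\Vert(I_N-J)\boldsymbol{w}(k)\Vert^2\mid\mathcal{F}_k\big].
\end{equation*}

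The two terms are then handled separately. For the first, $\boldsymbol{\delta}(k)\perp\mathbf{1}_N$ and $\bar{\mathcal{L}}$ is symmetric with spectrum~\eqref{eq:eigenvalue}, so $I_N-\alpha(k)\bar{\mathcal{L}}$ acts on $\mathbf{1}_N^\perp$ with eigenvalues $1-\alpha(k)\lambda_i(\bar{\mathcal{L}})$, $i\ge2$; since $\sum_k\alpha^2(k)<\infty$ forces $\alpha(k)\to0$, for all $k$ past some $k_0$ we have $0\le\alpha(k)\lambda_N(\bar{\mathcal{L}})\le1$ and hence $\Vert(I_N-\alpha(k)\bar{\mathcal{L}})\boldsymbol{\delta}(k)\Vert^2\le(1-\alpha(k)\lambda_2(\bar{\mathcal{L}}))^2V(k)$. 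For the second term the crucial estimate---the technical core of the proof---is a linear-growth bound $\mathbb{E}[\Vert\boldsymbol{w}(k)\Vert^2\mid\mathcal{F}_k]\le c_1+c_2\Vert\boldsymbol{x}(k)\Vert^2$. This is obtained from $\boldsymbol{w}(k)=\Delta\mathcal{L}(k)\boldsymbol{x}(k)+\boldsymbol{v}(k)$: the entries $a_{ij}(k)-\bar{a}_{ij}$ of $\Delta\mathcal{L}(k)$ have bounded conditional variance since $|h_{ij}(k)|^2$ has finite moments, so $\mathbb{E}[\Vert\Delta\mathcal{L}(k)\boldsymbol{x}(k)\Vert^2\mid\mathcal{F}_k]\le c\Vert\boldsymbol{x}(k)\Vert^2$; and each summand of $v_i(k)$ is controlled via $|Y_{i,jl}^{(1)}(k)|^2\le\rho^2|x_j(k)x_l(k)|\,|h_{ij}(k)|^2|h_{il}(k)|^2$, $|Y_{ij}^{(2)}(k)|^2\le\rho|x_j(k)|\,|h_{ij}(k)|^2|n_i(k)|^2$ and $\mathrm{Var}(|n_i(k)|^2)=\sigma_i^4$, after which taking $\mathbb{E}[\,\cdot\mid\mathcal{F}_k]$ (the channel and noise variables being independent of $\mathcal{F}_k$) and using $|x_jx_l|\le\frac{1}{2}(x_j^2+x_l^2)$, $|x_j|\le\frac{1}{2}(1+x_j^2)$ yields the claimed bound.

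I would then promote this to a uniform second-moment bound $B\triangleq\sup_k\mathbb{E}[\Vert\boldsymbol{x}(k)\Vert^2]<\infty$. Writing $\boldsymbol{x}(k)=J\boldsymbol{x}(k)+\boldsymbol{\delta}(k)$, noting $\bar{\mathcal{L}}$ annihilates $J\boldsymbol{x}(k)$ while $I_N-\alpha(k)\bar{\mathcal{L}}$ is a contraction on $\mathbf{1}_N^\perp$ for $k\ge k_0$, the compact recursion and the linear-growth bound give $\mathbb{E}[\Vert\boldsymbol{x}(k+1)\Vert^2]\le(1+c_2\alpha^2(k))\mathbb{E}[\Vert\boldsymbol{x}(k)\Vert^2]+c_1\alpha^2(k)$ for $k\ge k_0$; since $\prod_k(1+c_2\alpha^2(k))\le\exp(c_2\sum_k\alpha^2(k))<\infty$ and $\sum_k\alpha^2(k)<\infty$, iterating from $k_0$ (where $\mathbb{E}[\Vert\boldsymbol{x}(k_0)\Vert^2]<\infty$ by Lemma~\ref{lm:property}(e)) yields the finite bound $B$. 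Substituting $\mathbb{E}[V(k)]\le B$ and $\mathbb{E}[\Vert\boldsymbol{w}(k)\Vert^2]\le c_1+c_2B$ back and taking expectations, for $k\ge k_0$,
\begin{equation*}
  \mathbb{E}[V(k+1)]\le\big(1-2\lambda_2(\bar{\mathcal{L}})\alpha(k)\big)\mathbb{E}[V(k)]+C\alpha^2(k),\qquad C\triangleq\lambda_2^2(\bar{\mathcal{L}})B+c_1+c_2B.
\end{equation*}

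Finally, since $\lambda_2(\bar{\mathcal{L}})>0$ (Lemma~\ref{lm:property}(b)), $\sum_k2\lambda_2(\bar{\mathcal{L}})\alpha(k)=\infty$ and $\sum_kC\alpha^2(k)<\infty$, the standard deterministic comparison lemma (if $u(k+1)\le(1-a(k))u(k)+b(k)$ with $a(k)\in[0,1]$, $b(k)\ge0$, $\sum_ka(k)=\infty$, $\sum_kb(k)<\infty$, then $u(k)\to0$) gives $\lim_{k\to\infty}\mathbb{E}[V(k)]=0$, i.e., weak consensus. I expect the main obstacle to be the linear-growth bound of the second paragraph: in contrast with purely additive channel noise, here $\boldsymbol{w}(k)$ couples the topology-fluctuation term $\Delta\mathcal{L}(k)\boldsymbol{x}(k)$ with the state-dependent cross-terms $Y^{(1)},Y^{(2)}$ produced by non-coherent superposition, and one must extract exactly a $\Vert\boldsymbol{x}(k)\Vert^2$ dependence while keeping every channel and noise moment finite.
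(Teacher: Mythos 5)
Your proposal is correct and follows essentially the same route as the paper: the same disagreement recursion $\boldsymbol{\delta}(k+1)=(I_N-\alpha(k)\bar{\mathcal{L}})\boldsymbol{\delta}(k)+\alpha(k)(I_N-J)\boldsymbol{w}(k)$, cancellation of the cross term via Lemma~\ref{lm:property}(c)--(d), a uniform bound on $\mathbb{E}[\Vert\boldsymbol{w}(k)\Vert^2]$ obtained through a uniform second-moment bound on $\boldsymbol{x}(k)$ (which the paper folds into Lemma~\ref{lm:property}(e) and the constant $\overline{M}_1$), and a final comparison-lemma argument. The only differences are cosmetic: you use the contraction factor $(1-\alpha(k)\lambda_2(\bar{\mathcal{L}}))^2$ where the paper uses $1-2\alpha(k)\lambda_2(\bar{\mathcal{L}})+\alpha^2(k)\Vert\bar{\mathcal{L}}\Vert^2$, and you invoke the $\sum_k b(k)<\infty$ variant of the comparison lemma where the paper uses the $q(k)/b(k)\to 0$ variant of Lemma~\ref{lm:sequence}; both choices are valid here.
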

\begin{proof}
    See Appendix~\ref{app:thm_weak}.
\end{proof}

\begin{theorem}\label{thm:ms}
    Suppose Assumption~\ref{asm:stepsize} holds, then under the protocol~\eqref{eq:protocol}, the multi-agent system can achieve mean square average consensus, \textcolor{black}{i.e., $\lim_{k\rightarrow\infty} \mathbb{E}[(x_i(k)-x^*)^2]=0, \forall i\in\mathcal{V}$, where
    \begin{align*}
        \mathbb{E}[x^*]&=\frac{1}{N}\sum_{i=1}^N x_i(0),\\
        Var(x^*)&\leq \frac{\overline{M}_1}{N}\sum_{t=0}^\infty \alpha^2(t),
    \end{align*}
    and $\overline{M}_1$ is defined in~\eqref{eq:M_1}.
    }
\end{theorem}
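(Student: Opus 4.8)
The plan is to separate the evolution of the network average $\bar x(k)\triangleq\frac1N\mathbf{1}_N^T\boldsymbol{x}(k)$ from that of the disagreement vector $(I_N-J)\boldsymbol{x}(k)$, whose mean-square decay is already guaranteed by Theorem~\ref{thm:weak}. Left-multiplying the recursion $\boldsymbol{x}(k+1)=(I_N-\alpha(k)\bar{\mathcal{L}})\boldsymbol{x}(k)+\alpha(k)\boldsymbol{w}(k)$ by $\frac1N\mathbf{1}_N^T$ and using $\mathbf{1}_N^T\bar{\mathcal{L}}=0$ from Lemma~\ref{lm:property}a) gives
\begin{equation*}
\bar x(k+1)=\bar x(k)+\frac{\alpha(k)}{N}\mathbf{1}_N^T\boldsymbol{w}(k),\qquad \boldsymbol{w}(k)=\Delta\mathcal{L}(k)\boldsymbol{x}(k)+\boldsymbol{v}(k).
\end{equation*}
Since $\mathcal{L}(k)$ depends only on $(\boldsymbol{\gamma}(k),H(k))$, which are independent of $\mathcal{F}_k$, combining Lemma~\ref{lm:property}c) with this independence gives $\mathbb{E}[\Delta\mathcal{L}(k)\mid\mathcal{F}_k]=0$, and $\mathbb{E}[\boldsymbol{v}(k)\mid\mathcal{F}_k]=0$ by Lemma~\ref{lm:property}d); hence $\mathbb{E}[\mathbf{1}_N^T\boldsymbol{w}(k)\mid\mathcal{F}_k]=0$. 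Together with the integrability in Lemma~\ref{lm:property}e), this shows that $\{\bar x(k),\mathcal{F}_k\}_{k\ge0}$ is a martingale, so $\mathbb{E}[\bar x(k)]=\bar x(0)=\frac1N\sum_{i=1}^N x_i(0)$ for every $k$.

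Next I would establish an $L^2$ bound on this martingale. Expanding $\boldsymbol{w}(k)$, using $\Delta\mathcal{L}(k)=\mathcal{A}(k)-\bar{\mathcal{A}}$, the conditional unbiasedness~\eqref{eq:e_Y1}--\eqref{eq:e_Y2}, the mutual independence of the $\{h_{ij}(k)\}$ and $\{n_i(k)\}$, and the boundedness of their moments up to fourth order, one obtains a linear-growth estimate
\begin{equation*}
\mathbb{E}\left[(\mathbf{1}_N^T\boldsymbol{w}(k))^2\mid\mathcal{F}_k\right]\le c_1+c_2\Vert\boldsymbol{x}(k)\Vert^2,
\end{equation*}
with $c_1,c_2$ depending only on $N,\rho,\{\Lambda_{ij}\},\{\sigma_i^2\},\{p_i\}$. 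Writing $\Vert\boldsymbol{x}(k)\Vert^2=V(k)+N\bar x(k)^2$ and taking expectations in the martingale identity yields
\begin{equation*}
\mathbb{E}[\bar x(k+1)^2]\le\Big(1+\frac{c_2\alpha^2(k)}{N}\Big)\mathbb{E}[\bar x(k)^2]+\frac{\alpha^2(k)}{N^2}\left(c_1+c_2\,\mathbb{E}[V(k)]\right).
\end{equation*}
By Theorem~\ref{thm:weak} and Assumption~\ref{asm:stepsize}a), $\sup_k\mathbb{E}[V(k)]<\infty$ and $\sum_k\alpha^2(k)<\infty$, so $\prod_k(1+\tfrac{c_2\alpha^2(k)}{N})<\infty$ and $\sum_k\tfrac{\alpha^2(k)}{N^2}(c_1+c_2\mathbb{E}[V(k)])<\infty$; a standard discrete Gr\"onwall argument then gives $\sup_k\mathbb{E}[\bar x(k)^2]<\infty$, hence $\sup_k\mathbb{E}[\Vert\boldsymbol{x}(k)\Vert^2]<\infty$. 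Substituting this uniform bound back into the linear-growth estimate produces a constant $\overline{M}_1$ such that $\frac1N\mathbb{E}[(\mathbf{1}_N^T\boldsymbol{w}(k))^2]\le\overline{M}_1$ for all $k$, which is the quantity referred to in~\eqref{eq:M_1}. I expect this moment estimate to be the main obstacle: non-coherent transmission makes $\boldsymbol{v}(k)$ carry the state-dependent bilinear terms $Y^{(1)}_{i,jl}(k)$ and the noise-times-state terms $Y^{(2)}_{ij}(k)$, and because the weight matrix is only doubly stochastic \emph{in expectation} the residual $\Delta\mathcal{L}(k)\boldsymbol{x}(k)$ does not drop out of $\mathbf{1}_N^T\boldsymbol{w}(k)$, so all the resulting cross-covariances must be tracked and shown to grow no faster than $\Vert\boldsymbol{x}(k)\Vert^2$.

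Once the martingale is $L^2$-bounded, the martingale convergence theorem yields a random variable $x^*$ with $\bar x(k)\to x^*$ almost surely; moreover $\sum_t\mathbb{E}[(\bar x(t+1)-\bar x(t))^2]=\sum_t\frac{\alpha^2(t)}{N^2}\mathbb{E}[(\mathbf{1}_N^T\boldsymbol{w}(t))^2]\le\frac{\overline{M}_1}{N}\sum_t\alpha^2(t)<\infty$, so the convergence also holds in $L^2$. Passing to the limit in $\mathbb{E}[\bar x(k)]=\bar x(0)$ gives $\mathbb{E}[x^*]=\frac1N\sum_{i=1}^N x_i(0)$, and orthogonality of the martingale increments gives
\begin{equation*}
Var(x^*)=\sum_{t=0}^\infty\frac{\alpha^2(t)}{N^2}\,\mathbb{E}\left[(\mathbf{1}_N^T\boldsymbol{w}(t))^2\right]\le\frac{\overline{M}_1}{N}\sum_{t=0}^\infty\alpha^2(t)<\infty.
\end{equation*}
Finally, for each $i\in\mathcal{V}$ the decomposition $x_i(k)-x^*=(x_i(k)-\bar x(k))+(\bar x(k)-x^*)$ together with $\mathbb{E}[(x_i(k)-\bar x(k))^2]\le\mathbb{E}[\Vert(I_N-J)\boldsymbol{x}(k)\Vert^2]=\mathbb{E}[V(k)]$ gives
\begin{equation*}
\mathbb{E}[(x_i(k)-x^*)^2]\le 2\,\mathbb{E}[V(k)]+2\,\mathbb{E}[(\bar x(k)-x^*)^2],
\end{equation*}
and both terms on the right tend to $0$ as $k\to\infty$ --- the first by Theorem~\ref{thm:weak}, the second by the $L^2$ convergence of $\bar x(k)$. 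This is exactly mean square average consensus with the stated mean and variance of $x^*$.
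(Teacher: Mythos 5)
Your proposal is correct and follows essentially the same route as the paper: the network average $\frac{1}{N}\mathbf{1}_N^T\boldsymbol{x}(k)$ is an $L^2$-bounded martingale (since $\mathbf{1}_N^T\bar{\mathcal{L}}=0$ and $\mathbb{E}[\boldsymbol{w}(k)\mid\mathcal{F}_k]=0$), which converges a.s.\ and in mean square with the stated mean and variance bound, and this is combined with Theorem~\ref{thm:weak} for the disagreement part. The only difference is cosmetic: you re-derive the uniform second-moment bound on $\boldsymbol{w}(k)$ via a linear-growth estimate and a discrete Gr\"onwall bootstrap, whereas the paper simply invokes the bound $\mathbb{E}[\Vert\boldsymbol{w}(k)\Vert^2]\le\overline{M}_1$ already established in~\eqref{eq:w_square}--\eqref{eq:M_1} from Lemma~\ref{lm:property}e).
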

\begin{proof}
    See Appendix~\ref{app:thm_ms}.
\end{proof} 

\begin{theorem}\label{thm:almost_sure}
    Suppose Assumption~\ref{asm:stepsize} holds, then under the protocol~\eqref{eq:protocol}, the multi-agent system can achieve almost sure consensus.
\end{theorem}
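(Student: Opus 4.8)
The plan is to decompose the state into its average and its disagreement, $\boldsymbol{x}(k)=\bar{x}(k)\mathbf{1}_N+\boldsymbol{\delta}(k)$ with $\bar{x}(k)=\tfrac1N\mathbf{1}_N^T\boldsymbol{x}(k)$ and $\boldsymbol{\delta}(k)=(I_N-J)\boldsymbol{x}(k)$, and to prove almost sure convergence of each part separately. Since $\mathbf{1}_N^T\bar{\mathcal{L}}=0$ and $\bar{\mathcal{L}}\mathbf{1}_N=0$ (Lemma~\ref{lm:property}a), left-multiplying $\boldsymbol{x}(k+1)=(I_N-\alpha(k)\bar{\mathcal{L}})\boldsymbol{x}(k)+\alpha(k)\boldsymbol{w}(k)$ by $\tfrac1N\mathbf{1}_N^T$ gives $N\bar{x}(k)=\sum_{i}x_i(0)+m(k)$ with $m(k):=\sum_{t=0}^{k-1}\alpha(t)\,\mathbf{1}_N^T\boldsymbol{w}(t)$. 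Because $\Delta\mathcal{L}(t)$ depends only on $\boldsymbol{\gamma}(t),H(t)$ and is therefore independent of $\mathcal{F}_t$ with $\mathbb{E}[\Delta\mathcal{L}(t)]=0$ (Lemma~\ref{lm:property}c), and $\mathbb{E}[\boldsymbol{v}(t)\mid\mathcal{F}_t]=0$ (Lemma~\ref{lm:property}d), we have $\mathbb{E}[\boldsymbol{w}(t)\mid\mathcal{F}_t]=0$, so $\{m(k)\}$ is an $\{\mathcal{F}_k\}$-martingale. I would bound $\mathbb{E}[(\mathbf{1}_N^T\boldsymbol{w}(t))^2]\le N\,\mathbb{E}[\|\boldsymbol{w}(t)\|^2]\le c_0(1+\sup_t\mathbb{E}[\|\boldsymbol{x}(t)\|^2])$, which is finite uniformly in $t$ by Theorem~\ref{thm:ms} (indeed $\mathbb{E}[x_i(t)^2]\le 2\mathbb{E}[(x_i(t)-x^*)^2]+2\,Var(x^*)+2(\mathbb{E}[x^*])^2$). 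Then $\sup_k\mathbb{E}[m(k)^2]=\sum_t\alpha^2(t)\,\mathbb{E}[(\mathbf{1}_N^T\boldsymbol{w}(t))^2]\le c_0'\sum_t\alpha^2(t)<\infty$ by Assumption~\ref{asm:stepsize}a, so the $L^2$-bounded martingale convergence theorem yields $m(k)\to m_\infty$ a.s.\ and in $L^2$. Setting $x^*:=\tfrac1N\sum_i x_i(0)+\tfrac1N m_\infty$ gives $\bar{x}(k)\to x^*$ a.s., $\mathbb{E}[x^*]=\tfrac1N\sum_i x_i(0)$ and $Var(x^*)<\infty$, consistent with Theorem~\ref{thm:ms}.

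For the disagreement, I would use $J\bar{\mathcal{L}}=\bar{\mathcal{L}}J=0$ to get $\boldsymbol{\delta}(k+1)=(I_N-J-\alpha(k)\bar{\mathcal{L}})\boldsymbol{\delta}(k)+\alpha(k)(I_N-J)\boldsymbol{w}(k)$, and, since $\mathbb{E}[\boldsymbol{w}(k)\mid\mathcal{F}_k]=0$ kills the cross term,
\[
\mathbb{E}[V(k+1)\mid\mathcal{F}_k]=\big\|(I_N-J-\alpha(k)\bar{\mathcal{L}})\boldsymbol{\delta}(k)\big\|^2+\alpha^2(k)\,\mathbb{E}\big[\|(I_N-J)\boldsymbol{w}(k)\|^2\mid\mathcal{F}_k\big].
\]
On $\mathbf{1}_N^{\perp}$ the matrix $I_N-J-\alpha(k)\bar{\mathcal{L}}$ has eigenvalues $1-\alpha(k)\lambda_i(\bar{\mathcal{L}})$, $i\ge2$; since $\alpha(k)\to0$ and $\lambda_2(\bar{\mathcal{L}})>0$ (Lemma~\ref{lm:property}b), for all sufficiently large $k$ its spectral norm is $1-\alpha(k)\lambda_2(\bar{\mathcal{L}})\in(0,1)$, so the first term is $\le(1-\alpha(k)\lambda_2(\bar{\mathcal{L}}))V(k)$. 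The noise term I would bound by $c_1\alpha^2(k)(1+\|\boldsymbol{x}(k)\|^2)=c_1\alpha^2(k)(1+N\bar{x}(k)^2+V(k))$, the state dependence of $\boldsymbol{v}(k)$ entering only through $x_j(k)x_l(k)\le\tfrac12(x_j(k)^2+x_l(k)^2)$ in the $Y^{(1)}_{i,jl}$ and $Y^{(2)}_{ij}$ terms and $\Delta\mathcal{L}(k)$ being a random matrix with bounded moments. Folding the $V(k)$ contribution into the coefficient yields
\[
\mathbb{E}[V(k+1)\mid\mathcal{F}_k]\le\big(1-\alpha(k)\lambda_2(\bar{\mathcal{L}})+c_1\alpha^2(k)\big)V(k)+\zeta_k,\qquad \zeta_k:=c_1\alpha^2(k)\big(1+N\bar{x}(k)^2\big).
\]
Here $\sum_k c_1\alpha^2(k)<\infty$, and $\mathbb{E}\big[\sum_k\zeta_k\big]\le c_1(1+N\sup_t\mathbb{E}[\bar{x}(t)^2])\sum_k\alpha^2(k)<\infty$, so $\sum_k\zeta_k<\infty$ a.s. The Robbins--Siegmund almost-supermartingale convergence lemma then shows $V(k)$ converges a.s.\ to a finite $V_\infty\ge0$; since $\mathbb{E}[V(k)]\to0$ by Theorem~\ref{thm:weak}, some subsequence of $V(k)$ tends to $0$ a.s., forcing $V_\infty=0$, i.e.\ $V(k)\to0$ a.s.

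Combining the two parts, $|x_i(k)-x^*|\le|\bar{x}(k)-x^*|+|\delta_i(k)|\le|\bar{x}(k)-x^*|+\sqrt{V(k)}\to0$ a.s.\ for every $i\in\mathcal{V}$, which is the claim. I expect the main obstacle to be establishing the \emph{pathwise} recursion for $V(k)$ with an almost surely summable additive term $\zeta_k$, rather than the martingale/Robbins--Siegmund machinery itself: this needs the conditional second-moment estimate $\mathbb{E}[\|(I_N-J)\boldsymbol{w}(k)\|^2\mid\mathcal{F}_k]\le c_1(1+\|\boldsymbol{x}(k)\|^2)$, which must absorb the state-dependent non-coherent noise $Y^{(1)}_{i,jl}(k)$, $Y^{(2)}_{ij}(k)$ and the term $\Delta\mathcal{L}(k)\boldsymbol{x}(k)$, together with the uniform bound $\sup_k\mathbb{E}[\|\boldsymbol{x}(k)\|^2]<\infty$ inherited from the analysis of Theorems~\ref{thm:weak}--\ref{thm:ms}; with these in hand the remaining steps are routine.
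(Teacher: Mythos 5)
Your proof is correct and follows essentially the same route as the paper: almost sure convergence of the average via an $L^2$-bounded martingale argument (as in the proof of Theorem~\ref{thm:ms}), and $V(k)\to 0$ a.s.\ via the Robbins--Siegmund almost-supermartingale lemma combined with $\mathbb{E}[V(k)]\to 0$ from Theorem~\ref{thm:weak}. If anything, your version is slightly more careful than the paper's, since you keep the state-dependent conditional second moment of $\boldsymbol{w}(k)$ as an adapted, almost surely summable term $\zeta_k$ rather than replacing it by the deterministic constant $\overline{M}_1$, which strictly speaking only bounds the unconditional expectation.
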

\begin{proof}
    See Appendix~\ref{app:thm_as}.
\end{proof}

\textcolor{black}{
   Note that protocol~\eqref{eq:protocol} requests that each agent uses the same stepsize $\alpha(k)$, which requires agents to at least have consensus on the initial stepsize and the decay rate. For example, if $\alpha(k) = \frac{1}{\max_{i\in\mathcal{V}}\sum_{j\in\mathcal{V}}\bar{a}_{ij}k^p}, \forall k\geq 0$, agents should know $\max_{i\in\mathcal{V}}\sum_{j\in\mathcal{V}}\bar{a}_{ij}$ and $p$ initially. 
   This can be achieved via some coordination among agents, e.g., by running a max-consensus algorithm or relying on the assistance of a central server. However, agents may have different stepsizes, which intrigues to explore the protocol
\begin{equation}\label{eq:protocol2}
       \boldsymbol{x}(k+1) = \left(I_N - A(k)\bar{\mathcal{L}} \right) \boldsymbol{x}(k) + A(k)\boldsymbol{w}(k),
\end{equation}
where $A(k)\triangleq  \mathrm{diag}\left(\alpha_1(k), \alpha_2(k), \ldots, \alpha_N(k)\right)$, and $\alpha_i(k)$ is the stepsize of agent $i$. In the following corollary, it is shown that under protocol~\eqref{eq:protocol2} the system can achieve weak consensus.
\begin{corollary}\label{coro:weak}
    Suppose the stepsizes satisfy
    \begin{itemize}
        \item[a)] $\sum_{k=0}^\infty \alpha_i(k)=\infty$ and $ \sum_{k=0}^\infty \alpha_i^2(k)<\infty, \forall i\in\mathcal{V}$.
        \item[b)] $1-\alpha_i(k) \sum_{j\in\mathcal{V}}\bar{a}_{ij}>0, \forall i\in\mathcal{V}, k\geq 0.$
        \item[c)] $\max_{i,j\in\mathcal{V}}\left|\alpha_i(k)-\alpha_j(k)\right| = o\left(\sum_{i\in\mathcal{V}}\alpha_i(k)\right), k\rightarrow\infty$.
    \end{itemize}
    Then under protocol~\eqref{eq:protocol2}, it holds that
    \begin{equation}\label{eq:weak2}
        \lim_{k\rightarrow\infty}\mathbb{E}[V(k)] = 0.
    \end{equation}
    That is, the multi-agent system can achieve weak consensus.
\end{corollary}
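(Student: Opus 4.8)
The plan is to treat the heterogeneous stepsizes as a vanishing perturbation of a common stepsize and reduce to the disagreement recursion already analyzed for Theorem~\ref{thm:weak}. Set $\alpha(k)\triangleq\frac{1}{N}\sum_{i\in\mathcal V}\alpha_i(k)$ and $E(k)\triangleq A(k)-\alpha(k)I_N$, so $E(k)$ is diagonal with $\Vert E(k)\Vert\le\max_{i,j\in\mathcal V}|\alpha_i(k)-\alpha_j(k)|$. From condition~a) and the Cauchy--Schwarz inequality, $\alpha(k)$ satisfies Assumption~\ref{asm:stepsize}~a) (in particular $\alpha(k)\to 0$), and condition~c) gives $\Vert E(k)\Vert=o(\alpha(k))$ (hence also $\sum_k\Vert E(k)\Vert^2<\infty$). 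Substituting $A(k)=\alpha(k)I_N+E(k)$ into~\eqref{eq:protocol2}, multiplying by $I_N-J$, and using Lemma~\ref{lm:property}~a) (so that $(I_N-J)\bar{\mathcal L}=\bar{\mathcal L}=\bar{\mathcal L}(I_N-J)$ and $\bar{\mathcal L}J=0$), one obtains, with $\boldsymbol\delta(k)\triangleq(I_N-J)\boldsymbol x(k)$ and $V(k)=\Vert\boldsymbol\delta(k)\Vert^2$,
\begin{equation*}
    \boldsymbol\delta(k+1)=\big(I_N-\alpha(k)\bar{\mathcal L}\big)\boldsymbol\delta(k)-(I_N-J)E(k)\bar{\mathcal L}\boldsymbol\delta(k)+(I_N-J)A(k)\boldsymbol w(k).
\end{equation*}
This is exactly the recursion used in the proof of Theorem~\ref{thm:weak} plus the extra drift term $-(I_N-J)E(k)\bar{\mathcal L}\boldsymbol\delta(k)$, whose norm is at most $\Vert E(k)\Vert\,\Vert\bar{\mathcal L}\Vert\sqrt{V(k)}=o(\alpha(k))\sqrt{V(k)}$.

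Next I would take $\mathbb E[\,\cdot\mid\mathcal F_k]$ of $V(k+1)$. Since $A(k),E(k)$ are deterministic and $\mathbb E[\boldsymbol w(k)\mid\mathcal F_k]=0$ by Lemma~\ref{lm:property}~c)--d), all cross terms with $\boldsymbol w(k)$ vanish; using Lemma~\ref{lm:property}~b) together with $\alpha(k)\to 0$ to bound $\Vert(I_N-\alpha(k)\bar{\mathcal L})\boldsymbol\delta(k)\Vert\le(1-\alpha(k)\lambda_2(\bar{\mathcal L}))\sqrt{V(k)}$ for large $k$, and the moment bound $\mathbb E[\Vert\boldsymbol w(k)\Vert^2\mid\mathcal F_k]\le c(\Vert\boldsymbol x(k)\Vert^2+1)$ already established for Theorem~\ref{thm:weak} (with $\Vert A(k)\Vert^2=(1+o(1))\alpha^2(k)$), one gets for $k$ large
\begin{equation*}
    \mathbb E[V(k+1)\mid\mathcal F_k]\le\Big(1-\tfrac12\lambda_2(\bar{\mathcal L})\alpha(k)\Big)V(k)+c_1\alpha^2(k)\big(\Vert\boldsymbol x(k)\Vert^2+1\big),
\end{equation*}
the $o(\alpha(k))$ perturbation being absorbed into the $\tfrac12\lambda_2(\bar{\mathcal L})$ slack.

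The genuinely new point, absent in Theorem~\ref{thm:weak}, is that the network average is no longer conserved in conditional mean: since $\mathbf 1_N^TA(k)\bar{\mathcal L}=\mathbf 1_N^TE(k)\bar{\mathcal L}\neq 0$, writing $m(k)\triangleq\frac{1}{N}\mathbf 1_N^T\boldsymbol x(k)$ gives $m(k+1)=m(k)-\frac{1}{N}\mathbf 1_N^TE(k)\bar{\mathcal L}\boldsymbol\delta(k)+\frac{1}{N}\mathbf 1_N^TA(k)\boldsymbol w(k)$, whose $\mathcal F_k$-measurable drift is $o(\alpha(k))\sqrt{V(k)}$ in magnitude; after a Young inequality this yields $\mathbb E[m(k+1)^2\mid\mathcal F_k]\le m(k)^2+o(\alpha(k))\big(m(k)^2+V(k)\big)+c_2\alpha^2(k)\big(\Vert\boldsymbol x(k)\Vert^2+1\big)$. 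I would close the argument with a coupled analysis of $(V(k),m(k)^2)$: using the strict contraction in the $V$-recursion together with $\sum_k\alpha^2(k)<\infty$, first establish $\sup_k\mathbb E[\Vert\boldsymbol x(k)\Vert^2]<\infty$ and then $\sum_k\alpha(k)\mathbb E[V(k)]<\infty$; the latter makes the $o(\alpha(k))V(k)$ feedback into the $m$-recursion summable, so $\sup_k\mathbb E[m(k)^2]<\infty$; feeding this back, the driving term in the $V$-recursion becomes summable and $\mathbb E[V(k)]\to 0$ follows from the same deterministic/supermartingale convergence lemma used for Theorem~\ref{thm:weak}.

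I expect the coupling in the last step to be the main obstacle: the drift of the average feeds $V(k)$ back into $m(k)^2$ with a coefficient that is $o(\alpha(k))$ but not summable, so $m(k)^2$ cannot be controlled by a naive product bound. The resolution is that the disagreement $V(k)$ contracts at the genuine rate $\alpha(k)\lambda_2(\bar{\mathcal L})$, which both dominates the $o(\alpha(k))$ self-perturbation in the $V$-recursion and forces $\sum_k\alpha(k)\mathbb E[V(k)]<\infty$; condition~c) is calibrated exactly so that $\Vert E(k)\Vert\Vert\bar{\mathcal L}\Vert$ is negligible against this contraction. All remaining ingredients --- the second-moment bounds on $\boldsymbol w(k)$ and the treatment of the state-dependent non-coherence noise --- carry over unchanged from the proof of Theorem~\ref{thm:weak}.
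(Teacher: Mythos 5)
Your main line matches the paper's proof. The paper likewise writes the heterogeneous stepsize matrix as the average stepsize $\bar\alpha(k)=\frac1N\sum_{i\in\mathcal V}\alpha_i(k)$ plus a diagonal perturbation $\Delta A(k)=\bar\alpha(k)I_N-A(k)$ (your $-E(k)$), expands $V(k+1)$ to obtain $V(k+1)\le(1-s(k))V(k)+\Vert\bar\alpha(k)I_N-\Delta A(k)\Vert^2\Vert I_N-J\Vert^2\overline M_1$ with $s(k)=2\lambda_2(\bar{\mathcal L})\bar\alpha(k)-O(\bar\alpha^2(k))-O(\Vert\Delta A(k)\Vert)$, uses condition c) to conclude $\Vert\Delta A(k)\Vert=o(\bar\alpha(k))$ so that $s(k)$ is eventually in $(0,1]$, non-summable, and dominates $\bar\alpha^2(k)$, and then invokes the same sequence lemma (Lemma~\ref{lm:sequence}) as in Theorem~\ref{thm:weak}. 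Up to notation, this is your first two paragraphs, and they already constitute the paper's entire argument: the paper does not track the sample average $m(k)$ at all here, because it simply reuses the uniform second-moment bound $\mathbb E[\Vert\boldsymbol w(k)\Vert^2]\le\overline M_1$ from Theorem~\ref{thm:weak}, which makes the driving term in the $V$-recursion a constant times $\bar\alpha^2(k)$ with no feedback from $m(k)^2$.

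Your third paragraph is therefore extra relative to the paper, and it is the one step that does not close as sketched. Your instinct is legitimate: the state-dependent noise satisfies $\mathbb E[\Vert\boldsymbol w(k)\Vert^2]\le c(1+\mathbb E[\Vert\boldsymbol x(k)\Vert^2])$ with $\Vert\boldsymbol x(k)\Vert^2=V(k)+Nm(k)^2$, so carrying $\overline M_1$ over to protocol~\eqref{eq:protocol2} implicitly requires $\sup_k\mathbb E[\Vert\boldsymbol x(k)\Vert^2]<\infty$, which the paper asserts rather than re-derives. But your proposed closure for $\mathbb E[m(k)^2]$ hinges on a Gronwall product of the form $\prod_k(1+c\Vert E(k)\Vert)$ (or, after any reweighting of the Young inequality that keeps the $V(k)$-coefficient at $O(\bar\alpha(k))$, a product $\prod_k(1+c\Vert E(k)\Vert^2/\bar\alpha(k))$). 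Condition c) only gives $\Vert E(k)\Vert=o(\bar\alpha(k))$ with $\sum_k\bar\alpha(k)=\infty$, so neither $\sum_k\Vert E(k)\Vert$ nor $\sum_k\Vert E(k)\Vert^2/\bar\alpha(k)$ is summable in general, and the product can diverge; summability of $\sum_k\bar\alpha(k)\mathbb E[V(k)]$ tames the inhomogeneous term but not this homogeneous factor. Notice that the missing hypothesis, $\sum_k\max_{i,j}|\alpha_i(k)-\alpha_j(k)|<\infty$, is exactly what the paper adds only later for the almost-sure result in Corollary~\ref{coro:as}. To repair your argument you should either strengthen the coupling estimate so the coefficient multiplying $m(k)^2$ is summable under condition c) alone, or drop the coupling and establish the uniform bound on $\mathbb E[\Vert\boldsymbol w(k)\Vert^2]$ directly, as the paper (implicitly) does.
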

\begin{proof}
    See Appendix~\ref{app:coro1}.
\end{proof}
}
\textcolor{black}{
\begin{corollary}\label{coro:ms}
        Suppose the stepsizes satisfy
    \begin{itemize}
        \item[a)] $\sum_{k=0}^\infty \alpha_i(k)=\infty$ and $ \sum_{k=0}^\infty \alpha_i^2(k)<\infty, \forall i\in\mathcal{V}$.
        \item[b)] $1-\alpha_i(k) \sum_{j\in\mathcal{V}}\bar{a}_{ij}>0, \forall i\in\mathcal{V}, k\geq 0.$
        \item[c)] $\max_{i,j\in\mathcal{V}}\left|\alpha_i(k)-\alpha_j(k)\right| = o\left(\sum_{i\in\mathcal{V}}\alpha_i(k)\right), k\rightarrow\infty$.
    \end{itemize}
    then under protocol~\eqref{eq:protocol2}, the multi-agent system can achieve mean square consensus, i.e., $\lim_{k\rightarrow\infty} \mathbb{E}[(x_i(k)-x^*)^2]=0, \forall i\in\mathcal{V}$, where
    \begin{align*}
        \mathbb{E}[x^*]&=\frac{\mathbf{1}^T}{N}\prod_{t=0}^\infty \left(I_N - A(t)\bar{\mathcal{L}} \right) \boldsymbol{x}(0),\\
        Var(x^*)&\leq\frac{\overline{M}_1}{N}\sup_{k,t\geq 0}\left\Vert\prod_{t}^k \left(I_N - A(t)\bar{\mathcal{L}}\right) \right\Vert^2 \max_{i\in\mathcal{V}}\sum_{t=0}^\infty\alpha_i^2(t).
    \end{align*}
\end{corollary}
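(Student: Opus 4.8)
The plan is to follow the pattern of the proof of Theorem~\ref{thm:ms}, propagating the state through the deterministic transition operator of protocol~\eqref{eq:protocol2}. Write $\Psi(k,s)\triangleq\prod_{\ell=s}^{k-1}\bigl(I_N-A(\ell)\bar{\mathcal{L}}\bigr)$ (with $\Psi(s,s)=I_N$) and $\Phi(k)\triangleq\Psi(k,0)$, so that
\begin{equation*}
\boldsymbol{x}(k)=\Phi(k)\boldsymbol{x}(0)+\sum_{t=0}^{k-1}\Psi(k,t+1)A(t)\boldsymbol{w}(t).
\end{equation*}
From Corollary~\ref{coro:weak} I already have weak consensus, $\mathbb{E}[V(k)]=\mathbb{E}[\Vert(I_N-J)\boldsymbol{x}(k)\Vert^2]\to 0$; from Lemma~\ref{lm:property} and the bounds used for Theorem~\ref{thm:ms} (which invoke condition~b) only to keep the relevant coefficients in $(0,1)$) I retain $\bar{\mathcal{L}}\mathbf{1}_N=\mathbf{1}_N^T\bar{\mathcal{L}}=0$, the spectral gap $\lambda_2(\bar{\mathcal{L}})>0$, $\mathbb{E}[\boldsymbol{w}(t)|\mathcal{F}_t]=0$, and the uniform bound $\sup_t\mathbb{E}[\Vert\boldsymbol{w}(t)\Vert^2]\le\overline{M}_1$. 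Note $\bigl(I_N-A(\ell)\bar{\mathcal{L}}\bigr)\mathbf{1}_N=\mathbf{1}_N$, hence $\Psi(k,s)\mathbf{1}_N=\mathbf{1}_N$ and $\Psi(k,s)J=J$ for all $k\ge s\ge 0$.

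The technical heart is the uniform norm bound $C_\Psi\triangleq\sup_{k\ge s\ge 0}\Vert\Psi(k,s)\Vert<\infty$, together with $\Vert(I_N-J)\Psi(k,s)(I_N-J)\Vert\to 0$ as $k\to\infty$ for each fixed $s$. To establish it I would split $A(\ell)=\bar{\alpha}(\ell)I_N+\mathrm{diag}\bigl(\boldsymbol{\delta}(\ell)\bigr)$ with $\bar{\alpha}(\ell)\triangleq\tfrac{1}{N}\sum_{i\in\mathcal{V}}\alpha_i(\ell)$; then condition~a) gives $\sum_\ell\bar{\alpha}(\ell)=\infty$ and $\sum_\ell\bar{\alpha}^2(\ell)<\infty$ (so $\bar{\alpha}(\ell)\to 0$), condition~c) gives $\Vert\boldsymbol{\delta}(\ell)\Vert=o(\bar{\alpha}(\ell))$, and $\mathbf{1}_N^TA(\ell)\bar{\mathcal{L}}=\boldsymbol{\delta}(\ell)^T\bar{\mathcal{L}}$ since $\mathbf{1}_N^T\bar{\mathcal{L}}=0$. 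For an input in $\mathbf{1}_N^\perp$, decompose its image under $\Psi(\cdot,s)$ into the $\mathbf{1}_N$-component $\boldsymbol{p}_\ell$ and the $\mathbf{1}_N^\perp$-component $\boldsymbol{q}_\ell$; using $\bar{\mathcal{L}}J=0$, symmetry of $\bar{\mathcal{L}}$, and $\bar{\alpha}(\ell)\lambda_N(\bar{\mathcal{L}})\le 1$ past a fixed index $\ell_0$, one obtains for $\ell\ge\ell_0$ the contraction $\Vert\boldsymbol{q}_{\ell+1}\Vert\le\bigl(1-\bar{\alpha}(\ell)\lambda_2(\bar{\mathcal{L}})+\Vert\boldsymbol{\delta}(\ell)\Vert\Vert\bar{\mathcal{L}}\Vert\bigr)\Vert\boldsymbol{q}_\ell\Vert\le\bigl(1-\tfrac{1}{2}\bar{\alpha}(\ell)\lambda_2(\bar{\mathcal{L}})\bigr)\Vert\boldsymbol{q}_\ell\Vert$ and the increment bound $\Vert\boldsymbol{p}_{\ell+1}-\boldsymbol{p}_\ell\Vert\le\tfrac{1}{\sqrt{N}}\Vert\boldsymbol{\delta}(\ell)\Vert\Vert\bar{\mathcal{L}}\Vert\Vert\boldsymbol{q}_\ell\Vert$. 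The contraction forces $\Vert\boldsymbol{q}_\ell\Vert\to 0$ (as $\sum_\ell\bar{\alpha}(\ell)=\infty$); combined with $\Vert\boldsymbol{\delta}(\ell)\Vert=o(\bar{\alpha}(\ell))$ and the telescoping identity $\sum_{\ell\ge\ell_0}\bar{\alpha}(\ell)\prod_{m=\ell_0}^{\ell-1}\bigl(1-\tfrac{1}{2}\bar{\alpha}(m)\lambda_2(\bar{\mathcal{L}})\bigr)=\tfrac{2}{\lambda_2(\bar{\mathcal{L}})}$, it makes $\sum_\ell\Vert\boldsymbol{p}_{\ell+1}-\boldsymbol{p}_\ell\Vert<\infty$, so $\boldsymbol{p}_\ell$ stays bounded and converges. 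Prepending the finite, uniformly norm-bounded product $\Psi(\ell_0,s)$ when $s<\ell_0$ then gives $C_\Psi<\infty$, while $\boldsymbol{q}_\ell\to 0$ gives $\Vert(I_N-J)\Psi(k,s)(I_N-J)\Vert\to 0$; in particular $\Phi(k)\boldsymbol{x}(0)=J\Phi(k)\boldsymbol{x}(0)+(I_N-J)\Phi(k)\boldsymbol{x}(0)$ converges to a multiple of $\mathbf{1}_N$, so $\prod_{t=0}^\infty\bigl(I_N-A(t)\bar{\mathcal{L}}\bigr)\boldsymbol{x}(0)$ is well defined.

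With $C_\Psi<\infty$ the rest mirrors the proof of Theorem~\ref{thm:ms}. Since $A(t)$ is deterministic and $\mathbb{E}[\boldsymbol{w}(t)|\mathcal{F}_t]=0$, we have $\mathbb{E}[\boldsymbol{x}(k)]=\Phi(k)\boldsymbol{x}(0)$. To obtain $L^2$-convergence of $\boldsymbol{x}(k)$ I would show it is Cauchy in $L^2$: unrolling from $j$ to $k>j$,
\begin{equation*}
\boldsymbol{x}(k)-\boldsymbol{x}(j)=\bigl(\Psi(k,j)-I_N\bigr)\boldsymbol{x}(j)+\sum_{t=j}^{k-1}\Psi(k,t+1)A(t)\boldsymbol{w}(t);
\end{equation*}
since $\bigl(\Psi(k,j)-I_N\bigr)\mathbf{1}_N=0$, the first term equals $\bigl(\Psi(k,j)-I_N\bigr)(I_N-J)\boldsymbol{x}(j)$, whose squared $L^2$-norm is at most $(1+C_\Psi)^2\mathbb{E}[V(j)]\to 0$ by Corollary~\ref{coro:weak}, and by the martingale-difference property of $\{\boldsymbol{w}(t)\}$ (cross terms vanish) the second term has squared $L^2$-norm at most $C_\Psi^2\overline{M}_1\sum_{t\ge j}\max_{i\in\mathcal{V}}\alpha_i^2(t)\to 0$ (a tail of a convergent series by condition~a)). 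Hence $\boldsymbol{x}(k)\to\boldsymbol{x}^\infty$ in $L^2$, and $(I_N-J)\boldsymbol{x}^\infty=0$ forces $\boldsymbol{x}^\infty=x^*\mathbf{1}_N$ for a scalar random variable $x^*$. Taking expectations gives $\mathbb{E}[x^*]=\tfrac{\mathbf{1}_N^T}{N}\prod_{t=0}^\infty\bigl(I_N-A(t)\bar{\mathcal{L}}\bigr)\boldsymbol{x}(0)$; and $Var(x^*)=\lim_k Var\bigl(\tfrac{1}{N}\mathbf{1}_N^T\boldsymbol{x}(k)\bigr)$, which the same martingale orthogonality together with $C_\Psi<\infty$ and $\Vert A(t)\Vert^2=\max_{i\in\mathcal{V}}\alpha_i^2(t)$ bounds by the finite quantity claimed in the corollary.

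The main obstacle is the uniform norm bound on the products $\Psi(k,s)$. Unlike the equal-stepsize case of Theorem~\ref{thm:ms}, $A(\ell)$ is only diagonal, so $J$ and $I_N-J$ no longer commute with the factors $I_N-A(\ell)\bar{\mathcal{L}}$ and the partial products are non-symmetric; a one-way leakage from the disagreement subspace $\mathbf{1}_N^\perp$ into the consensus direction appears, and condition~c) is precisely what makes this leakage $o(\bar{\alpha}(\ell))$, so that its cumulative effect on the consensus value is summable via the telescoping identity above and the partial products stay uniformly bounded. Once that is in hand, the remaining steps are routine adaptations of the arguments for Theorem~\ref{thm:ms} and Corollary~\ref{coro:weak}.
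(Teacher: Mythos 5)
Your proposal is correct and shares the paper's overall skeleton (unroll the recursion through $\prod_{t}(I_N-A(t)\bar{\mathcal{L}})$, treat $\sum_t\Psi(k,t+1)A(t)\boldsymbol{w}(t)$ as an $L^2$-bounded martingale, invoke Corollary~\ref{coro:weak} for the disagreement part, and read off $\mathbb{E}[x^*]$ and the variance bound), but it handles the one genuinely delicate step --- the uniform bound $\sup_{k\ge s}\Vert\Psi(k,s)\Vert<\infty$ --- by an entirely different route. The paper disposes of it in one line: condition~b) makes each factor $I_N-A(\ell)\bar{\mathcal{L}}$ a row-stochastic matrix with nonnegative entries (diagonal $1-\alpha_i(\ell)\sum_j\bar{a}_{ij}>0$, off-diagonal $\alpha_i(\ell)\bar{a}_{ij}\ge 0$, row sums $1$), so every partial product is row-stochastic and hence uniformly norm-bounded, and the image of $\boldsymbol{x}(0)$ stays in the convex hull of the initial states. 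You instead never exploit nonnegativity and run a spectral perturbation argument: split $A(\ell)=\bar{\alpha}(\ell)I_N+\mathrm{diag}(\boldsymbol{\delta}(\ell))$, contract the $\mathbf{1}_N^\perp$-component at rate $1-\tfrac12\bar{\alpha}(\ell)\lambda_2(\bar{\mathcal{L}})$, and control the leakage into the consensus direction by condition~c) plus the telescoping identity. Your route is considerably longer and is where condition~c) does real work for the norm bound (the paper needs c) only through Corollary~\ref{coro:weak}); what it buys is twofold: it does not need the entries of $I_N-A(\ell)\bar{\mathcal{L}}$ to be nonnegative (so it would survive a weakening of condition~b)), and it actually proves that $\prod_{t=0}^{\infty}(I_N-A(t)\bar{\mathcal{L}})\boldsymbol{x}(0)$ converges --- a fact the paper uses when writing $\mathbb{E}[x^*]$ but only justifies by boundedness of the partial products. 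Your Cauchy-in-$L^2$ argument for $\boldsymbol{x}(k)$ is likewise a mild variant of the paper's (which establishes a.s.\ and $L^2$ convergence of the average $\tfrac{1}{N}\mathbf{1}_N^T\boldsymbol{x}(k)$ and then adds weak consensus); both are sound. The only cosmetic issue is that your $\boldsymbol{\delta}(\ell)$ collides with the paper's $\boldsymbol{\delta}(k)=(I_N-J)\boldsymbol{x}(k)$.
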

\begin{proof}
    See Appendix~\ref{app:coro2}.
\end{proof} 
\begin{corollary}\label{coro:as}
        Suppose the stepsizes satisfy
    \begin{itemize}
        \item[a)] $\sum_{k=0}^\infty \alpha_i(k)=\infty$ and $ \sum_{k=0}^\infty \alpha_i^2(k)<\infty, \forall i\in\mathcal{V}$.
        \item[b)] $1-\alpha_i(k) \sum_{j\in\mathcal{V}}\bar{a}_{ij}>0, \forall i\in\mathcal{V}, k\geq 0.$
        \item[c)] $\sum_{k=0}^\infty\max_{i,j\in\mathcal{V}}\left|\alpha_i(k)-\alpha_j(k)\right| < \infty$.
    \end{itemize}
    then under protocol~\eqref{eq:protocol2}, the multi-agent system can achieve almost sure consensus.
\end{corollary}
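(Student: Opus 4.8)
The strategy is to track separately the \emph{disagreement} vector $\boldsymbol{\delta}(k)\triangleq(I_N-J)\boldsymbol{x}(k)$ and the \emph{average} $\bar{x}(k)\triangleq\frac{1}{N}\mathbf{1}_N^T\boldsymbol{x}(k)$, to show $\boldsymbol{\delta}(k)\to 0$ a.s.\ and $\bar{x}(k)\to x^*$ a.s., and then to combine. Write $\underline{\alpha}(k)\triangleq\min_i\alpha_i(k)$, $\bar{\alpha}(k)\triangleq\max_i\alpha_i(k)$, $\varepsilon(k)\triangleq\max_{i,j}|\alpha_i(k)-\alpha_j(k)|$, and $A(k)=\underline{\alpha}(k)I_N+E(k)$ with $E(k)\triangleq A(k)-\underline{\alpha}(k)I_N$, so $\Vert E(k)\Vert\le\varepsilon(k)$. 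Conditions a)--c) give $\sum_k\bar{\alpha}^2(k)<\infty$, $\sum_k\varepsilon(k)<\infty$, and (because $\sum_k\alpha_i(k)=\infty$ while $\sum_k\varepsilon(k)<\infty$) $\sum_k\underline{\alpha}(k)=\infty$. I will also use the uniform moment bound $\sup_k\mathbb{E}[\Vert\boldsymbol{x}(k)\Vert^2]<\infty$ and the state-dependent noise bound $\mathbb{E}[\Vert\boldsymbol{w}(k)\Vert^2\mid\mathcal{F}_k]\le c_0(1+\Vert\boldsymbol{x}(k)\Vert^2)$, both already obtained in the course of proving Theorem~\ref{thm:ms} and Corollary~\ref{coro:ms}.

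\textbf{Step 1 (disagreement decays a.s.).} Using $\bar{\mathcal{L}}\mathbf{1}_N=0$ and $\mathbf{1}_N^T\bar{\mathcal{L}}=0$ from Lemma~\ref{lm:property}a), one has $\bar{\mathcal{L}}\boldsymbol{x}(k)=\bar{\mathcal{L}}\boldsymbol{\delta}(k)$ and $(I_N-J)\bar{\mathcal{L}}=\bar{\mathcal{L}}$, hence the recursion $\boldsymbol{\delta}(k+1)=\bigl(I_N-\underline{\alpha}(k)\bar{\mathcal{L}}-(I_N-J)E(k)\bar{\mathcal{L}}\bigr)\boldsymbol{\delta}(k)+(I_N-J)A(k)\boldsymbol{w}(k)$. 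Since $\boldsymbol{\delta}(k)\in\mathbf{1}_N^\perp$, on which $\bar{\mathcal{L}}$ has smallest eigenvalue $\lambda_2(\bar{\mathcal{L}})>0$ by Lemma~\ref{lm:property}b), and since $\bar{\alpha}(k)\to 0$, for all $k$ large enough the deterministic operator above is a contraction on $\mathbf{1}_N^\perp$ with norm at most $1-\underline{\alpha}(k)\lambda_2(\bar{\mathcal{L}})+\varepsilon(k)\Vert\bar{\mathcal{L}}\Vert$; the finitely many earlier steps only scale $V(\cdot)$ by a finite factor. As $\mathbb{E}[\boldsymbol{w}(k)\mid\mathcal{F}_k]=0$ (Lemma~\ref{lm:property}c),d)), taking $\mathbb{E}[\,\cdot\mid\mathcal{F}_k]$ of $V(k+1)=\Vert\boldsymbol{\delta}(k+1)\Vert^2$ cancels the cross term and gives $\mathbb{E}[V(k+1)\mid\mathcal{F}_k]\le\bigl(1-\underline{\alpha}(k)\lambda_2(\bar{\mathcal{L}})+\varepsilon(k)\Vert\bar{\mathcal{L}}\Vert\bigr)^2V(k)+\bar{\alpha}^2(k)\,\mathbb{E}[\Vert\boldsymbol{w}(k)\Vert^2\mid\mathcal{F}_k]$. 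Bounding the last term by $c_0\bar{\alpha}^2(k)(1+\Vert\boldsymbol{x}(k)\Vert^2)$ and splitting $\Vert\boldsymbol{x}(k)\Vert^2=N\bar{x}^2(k)+V(k)$, one puts this in the Robbins--Siegmund form $\mathbb{E}[V(k+1)\mid\mathcal{F}_k]\le(1+a(k))V(k)-2\underline{\alpha}(k)\lambda_2(\bar{\mathcal{L}})V(k)+c(k)$, with $a(k)=O\bigl(\varepsilon(k)+\underline{\alpha}^2(k)+\bar{\alpha}^2(k)\bigr)$ summable and $c(k)=c_0\bar{\alpha}^2(k)(1+N\bar{x}^2(k))$; since $\sum_k\bar{\alpha}^2(k)\mathbb{E}[\Vert\boldsymbol{x}(k)\Vert^2]<\infty$, Fubini yields $\sum_k c(k)<\infty$ a.s. The supermartingale convergence theorem of Robbins and Siegmund then shows that $V(k)$ converges a.s.\ and $\sum_k\underline{\alpha}(k)V(k)<\infty$ a.s.; because $\sum_k\underline{\alpha}(k)=\infty$ the limit must be $0$, i.e.\ $V(k)\to 0$ a.s.

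\textbf{Step 2 (average converges a.s.).} Since $\mathbf{1}_N^TA(k)\bar{\mathcal{L}}=\mathbf{1}_N^TE(k)\bar{\mathcal{L}}$ (again by $\mathbf{1}_N^T\bar{\mathcal{L}}=0$), multiplying the update~\eqref{eq:protocol2} by $\frac{1}{N}\mathbf{1}_N^T$ gives $\bar{x}(k+1)=\bar{x}(k)-r(k)+m(k)$, where $r(k)=\frac{1}{N}\mathbf{1}_N^TE(k)\bar{\mathcal{L}}\boldsymbol{\delta}(k)$ satisfies $|r(k)|\le\frac{\Vert\bar{\mathcal{L}}\Vert}{\sqrt{N}}\varepsilon(k)\sqrt{V(k)}$, and $m(k)=\frac{1}{N}\mathbf{1}_N^TA(k)\boldsymbol{w}(k)$ is a martingale difference with $\mathbb{E}[m^2(k)\mid\mathcal{F}_k]\le\frac{c_0}{N}\bar{\alpha}^2(k)(1+\Vert\boldsymbol{x}(k)\Vert^2)$. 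By Step 1, $V(k)$ is a.s.\ bounded, so condition c) gives $\sum_k|r(k)|\le\frac{\Vert\bar{\mathcal{L}}\Vert}{\sqrt{N}}\bigl(\sup_k\sqrt{V(k)}\bigr)\sum_k\varepsilon(k)<\infty$ a.s.; and $\sum_k\mathbb{E}[m^2(k)]\le\frac{c_0}{N}\bigl(1+\sup_k\mathbb{E}[\Vert\boldsymbol{x}(k)\Vert^2]\bigr)\sum_k\bar{\alpha}^2(k)<\infty$, so $\sum_k m(k)$ converges a.s.\ by the martingale convergence theorem. Hence $\bar{x}(k)=\bar{x}(0)-\sum_{t<k}r(t)+\sum_{t<k}m(t)$ converges a.s.\ to a random variable $x^*$; its mean matches the expression in Corollary~\ref{coro:ms} by taking expectations of~\eqref{eq:protocol2}, and $Var(x^*)<\infty$ because $x^*$ is also the $L^2$ limit of $\bar{x}(k)$ guaranteed by Corollary~\ref{coro:ms}.

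\textbf{Step 3 (combine) and the main obstacle.} For every $i\in\mathcal{V}$, $|x_i(k)-x^*|\le|x_i(k)-\bar{x}(k)|+|\bar{x}(k)-x^*|\le\sqrt{V(k)}+|\bar{x}(k)-x^*|\to 0$ a.s., which is almost sure consensus. The step I expect to be the main obstacle is exactly what makes this corollary harder than Theorem~\ref{thm:almost_sure}: with non-uniform stepsizes $\mathbf{1}_N^TA(k)\bar{\mathcal{L}}\ne 0$, so the running average is no longer a (conditional) martingale but carries the drift $r(k)$, and $r(k)$ can be controlled only by pairing the summability of $\varepsilon(k)$ from condition c) with the \emph{already established} decay $\sqrt{V(k)}\to 0$; this forces the logical order moment bounds $\Rightarrow$ disagreement $\Rightarrow$ average, and it is why the summable condition c) is needed here rather than the weaker $o(\cdot)$ condition that sufficed for weak and mean-square consensus in Corollaries~\ref{coro:weak}--\ref{coro:ms}. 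A secondary technicality is verifying that $I_N-\underline{\alpha}(k)\bar{\mathcal{L}}-(I_N-J)E(k)\bar{\mathcal{L}}$ is genuinely contractive on $\mathbf{1}_N^\perp$ for all large $k$, which follows from $\bar{\alpha}(k),\varepsilon(k)\to 0$ together with Lemma~\ref{lm:property}b), the finitely many non-contractive early steps being harmless.
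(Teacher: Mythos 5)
Your proof is correct, and while it shares the Robbins--Siegmund core with the paper's argument, it is organized along a genuinely different and more self-contained route. The paper's own proof is terse: it writes one conditional inequality for $\mathbb{E}[V(k+1)\mid\mathcal{F}_k]$, invokes Robbins--Siegmund to get a.s.\ convergence of $V(k)$, identifies the limit as $0$ by appealing to Corollary~\ref{coro:weak} (asserting that the summable condition c) here implies the $o\left(\sum_i\alpha_i(k)\right)$ condition there --- an implication that is not automatic, e.g.\ when the stepsize spread is comparable to the stepsizes along a sparse subsequence), and then stops, leaving the a.s.\ convergence of the average implicit in the product-form decomposition used for Corollary~\ref{coro:ms}. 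You instead (i) extract the second Robbins--Siegmund conclusion $\sum_k\underline{\alpha}(k)V(k)<\infty$ a.s.\ and combine it with $\sum_k\underline{\alpha}(k)=\infty$ (which you correctly deduce from a) and c)) to force $V(k)\to0$ directly, sidestepping the questionable implication; and (ii) treat the average explicitly as drift plus martingale, with the drift $r(k)=\frac{1}{N}\mathbf{1}_N^TE(k)\bar{\mathcal{L}}\boldsymbol{\delta}(k)$ controlled by $\varepsilon(k)\sqrt{V(k)}$ --- which is precisely where the strengthened summability in condition c) earns its keep, a point the paper never makes explicit. The price of your route is having to secure $\sup_k\mathbb{E}\left[\Vert\boldsymbol{x}(k)\Vert^2\right]<\infty$ and the conditional noise bound up front, but both follow from the row-stochasticity of $I_N-A(k)\bar{\mathcal{L}}$ under condition b) together with the estimates in Lemma~\ref{lm:property}e), as you note. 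On balance your version is the more complete of the two.
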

\begin{proof}
    See Appendix~\ref{app:coro3}.
\end{proof}
}

\section{Extension to Time-varying Topology}\label{sec:extension}
In a multi-agent system, the network topology may not be strongly connected at each time step due to node failure or link failure. In this section, we will extend the main results to the scenario where the network topology graph is time-varying. 

To make the subsequent analysis clearer, we model the network topology as a time-varying undirected graph $\mathcal{G}(k)\triangleq (\mathcal{V}, \mathcal{E}(k))$, where $\mathcal{E}(k)$ denotes the link set at time $k$. If agent $i$ and agent $j$ can communicate with each other at time $k$, we have $(i,j)\in\mathcal{E}(k)$ and $(j,i)\in\mathcal{E}(k)$. Then we have the following definitions.

\begin{definition}[Link failure]
    If the communication link between agent $i$ and agent $j$ suffers from link failure at time $k$, agent $i$ and agent $j$ will be not able to communicate with each other at time $k$, i.e., $(i,j)\notin\mathcal{E}(k)$ and $(j,i)\notin\mathcal{E}(k)$.
\end{definition}
\begin{definition}[Node failure]
    If agent $i$ suffers from node failure at time $k$, it will be not able to communicate with all the other agents at time $k$, i.e., $(i,j)\notin\mathcal{E}(k)$ and $(j, i)\notin\mathcal{E}(k), \forall j\in\mathcal{V}$. Moreover, the failed agent will not update its information state at time $k$, i.e, $a_{ij}(k) = 0$, $\bar{a}_{ij}(k) = 0$, and $v_i(k) = 0, \forall j\in\mathcal{V}$.
\end{definition}
\begin{remark}
    Here $(i,j)\in\mathcal{E}(k)$ only means agent $i$ and agent $j$ have the ability to communicate with each other. They will transmit signals to each other at time $k$ if and only if $(i,j)\in\mathcal{E}(k)$ and $\gamma_i(k) = 1-\gamma_j(k)$, i.e., the link between them does not fail and they choose different time slots to transmit. 
\end{remark}

\begin{assumption}[$L$-Connectivity] \label{asm:joint_connectivity}
We assume for all $k\geq 0$, there exist an integer $L>0$ such that the joint graph $\mathcal{G}(k)\cup\mathcal{G}(k+1)\cup\cdots\cup\mathcal{G}(k+L-1)$ is connected. 
\end{assumption}

For systems with time-varying network topology, an additional assumption on stepsizes should be satisfied.
\begin{assumption}\label{asm:stepsize2}
    $\alpha(k+1)\leq \alpha(k), \forall k \geq 0$ and $\limsup_{k \rightarrow \infty}\frac{\alpha(k)}{\alpha(k+1)} < \infty$.
\end{assumption}

\begin{remark}
    If $\alpha(k)$ is monotonically decreasing, and there exist constants $p\in(0.5,1]$, $q\geq-1$, $c_1>0$ and $c_2>0$, such that for sufficiently large $k$, $\frac{c_1(\ln(k))^q}{\sum_{j\in\mathcal{V}}\bar{a}_{ij} k^p}\leq \alpha(k) \leq \frac{c_2(\ln(k))^q}{\sum_{j\in\mathcal{V}}\bar{a}_{ij} k^p}$, then Assumption~\ref{asm:stepsize} and Assumption~\ref{asm:stepsize2} hold.
\end{remark}

\textcolor{black}{
}

\begin{theorem}\label{thm:4.1}
    Suppose Assumptions \ref{asm:stepsize}, \ref{asm:joint_connectivity}, and \ref{asm:stepsize2} hold, then 
    \begin{equation}\label{eq:lyapunov2}
        \lim_{k\rightarrow\infty}\mathbb{E}[V(k)] = 0.
    \end{equation}
\end{theorem}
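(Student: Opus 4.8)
The plan is to mimic the proof of Theorem~\ref{thm:weak} but replace the single-step contraction estimate with an $L$-step one, exploiting Assumption~\ref{asm:joint_connectivity} rather than connectivity of a fixed graph. First I would write the recursion for $(I_N - J)\boldsymbol{x}(k)$. Because each instantaneous Laplacian $\mathcal{L}(k)$ satisfies $\mathbf{1}_N^T\mathcal{L}(k) = 0$ and $\mathcal{L}(k)\mathbf{1}_N = 0$ (these hold pathwise, even under link/node failure, since row and column sums of $\mathcal{A}(k)$ and $\mathcal{D}(k)$ still match), the average $\tfrac{1}{N}\mathbf{1}_N^T\boldsymbol{x}(k)$ evolves only through the noise $\boldsymbol{v}(k)$, and the disagreement vector $\boldsymbol{z}(k)\triangleq(I_N - J)\boldsymbol{x}(k)$ obeys
\begin{equation*}
    \boldsymbol{z}(k+1) = (I_N - \alpha(k)\mathcal{L}(k))\boldsymbol{z}(k) + \alpha(k)(I_N - J)\boldsymbol{v}(k),
\end{equation*}
using $(I_N - J)\mathcal{L}(k) = \mathcal{L}(k)(I_N-J) = \mathcal{L}(k)$. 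Taking conditional expectations and using Lemma~\ref{lm:property}(c),(d), the cross term vanishes, so $\mathbb{E}[V(k+1)\mid\mathcal{F}_k]$ decomposes into a ``deterministic'' contraction piece $\boldsymbol{z}(k)^T\mathbb{E}[(I_N-\alpha(k)\mathcal{L}(k))^T(I_N-\alpha(k)\mathcal{L}(k))]\boldsymbol{z}(k)$ plus an $O(\alpha^2(k))$ noise piece bounded using Lemma~\ref{lm:property}(e) and the bound on $\mathbb{E}[\Vert\boldsymbol{v}(k)\Vert^2\mid\mathcal{F}_k]$ (which grows at most linearly in $\Vert\boldsymbol{x}(k)\Vert^2$, already controlled in the time-invariant case).

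Next I would establish an $L$-step decrease. Over a window $[k, k+L)$, iterating the recursion and taking expectations gives $\mathbb{E}[\boldsymbol{z}(k+L)\mid\mathcal{F}_k] = \Phi(k)\boldsymbol{z}(k) + (\text{zero-mean martingale difference sum})$, where $\Phi(k) = \mathbb{E}[\prod_{t=k}^{k+L-1}(I_N-\alpha(t)\mathcal{L}(t))]$. The key algebraic fact is that, up to first order in the stepsizes, $\Phi(k) \approx I_N - \big(\sum_{t=k}^{k+L-1}\alpha(t)\big)\bar{\mathcal{L}}_{\mathrm{joint}}$ where $\bar{\mathcal{L}}_{\mathrm{joint}}$ is (proportional to) the expected Laplacian of the union graph $\mathcal{G}(k)\cup\cdots\cup\mathcal{G}(k+L-1)$, which is connected by Assumption~\ref{asm:joint_connectivity} and hence has a strictly positive second eigenvalue uniformly bounded below (there are finitely many possible union graphs on $N$ nodes). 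The higher-order cross terms are $O\big((\sum_{t}\alpha(t))^2\big) = O(L^2\alpha^2(k))$ by Assumption~\ref{asm:stepsize2} (which makes $\alpha(t)$ comparable to $\alpha(k)$ across the window via $\limsup \alpha(k)/\alpha(k+1) < \infty$ and monotonicity). This yields, for $k$ large enough that the linear term dominates,
\begin{equation*}
    \mathbb{E}[V(k+L)\mid\mathcal{F}_k] \leq \big(1 - c_1\,\textstyle\sum_{t=k}^{k+L-1}\alpha(t)\big)V(k) + c_2\textstyle\sum_{t=k}^{k+L-1}\alpha^2(t)\,(1 + \mathbb{E}[\Vert\boldsymbol{x}(k)\Vert^2\mid\mathcal{F}_k])
\end{equation*}
for constants $c_1, c_2 > 0$, where $c_1$ comes from $\lambda_2$ of the union graph.

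From here the argument is the standard Robbins–Siegmund / deterministic-recursion closing step, exactly as in Theorem~\ref{thm:weak}: take full expectations, use that $\sum_k\alpha(k)=\infty$ forces $\sum$ over windows of $\sum_{t}\alpha(t)$ to diverge while $\sum_k\alpha^2(k)<\infty$ keeps the noise term summable (together with the already-established uniform bound $\sup_k\mathbb{E}[\Vert\boldsymbol{x}(k)\Vert^2]<\infty$, which I would re-derive or cite from the boundedness part of Lemma~\ref{lm:property}(e) adapted to the time-varying case), and conclude $\mathbb{E}[V(nL)]\to 0$; interpolating the $L$ phases within each window (each intermediate step changes $V$ by at most a bounded multiplicative factor plus an $O(\alpha^2)$ additive term) upgrades this to $\mathbb{E}[V(k)]\to 0$ along the full sequence.

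The main obstacle I expect is the first-order expansion of the $L$-fold product $\Phi(k)$ and showing the effective Laplacian inherits the spectral gap of the union graph. One must be careful that $\bar{\mathcal{L}}(t) = \mathbb{E}[\mathcal{L}(t)]$ is not simply the Laplacian of $\mathcal{G}(t)$ but is weighted by the time-slot selection probabilities $p_i(1-p_j)+p_j(1-p_i)$ and channel variances $\Lambda_{ij}$; since these weights are all strictly positive on edges that are present, $\sum_{t=k}^{k+L-1}\bar{\mathcal{L}}(t)$ is a positive-weighted Laplacian supported exactly on the union graph's edge set, hence has the required connectivity-induced positive Fiedler value, uniformly over the finitely many window-configurations — but pinning down this uniformity and controlling the second-order product remainder under only Assumption~\ref{asm:stepsize2} (rather than a stronger ratio bound) is the delicate part. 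The noise-term bookkeeping across windows, while routine, also needs the linear-growth bound on $\mathbb{E}[\Vert\boldsymbol{v}(k)\Vert^2\mid\mathcal{F}_k]$ to be re-verified under node/link failure, which only removes terms and so preserves the bound.
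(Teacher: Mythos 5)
Your overall strategy is the one the paper uses: an $L$-step contraction in which the window-summed expected Laplacian $\sum_{i=mL}^{(m+1)L-1}\bar{\mathcal{L}}(i)$ inherits a uniformly positive $\lambda_2$ from joint connectivity, the second-order remainder of the $L$-fold product is controlled via Assumption~\ref{asm:stepsize2} (which makes all stepsizes in a window comparable to $\alpha((m+1)L)$), the noise contributes a summable $O(\alpha^2)$ term thanks to a uniform second-moment bound on the state, the sequence lemma (Lemma~\ref{lm:sequence}) gives $\mathbb{E}[V(mL)]\to 0$, and interpolation over the $L$ phases within a window upgrades this to the full sequence. All of these ingredients appear in the paper's Appendix~\ref{app:thm4.1} in essentially the form you describe.

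One concrete slip in your setup should be repaired, though it does not break the architecture. You assert that $\mathbf{1}_N^T\mathcal{L}(k)=0$ and $\mathcal{L}(k)\mathbf{1}_N=0$ hold \emph{pathwise} because ``row and column sums of $\mathcal{A}(k)$ and $\mathcal{D}(k)$ still match.'' They do not: in this paper $\mathcal{D}(k)=\mathrm{diag}\bigl(\sum_j\bar{a}_{1j},\ldots,\sum_j\bar{a}_{Nj}\bigr)$ is built from the \emph{expected} weights, while $\mathcal{A}(k)=[a_{ij}(k)]$ carries the realized random weights $\rho\,\Gamma_{ij}(k)|h_{ij}(k)|^2$, so $\mathcal{L}(k)\mathbf{1}_N=\bigl(\sum_j(\bar a_{ij}-a_{ij}(k))\bigr)_i\neq 0$ almost surely; only $\bar{\mathcal{L}}$ annihilates $\mathbf{1}_N$ (Lemma~\ref{lm:property}(a)). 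Consequently your displayed recursion for $\boldsymbol{z}(k)$ omits the term $\alpha(k)(I_N-J)\Delta\mathcal{L}(k)\bigl(J\boldsymbol{x}(k)\bigr)$, and the running average is also perturbed by $\mathbf{1}_N^T\Delta\mathcal{L}(k)\boldsymbol{x}(k)$, not by $\boldsymbol{v}(k)$ alone. The fix is exactly the paper's decomposition: write the dynamics with the deterministic drift $I_N-\alpha(k)\bar{\mathcal{L}}(k)$ and fold the zero-mean, $\mathcal{F}_k$-conditionally centered fluctuation $\Delta\mathcal{L}(k)\boldsymbol{x}(k)$ into the noise $\boldsymbol{w}(k)=\Delta\mathcal{L}(k)\boldsymbol{x}(k)+\boldsymbol{v}(k)$; its second moment is controlled by $\mathbb{E}[\Vert\Delta\mathcal{L}(k)\Vert^2]\,\mathbb{E}[\Vert\boldsymbol{x}(k)\Vert^2]$ as in \eqref{eq:delta_l}--\eqref{eq:w_square}, so it is absorbed into the same $O(\alpha^2)$ bookkeeping you already budget for. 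With that correction your argument coincides with the paper's proof.
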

\begin{proof}
    See Appendix~\ref{app:thm4.1}.
\end{proof}

\begin{theorem}\label{thm:4.2}
    Suppose Assumptions \ref{asm:stepsize}, \ref{asm:joint_connectivity}, and \ref{asm:stepsize2}  hold, then the multi-agent system can achieve mean square average consensus, \textcolor{black}{i.e., $    \lim_{k\rightarrow\infty} \mathbb{E}[(x_i(k)-x^*)^2]=0, \forall i\in\mathcal{V}$, where
    \begin{align*}
        \mathbb{E}[x^*]&=\frac{1}{N}\sum_{i=1}^N x_i(0),\\
        Var(x^*) &\leq \frac{\overline{M}_2}{N}\sum_{t=0}^\infty \alpha^2(t),
    \end{align*}
    and $\overline{M}_2$ is defined in~\eqref{eq:M_2}.
    }
\end{theorem}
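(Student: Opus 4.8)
The goal is Theorem~\ref{thm:4.2}: under the time-varying topology with $L$-connectivity plus the stepsize conditions, the system achieves mean square average consensus with the stated expectation and variance bound on $x^*$. The plan is to combine the disagreement decay already secured in Theorem~\ref{thm:4.1} with an analysis of the average component $\bar{x}(k)\triangleq \frac{1}{N}\mathbf{1}_N^T\boldsymbol{x}(k)$, and then show the two together force every $x_i(k)$ to converge in $L^2$ to a common limit $x^*$.

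First I would project the dynamics~\eqref{eq:protocol_compact} onto the consensus direction. Multiplying by $\frac{1}{N}\mathbf{1}_N^T$ and using Lemma~\ref{lm:property}a) ($\mathbf{1}_N^T\bar{\mathcal{L}}=0$, and more precisely $\mathbf{1}_N^T\mathcal{L}(k)=0$ a.s. in the time-varying case, which should be re-established in the proof of Theorem~\ref{thm:4.1}) one gets
\begin{equation*}
    \bar{x}(k+1) = \bar{x}(k) + \frac{\alpha(k)}{N}\mathbf{1}_N^T\boldsymbol{v}(k).
\end{equation*}
Hence $\bar{x}(k) = \bar{x}(0) + \sum_{t=0}^{k-1}\frac{\alpha(t)}{N}\mathbf{1}_N^T\boldsymbol{v}(t)$ with $\bar{x}(0)=\frac{1}{N}\sum_{i=1}^N x_i(0)$. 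Since $\mathbb{E}[\boldsymbol{v}(t)|\mathcal{F}_t]=0$ (Lemma~\ref{lm:property}d)), the partial sums form a martingale, and the key is a second-moment bound $\mathbb{E}[\|\boldsymbol{v}(t)\|^2|\mathcal{F}_t]\leq \overline{M}_2(1+V(t))$ or, more usefully, a uniform bound of the form $\mathbb{E}[(\mathbf{1}_N^T\boldsymbol{v}(t))^2|\mathcal{F}_t]\leq \overline{M}_2$ once we know $\mathbb{E}[V(t)]$ and $\mathbb{E}[\|\boldsymbol{x}(t)\|^2]$ are bounded --- the latter following from Theorem~\ref{thm:4.1} together with Lemma~\ref{lm:property}e)-type arguments. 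Using the martingale convergence theorem under $\sum_t\alpha^2(t)<\infty$, the sequence $\bar{x}(k)$ converges a.s. and in $L^2$ to a limit $x^*$; taking expectations through the telescoped sum and using the conditional-mean-zero property gives $\mathbb{E}[x^*]=\bar{x}(0)=\frac{1}{N}\sum_{i=1}^N x_i(0)$, and orthogonality of the martingale increments gives $\mathrm{Var}(x^*)\leq \frac{\overline{M}_2}{N}\sum_{t=0}^\infty\alpha^2(t)$.

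Next I would close the loop: for each $i$, write $x_i(k)-x^* = (x_i(k)-\bar{x}(k)) + (\bar{x}(k)-x^*)$, so
\begin{equation*}
    \mathbb{E}[(x_i(k)-x^*)^2] \leq 2\,\mathbb{E}[(x_i(k)-\bar{x}(k))^2] + 2\,\mathbb{E}[(\bar{x}(k)-x^*)^2].
\end{equation*}
The first term is at most $2\,\mathbb{E}[V(k)]\to 0$ by Theorem~\ref{thm:4.1}, and the second term $\to 0$ by the $L^2$ convergence of $\bar{x}(k)$ established above. This yields $\lim_{k\to\infty}\mathbb{E}[(x_i(k)-x^*)^2]=0$ for all $i$, which together with the computed mean and variance bound is exactly the claim. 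I would organize this so that $\overline{M}_2$ is precisely the constant appearing in the second-moment bound on the noise term $\boldsymbol{v}(k)$ that is already needed (and defined, via~\eqref{eq:M_2}) in the proof of Theorem~\ref{thm:4.1}.

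The main obstacle I anticipate is \emph{not} the projection argument, which is routine, but establishing the uniform-in-$k$ bound $\mathbb{E}[(\mathbf{1}_N^T\boldsymbol{v}(k))^2]\leq \overline{M}_2$ (equivalently a linear-in-$V(k)$ bound together with $\sup_k\mathbb{E}[V(k)]<\infty$). This requires controlling the state-dependent noise terms $Y^{(1)}_{i,jl}$ and $Y^{(2)}_{ij}$ — whose conditional second moments scale with $x_j(k)$, $x_j(k)x_l(k)$ etc. — by $\mathbb{E}[\|\boldsymbol{x}(k)\|^2]$, and then bounding $\mathbb{E}[\|\boldsymbol{x}(k)\|^2]$ uniformly. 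The uniform boundedness of $\mathbb{E}[\|\boldsymbol{x}(k)\|^2]$ in turn rests on $\mathbb{E}[V(k)]\to 0$ (hence bounded) from Theorem~\ref{thm:4.1} plus $\bar{x}(k)$ being an $L^2$-bounded martingale, which is mildly circular and must be untangled carefully — one clean way is to first prove $\sup_k\mathbb{E}[\|\boldsymbol{x}(k)\|^2]<\infty$ by a Gronwall/recursive argument on $\mathbb{E}[\|\boldsymbol{x}(k)\|^2]$ using $\sum\alpha^2<\infty$ (as in Lemma~\ref{lm:property}e)), independently of the consensus analysis, and only then run the martingale argument. I would also need to double-check that the $L$-connectivity assumption and Assumption~\ref{asm:stepsize2} are used only inside Theorem~\ref{thm:4.1} and do not resurface here — they should not, since once $\mathbb{E}[V(k)]\to 0$ is in hand the averaging argument is topology-agnostic.
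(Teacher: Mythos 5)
Your proposal follows essentially the same route as the paper: project the dynamics onto the consensus direction to show the average evolves as a martingale, invoke the $L^p$ martingale convergence theorem under $\sum_t\alpha^2(t)<\infty$ to get the limit $x^*$ with the stated mean and variance bound, and combine with the disagreement decay $\mathbb{E}[V(k)]\to 0$ from Theorem~\ref{thm:4.1}. Your anticipated resolution of the second-moment issue is also what the paper does: the uniform bound $\mathbb{E}[\Vert\boldsymbol{w}(k)\Vert^2]\leq\overline{M}_2$ is obtained by a direct recursion on $\mathbb{E}[\Vert\boldsymbol{x}(k)\Vert^2]$ inside the proof of Theorem~\ref{thm:4.1}, independently of the consensus analysis, so there is no circularity.

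One concrete error to fix: your parenthetical claim that $\mathbf{1}_N^T\mathcal{L}(k)=0$ almost surely is false, and consequently the increment in your recursion for $\bar{x}(k)$ should not be $\frac{\alpha(k)}{N}\mathbf{1}_N^T\boldsymbol{v}(k)$. Since $\mathcal{D}(k)$ is built from the \emph{expected} weights $\bar{a}_{ij}$ while $\mathcal{A}(k)=[a_{ij}(k)]$ is the realized random matrix, the column sums of $\mathcal{L}(k)=\mathcal{D}(k)-\mathcal{A}(k)$ vanish only in expectation --- this is precisely the difficulty the paper emphasizes (the weight matrix is doubly stochastic in expectation, not almost surely). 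The correct projection uses $\mathbf{1}_N^T\bar{\mathcal{L}}=0$ together with the form $\boldsymbol{x}(k+1)=(I_N-\alpha(k)\bar{\mathcal{L}})\boldsymbol{x}(k)+\alpha(k)\boldsymbol{w}(k)$, where $\boldsymbol{w}(k)=\Delta\mathcal{L}(k)\boldsymbol{x}(k)+\boldsymbol{v}(k)$, so the martingale increment is $\frac{\alpha(k)}{N}\mathbf{1}_N^T\boldsymbol{w}(k)$; this still has conditional mean zero by Lemma~\ref{lm:property}c) and d), and it is $\mathbb{E}[\Vert\boldsymbol{w}(k)\Vert^2]$ (not $\mathbb{E}[\Vert\boldsymbol{v}(k)\Vert^2]$) that $\overline{M}_2$ bounds, which is what makes the stated variance bound come out with the constant $\overline{M}_2$. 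With that substitution your argument matches the paper's.
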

\begin{proof}
    See Appendix~\ref{app:thm4.2}.
\end{proof} 

\begin{lemma}\label{lm:as}
    \textcolor{black}{Suppose Assumptions~\ref{asm:stepsize} and \ref{asm:stepsize2} hold. If there exist integers $L_0>0$ and $k_0\geq 0$ such that $\inf_{m\geq 0}\lambda_2\left(\sum_{i=k_0+mL_0}^{k_0+(m+1)L_0-1}\mathcal{L}(i)\right)>0, \forall l\in\{0,1,\ldots, L_0-1\}$, then under the protocol~\eqref{eq:protocol}, it holds that
    \begin{equation}
        \lim_{m\rightarrow\infty} V(k_0+mL_0) = 0 \quad a.s.
    \end{equation}
    }
\end{lemma}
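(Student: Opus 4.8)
The plan is to pass from the per-step update to a per-block recursion for the disagreement and then apply the Robbins--Siegmund almost-supermartingale theorem. Write $\boldsymbol{\delta}(k)\triangleq(I_N-J)\boldsymbol{x}(k)$, so that $V(k)=\Vert\boldsymbol{\delta}(k)\Vert^2$, and set $t_m\triangleq k_0+mL_0$, $\mathcal{H}_m\triangleq\mathcal{F}_{t_m}$. Iterating~\eqref{eq:protocol_compact} across one block gives $\boldsymbol{x}(t_{m+1})=\Phi_m\boldsymbol{x}(t_m)+\boldsymbol{\zeta}_m$, with $\Phi_m\triangleq\prod_{i=t_m}^{t_{m+1}-1}(I_N-\alpha(i)\mathcal{L}(i))$ ordered so this identity holds, and $\boldsymbol{\zeta}_m\triangleq\sum_{i=t_m}^{t_{m+1}-1}\big(\prod_{r=i+1}^{t_{m+1}-1}(I_N-\alpha(r)\mathcal{L}(r))\big)\alpha(i)\boldsymbol{v}(i)$. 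Projecting by $I_N-J$ and splitting $\boldsymbol{x}(t_m)=\boldsymbol{\delta}(t_m)+\bar{x}(t_m)\mathbf{1}_N$ with $\bar{x}(t_m)\triangleq\frac{1}{N}\mathbf{1}_N^T\boldsymbol{x}(t_m)$ gives $\boldsymbol{\delta}(t_{m+1})=B_m\boldsymbol{\delta}(t_m)+\boldsymbol{\ell}_m+\boldsymbol{\eta}_m$, where $B_m\triangleq(I_N-J)\Phi_m(I_N-J)$, $\boldsymbol{\eta}_m\triangleq(I_N-J)\boldsymbol{\zeta}_m$, and the ``leakage'' term $\boldsymbol{\ell}_m\triangleq\bar{x}(t_m)(I_N-J)\Phi_m\mathbf{1}_N$ is present precisely because $\mathcal{L}(i)\mathbf{1}_N\ne 0$ in general (whereas $\bar{\mathcal{L}}\mathbf{1}_N=0$, Lemma~\ref{lm:property}a).

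The heart of the argument is a one-block contraction bound for $B_m$. Expand the ordered product as $\Phi_m=I_N-\sum_{i=t_m}^{t_{m+1}-1}\alpha(i)\mathcal{L}(i)+R_m$, where every term of $R_m$ carries at least two stepsize factors; Assumption~\ref{asm:stepsize2} forces all in-block stepsizes to lie within fixed multiples of $\alpha(t_m)$, so $\Vert R_m\Vert=O(\alpha^2(t_m))$ with the constant a polynomial in $\max_{t_m\le i<t_{m+1}}\Vert\mathcal{L}(i)\Vert$. For $\boldsymbol{\delta}\perp\mathbf{1}_N$ this yields $\Vert B_m\boldsymbol{\delta}\Vert^2=\boldsymbol{\delta}^T\Phi_m^T(I_N-J)\Phi_m\boldsymbol{\delta}=\Vert\boldsymbol{\delta}\Vert^2-2\boldsymbol{\delta}^T\big(\sum_i\alpha(i)\mathcal{L}(i)\big)\boldsymbol{\delta}+(\text{order-two-and-higher remainder})$. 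Taking $\mathbb{E}[\cdot\mid\mathcal{H}_m]$ and writing $\mathcal{L}(i)=\bar{\mathcal{L}}-\Delta\mathcal{L}(i)$, the fluctuation $\sum_i\alpha(i)\Delta\mathcal{L}(i)$ is conditionally centered (Lemma~\ref{lm:property}c), the remainder contributes $O(\alpha^2(t_m))V(t_m)$ by the finite moments of $\mathcal{L}(i)$, and the block-connectivity hypothesis $\lambda_2\big(\sum_i\mathcal{L}(i)\big)\ge c_0>0$ (cf.~\eqref{eq:lambda2}; in the time-invariant case already $\lambda_2(\bar{\mathcal{L}})>0$ from Lemma~\ref{lm:property}b) makes the drift uniformly bounded below, so $\boldsymbol{\delta}^T\big(\sum_i\alpha(i)\mathcal{L}(i)\big)\boldsymbol{\delta}$ has conditional mean $\ge c_1\alpha(t_m)V(t_m)$. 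Hence, for all $m$ with $\alpha(t_m)$ small enough, $\mathbb{E}[\Vert B_m\boldsymbol{\delta}(t_m)\Vert^2\mid\mathcal{H}_m]\le(1-c_2\alpha(t_m)+C\alpha^2(t_m))V(t_m)$.

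Next I would bound the remaining terms and assemble the recursion. Using $\mathcal{L}(i)\mathbf{1}_N=-\Delta\mathcal{L}(i)\mathbf{1}_N$, one has $(I_N-J)\Phi_m\mathbf{1}_N=\sum_i\alpha(i)(I_N-J)\Delta\mathcal{L}(i)\mathbf{1}_N+O(\alpha^2(t_m))$, whose leading part is conditionally centered, so $\mathbb{E}[\Vert\boldsymbol{\ell}_m\Vert^2\mid\mathcal{H}_m]\le C\alpha^2(t_m)\bar{x}(t_m)^2$ and the cross term $2(B_m\boldsymbol{\delta}(t_m))^T\boldsymbol{\ell}_m$ has conditional mean of order $\alpha^2(t_m)|\bar{x}(t_m)|\,\Vert\boldsymbol{\delta}(t_m)\Vert\le O(\alpha^2(t_m))(V(t_m)+\bar{x}(t_m)^2)$; by Lemma~\ref{lm:property}d and the orthogonality relations~\eqref{eq:e_Y1}--\eqref{eq:e_Y2}, $\mathbb{E}[\boldsymbol{\eta}_m\mid\mathcal{H}_m]=0$, all cross terms involving $\boldsymbol{\eta}_m$ vanish in conditional mean (each term of $\boldsymbol{v}(i)$ is linearly — hence oddly — dependent on either a fresh channel or $n_a(i)$, which nothing else compensates), and the explicit form of $\boldsymbol{v}(i)$ gives the state-dependent bound $\mathbb{E}[\Vert\boldsymbol{\eta}_m\Vert^2\mid\mathcal{H}_m]\le C\alpha^2(t_m)(1+\mathbb{E}[\Vert\boldsymbol{x}(t_m)\Vert^2\mid\mathcal{H}_m])$. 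Collecting these, and using $\Vert\boldsymbol{x}(t_m)\Vert^2=V(t_m)+N\bar{x}(t_m)^2$, I obtain
\[
  \mathbb{E}[V(t_{m+1})\mid\mathcal{H}_m]\le\big(1-c_2\alpha(t_m)+C'\alpha^2(t_m)\big)V(t_m)+C'\alpha^2(t_m)\big(1+\bar{x}(t_m)^2\big).
\]
Since $\mathbf{1}_N^T\boldsymbol{x}(k)$ is a martingale with conditionally centered increments $\alpha(k)\mathbf{1}_N^T(\Delta\mathcal{L}(k)\boldsymbol{x}(k)+\boldsymbol{v}(k))$ whose second moments are summable (by $\sum_k\alpha^2(k)<\infty$ with the moment bounds of Lemma~\ref{lm:property}e), $\bar{x}(k)$ converges a.s., hence $\sup_m\bar{x}(t_m)^2<\infty$ a.s. and $\sum_m C'\alpha^2(t_m)(1+\bar{x}(t_m)^2)<\infty$ a.s.; moreover $\sum_m\alpha(t_m)\ge L_0^{-1}\sum_k\alpha(k)=\infty$ by monotonicity of $\alpha$. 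Applying the Robbins--Siegmund theorem to the displayed inequality forces $V(t_m)\to 0$ a.s., which is the claim.

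The main obstacle is the one-block contraction estimate. Unlike in classical (synchronous, error-free) consensus analysis, $\mathcal{L}(i)$ here is not symmetric, is not a bona fide graph Laplacian (its row sums $\sum_j\bar{a}_{ij}-\sum_j a_{ij}(i)$ need not vanish and it need not be positive semidefinite), and has entries scaling with the unbounded channel powers $|h_{ij}(i)|^2$; on top of that the stepsizes vary within a block, so $\sum_i\alpha(i)\mathcal{L}(i)$ cannot be replaced by $\alpha(t_m)\sum_i\mathcal{L}(i)$. Making the estimate go through therefore requires separating the benign symmetric positive-semidefinite mean part, governed by $\lambda_2(\bar{\mathcal{L}})>0$ (Lemma~\ref{lm:property}b) together with the block-connectivity hypothesis, from the conditionally centered fluctuation $\sum_i\alpha(i)\Delta\mathcal{L}(i)$, keeping the per-block contraction rate of order $\alpha(t_m)$ uniformly in $m$, and controlling every order-two-and-higher remainder as well as all cross terms in conditional mean square via the finite moments of $\mathcal{L}(i)$ and of the state-dependent noise $\boldsymbol{v}(i)$ supplied by Lemma~\ref{lm:property}e — this last point being exactly where the non-coherent, noisy structure makes the analysis harder than in the state-independent-noise setting.
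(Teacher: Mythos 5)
Your proposal is correct in substance, but it reaches the conclusion by a genuinely different route from the paper's proof. The paper's Appendix~\ref{app:lm_as} iterates the recursion written with the \emph{deterministic} matrices $I_N-\alpha(i)\bar{\mathcal{L}}(i)$, so the block transition matrix $P(\cdot)$ fixes $\mathbf{1}_N$ exactly and no leakage term appears (the fluctuation $\Delta\mathcal{L}(i)\boldsymbol{x}(i)$ is folded into the martingale-difference noise $\boldsymbol{w}(i)$); it then \emph{discards} the negative drift altogether, keeping only $\mathbb{E}[V(k_0+(m+1)L_0)\mid\mathcal{F}_{k_0+mL_0}]\le(1+\alpha^2(\cdot)M_L)V(k_0+mL_0)+(\text{summable})$, concludes from the almost-supermartingale convergence theorem that $V(k_0+mL_0)$ converges a.s.\ to \emph{some} limit, and identifies that limit as zero only by invoking the separately proven mean-square result $\lim_k\mathbb{E}[V(k)]=0$, which is the first line of the paper's proof. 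You instead work with products of the \emph{random} matrices $I_N-\alpha(i)\mathcal{L}(i)$ --- which forces you to introduce and control the leakage term $\boldsymbol{\ell}_m$ via a.s.\ convergence of the running average $\bar{x}(t_m)$ --- and, crucially, you retain the conditional drift $-c_2\alpha(t_m)V(t_m)$ obtained from $\lambda_2\bigl(\sum_i\bar{\mathcal{L}}(i)\bigr)>0$ together with Assumption~\ref{asm:stepsize2}, so that Robbins--Siegmund plus $\sum_m\alpha(t_m)=\infty$ yields $V(t_m)\to0$ a.s.\ directly. What your route buys is self-containedness: the lemma no longer needs the $L^1$ convergence as an external input, and the block-connectivity hypothesis is used exactly where it matters. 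What it costs is the one-block \emph{conditional} contraction estimate (the paper only ever proves the analogous contraction in expectation, in the proof of Theorem~\ref{thm:4.1}) together with the leakage bookkeeping; your treatment of both --- conditional centering of $\sum_i\alpha(i)\Delta\mathcal{L}(i)$ and of each summand of $\boldsymbol{v}(i)$ by odd symmetry in the fresh channel and noise variables, $O(\alpha^2(t_m))$ remainders controlled by the moment bounds of Lemma~\ref{lm:property}, and absorption of $\Vert\boldsymbol{x}(t_m)\Vert^2=V(t_m)+N\bar{x}(t_m)^2$ into the drift and the summable term --- is sound. (Both you and the paper's own proof read the hypothesis as a statement about $\bar{\mathcal{L}}(i)$ rather than the literal random $\mathcal{L}(i)$, which is clearly what is intended.)
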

\begin{proof}
    See Appendix~\ref{app:lm_as}.
\end{proof}

\begin{theorem}\label{thm:4.3}
    Suppose Assumptions~\ref{asm:stepsize}, ~\ref{asm:joint_connectivity}, and \ref{asm:stepsize2} hold, then the multi-agent system can achieve almost sure consensus under the protocol~\eqref{eq:protocol}.
\end{theorem}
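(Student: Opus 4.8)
The plan is to decompose $\boldsymbol{x}(k)=\bar{x}(k)\mathbf{1}_N+\boldsymbol{e}(k)$ with $\bar{x}(k)\triangleq\frac{1}{N}\mathbf{1}_N^T\boldsymbol{x}(k)$ and $\boldsymbol{e}(k)\triangleq(I_N-J)\boldsymbol{x}(k)$, so that $V(k)=\Vert\boldsymbol{e}(k)\Vert^2$, to show that each of $\bar{x}(k)$ and $\boldsymbol{e}(k)$ converges almost surely, and to reuse the mean-square results already in hand. I will use the time-varying counterpart of Lemma~\ref{lm:property}: $\mathbf{1}_N^T\bar{\mathcal{L}}(k)=0$, $\bar{\mathcal{L}}(k)\mathbf{1}_N=0$, $\bar{\mathcal{L}}(k)$ symmetric positive semidefinite, $\mathbb{E}[\boldsymbol{w}(k)\mid\mathcal{F}_k]=0$, and $\mathbb{E}[\Vert\boldsymbol{w}(k)\Vert^2\mid\mathcal{F}_k]\le C(1+\Vert\boldsymbol{x}(k)\Vert^2)$ (the $Y^{(1)},Y^{(2)}$ and $\Delta\mathcal{L}(k)\boldsymbol{x}(k)$ contributions are exactly those already bounded in the proof of Lemma~\ref{lm:property}); I will also use $\sup_k\mathbb{E}[\Vert\boldsymbol{x}(k)\Vert^2]<\infty$, a by-product of Theorem~\ref{thm:4.2}. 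First, projecting the protocol onto $\mathbf{1}_N$ annihilates the Laplacian term and leaves $\bar{x}(k+1)=\bar{x}(k)+\frac{\alpha(k)}{N}\mathbf{1}_N^T\boldsymbol{w}(k)$, a martingale whose orthogonal increments obey $\sum_k\mathbb{E}[(\bar{x}(k+1)-\bar{x}(k))^2]\le C'\sum_k\alpha^2(k)<\infty$ by Assumption~\ref{asm:stepsize}a; hence $\bar{x}(k)$ converges a.s.\ and in $L^2$ to a random variable $x^*$, with $\mathbb{E}[x^*]=\frac{1}{N}\sum_{i=1}^N x_i(0)$ (since $\mathbb{E}[\bar{x}(k)]$ is constant in $k$) and $Var(x^*)<\infty$.

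Next I would handle the disagreement. Projecting the protocol onto $\mathbf{1}_N^\perp$ gives $\boldsymbol{e}(k+1)=(I_N-\alpha(k)\bar{\mathcal{L}}(k))\boldsymbol{e}(k)+\alpha(k)(I_N-J)\boldsymbol{w}(k)$, where $(I_N-J)\boldsymbol{w}(k)$ is a martingale difference. Since $\bar{\mathcal{L}}(k)$ is a weighted graph Laplacian, Assumption~\ref{asm:stepsize}b together with the Gershgorin bound $\lambda_N(\bar{\mathcal{L}}(k))\le 2\max_i\sum_j\bar{a}_{ij}(k)$ gives $\alpha(k)\lambda_N(\bar{\mathcal{L}}(k))<2$, so $I_N-\alpha(k)\bar{\mathcal{L}}(k)$ is non-expansive on $\mathbf{1}_N^\perp$. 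Squaring, conditioning on $\mathcal{F}_k$, and using that the cross term vanishes,
\[
\mathbb{E}[V(k+1)\mid\mathcal{F}_k]\le V(k)+C\alpha^2(k)(1+\Vert\boldsymbol{x}(k)\Vert^2).
\]
Taking full expectations and using $\sup_k\mathbb{E}[\Vert\boldsymbol{x}(k)\Vert^2]<\infty$ together with Assumption~\ref{asm:stepsize}a, the $\mathcal{F}_k$-measurable additive term is almost surely summable, so the Robbins--Siegmund almost-supermartingale convergence theorem yields that $V(k)$ converges almost surely to a finite limit $V_\infty\ge 0$.

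It then remains to pin down $V_\infty$. By Theorem~\ref{thm:4.1} — which is precisely where $L$-connectivity (Assumption~\ref{asm:joint_connectivity}) is needed — we have $\mathbb{E}[V(k)]\to 0$, so Fatou's lemma gives $\mathbb{E}[V_\infty]\le\liminf_k\mathbb{E}[V(k)]=0$, hence $V_\infty=0$ almost surely. Consequently $\boldsymbol{e}(k)\to 0$ a.s., and combining with the first step, $x_i(k)=\bar{x}(k)+e_i(k)\to x^*$ almost surely for every $i\in\mathcal{V}$, which is the asserted almost sure consensus with the moments of $x^*$ identified above. An alternative route for the a.s.\ decay of $V(k)$, closely paralleling the time-invariant proof of Theorem~\ref{thm:ms}, is to invoke Lemma~\ref{lm:as} along a (possibly adaptively chosen) windowing whose Laplacian-sum $\lambda_2$ stays bounded away from zero — Assumption~\ref{asm:joint_connectivity} makes the union of the potential graphs over each length-$L$ window connected, and each potential link $(i,j)$ is activated with probability $p_i(1-p_j)+p_j(1-p_i)>0$ — and then filling the gaps between window endpoints with the per-step estimate above.

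The main obstacle is ensuring that the additive perturbation of the $V$-recursion is summable. Non-coherent transmission injects noise whose conditional second moment scales like $\Vert\boldsymbol{x}(k)\Vert^2$ rather than being uniformly bounded as for state-independent additive noise, and under merely jointly-connected time-varying topology there is no per-step contraction of $V$ to absorb it; the argument therefore rests on (a) the uniform second-moment bound coming out of Theorem~\ref{thm:4.2}, and (b) the fact that Assumption~\ref{asm:stepsize}b makes the drift on $\mathbf{1}_N^\perp$ a genuine contraction-or-identity (growth factor $\le 1$, not $1+O(\alpha(k))$), so that what has to be summed is only the perturbation and not a multiple of $V(k)$ itself. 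If instead one goes through Lemma~\ref{lm:as}, the delicate point shifts to certifying that the chosen windowing keeps the $\lambda_2$ of the random Laplacian sums bounded away from zero.
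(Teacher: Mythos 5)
Your proof is correct, and it reaches the conclusion by a genuinely different (and arguably leaner) route than the paper. The paper's argument goes through Lemma~\ref{lm:as}: it forms the length-$L$ windowed recursion for $V$, whose drift coefficient is $1+\alpha^2((m+1)L)M_L$, applies the Robbins--Siegmund theorem along each residue class $l\in\{0,1,\ldots,L-1\}$ to get $V(l+mL)\to 0$ a.s.\ (using $\mathbb{E}[V(k)]\to0$ from Theorem~\ref{thm:4.1} to identify the limit), and then stitches the $L$ subsequences together; the same $L^2$-bounded-martingale argument as yours handles $\frac{1}{N}\mathbf{1}_N^T\boldsymbol{x}(k)$. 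You instead observe that on $\mathbf{1}_N^\perp$ the one-step map $I_N-\alpha(k)\bar{\mathcal{L}}(k)$ is already non-expansive under Assumption~\ref{asm:stepsize}b (via Gershgorin, $\alpha(k)\lambda_N(\bar{\mathcal{L}}(k))<2$, noting $\bar a_{ij}(k)\le\bar a_{ij}$ when links fail), so Robbins--Siegmund applies directly to the per-step recursion $\mathbb{E}[V(k+1)\mid\mathcal{F}_k]\le V(k)+\alpha^2(k)\mathbb{E}[\Vert(I_N-J)\boldsymbol{w}(k)\Vert^2\mid\mathcal{F}_k]$, with the additive term a.s.\ summable because $\sup_k\mathbb{E}[\Vert\boldsymbol{x}(k)\Vert^2]<\infty$; Fatou plus Theorem~\ref{thm:4.1} then forces the a.s.\ limit of $V$ to be zero. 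This buys you a proof with no windowing, no residue-class bookkeeping, and no constant $M_L$; connectivity enters only through the already-proved mean-square decay, exactly where it belongs. Two small points to tighten: the uniform bound $\sup_k\mathbb{E}[\Vert\boldsymbol{x}(k)\Vert^2]<\infty$ is established inside the proof of Theorem~\ref{thm:4.1} (it follows from the per-step growth factor $1+O(\alpha^2(k))$ and $\sum_k\alpha^2(k)<\infty$, without connectivity) rather than being a by-product of Theorem~\ref{thm:4.2}, and you should state the conditional version $\mathbb{E}[\Vert\boldsymbol{w}(k)\Vert^2\mid\mathcal{F}_k]\le C(1+\Vert\boldsymbol{x}(k)\Vert^2)$ explicitly, which follows from the computations in Lemma~\ref{lm:property}~e) since the time-$k$ channel, slot, and noise variables are independent of $\mathcal{F}_k$. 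Your closing sketch of an alternative via Lemma~\ref{lm:as} with adaptive windowing is unnecessary and less precise than your main argument; the main argument stands on its own.
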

\begin{proof}
    See Appendix~\ref{app:thm4.3}.
\end{proof}

\section{Simulation Results}


In this subsection, we first investigate the convergence capability of the protocol proposed in~\cite{molinari2018exploiting} under noisy channels and non-coherent transmission, and then present simulation results to show the effectiveness of our scheme. 

In~\cite{molinari2018exploiting}, agent $i$ needs to transmit two messages, i.e., the current information state $x_i(k)$ and a constant $u_i(k)\equiv 1$, via the same wireless channel at time step $k$. After receiving, agent $i$ can obtain two values as follows
\begin{align}
    y_i^\mathrm{(1)}(k) &= \left\Vert \sum_{j\in\mathcal{V}}  h_{ij}(k)\sqrt{x_j(k)} + n_i(k) \right\Vert^2,\\
     y_i^\mathrm{(2)}(k) &= \left\Vert \sum_{j\in\mathcal{V}}  h_{ij}(k) + n_i(k) \right\Vert^2.
\end{align}
Then, agent $i$ updates its information state as follows
\begin{equation}\label{eq:comparison}
    x_i(k+1) = \frac{y_i^\mathrm{(1)}(k)}{y_i^\mathrm{(2)}(k)}.
\end{equation}

We consider the physical network topology is a time-invariant complete graph, and there are $N=5$ agents in the system, i.e., $\mathcal{V}=\{1,2,\ldots,5\}$. We set the initial information states as $x_i(0)=i, \forall i\in\mathcal{V}$. The fading channels $h_{ij}(k), \forall i,j\in\mathcal{V}, k\geq 0$ are modeled as i.i.d. random processes with $\Lambda_{ij} = 1, \forall i,j\in\mathcal{V}$, and the channel noises $n_i(k), \forall i\in\mathcal{V}, k\geq 0$ are i.i.d. random processes with $\sigma_i^2=-60dB, \forall i\in\mathcal{V}$. 
Fig.~\ref{fig:comparison} shows that the information state of each agent diverges under the protocol~\eqref{eq:comparison}, which is vulnerable to channel noises and asynchronization of transmitters, even thought under such a simple setting.

\begin{figure}[h]
	\centering
	\includegraphics[width=3in]{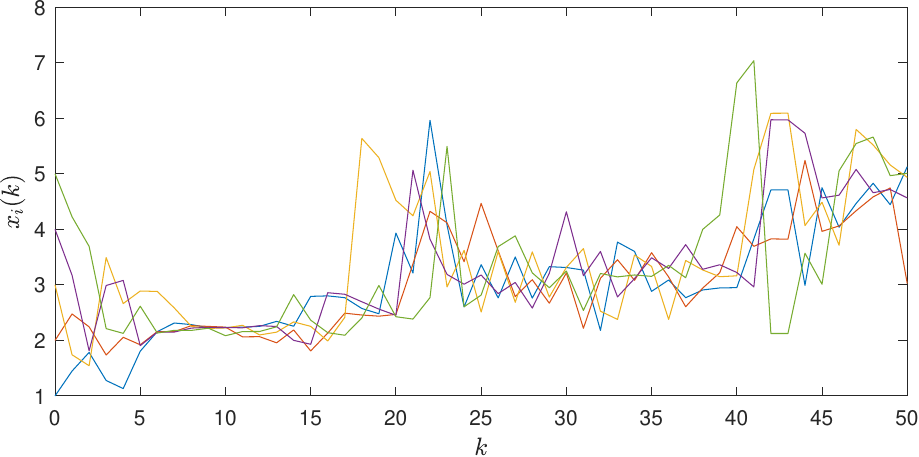}
	\caption{Evolution of information states $x_i(k), \forall i \in \mathcal{V}$ under the protocol~\eqref{eq:comparison}~\cite{molinari2018exploiting} with $\sigma^2_i = -60dB, \forall i\in\mathcal{V}$.}
	\label{fig:comparison}
\end{figure}


\begin{figure}[h]
	\centering
	\includegraphics[width=0.9\linewidth]{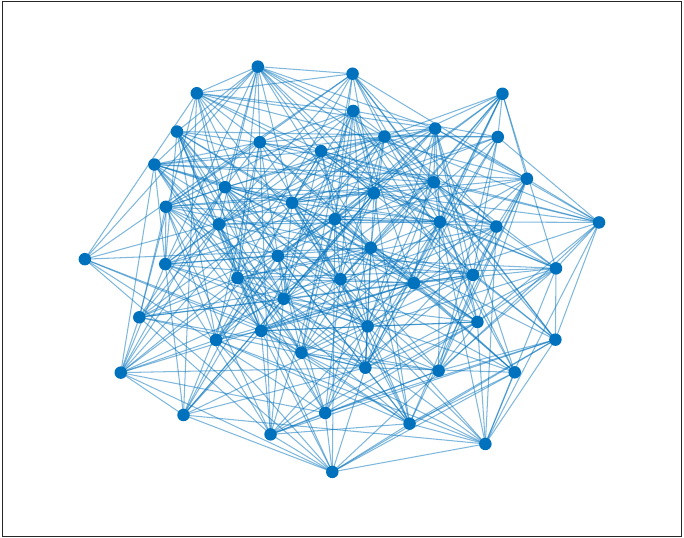}
	\caption{\textcolor{black}{A network topology with 50 agents.}}
	\label{fig:graph}
\end{figure}

\textcolor{black}{In the following, we will present the simulation results of a system with time-varying topology. There are $50$ agents in this system. The network topology is a $L$-connected time-varying graph with $L=3$. Specifically, the time-varying graph is generated by sampling from the network topology presented in Fig.~\ref{fig:graph}.
At time steps $k\ne nL-1, n\in\{1,2,\ldots\}$, we randomly pick each agent with probability $q = 0.6$. At time steps $k = nL-1, n\in\{1,2,\ldots\}$, we pick the agents which were not picked in the previous $L-1$ iterations, and also at least one of the other agents. 
The initial information states are uniformly selected from $[0,100]$. Moreover, we set $p_i = 0.5, \forall i\in\mathcal{V}$. Other parameters are provided along with the figures.
}


\textcolor{black}{Fig.~\ref{fig:noise} shows the convergence behavior of the system under different noise powers. It can be seen that the system can achieve average consensus asymptotically, and the variance of $x^*$ is influenced by the noise power $\sigma_i^2$. Generally, the higher the noise power is, the worse the convergence performance is. However, when the noise power is small enough, the effect of channel fading becomes the major factor that influences the convergence performance.
Then, Fig.~\ref{fig:fading} presents the convergence behavior of the system under different channel fading, which shows that a larger $\Lambda_{ij}$ leads to a higher variance. 
}

\begin{figure}[h]
	\centering
	\includegraphics[width=\linewidth]{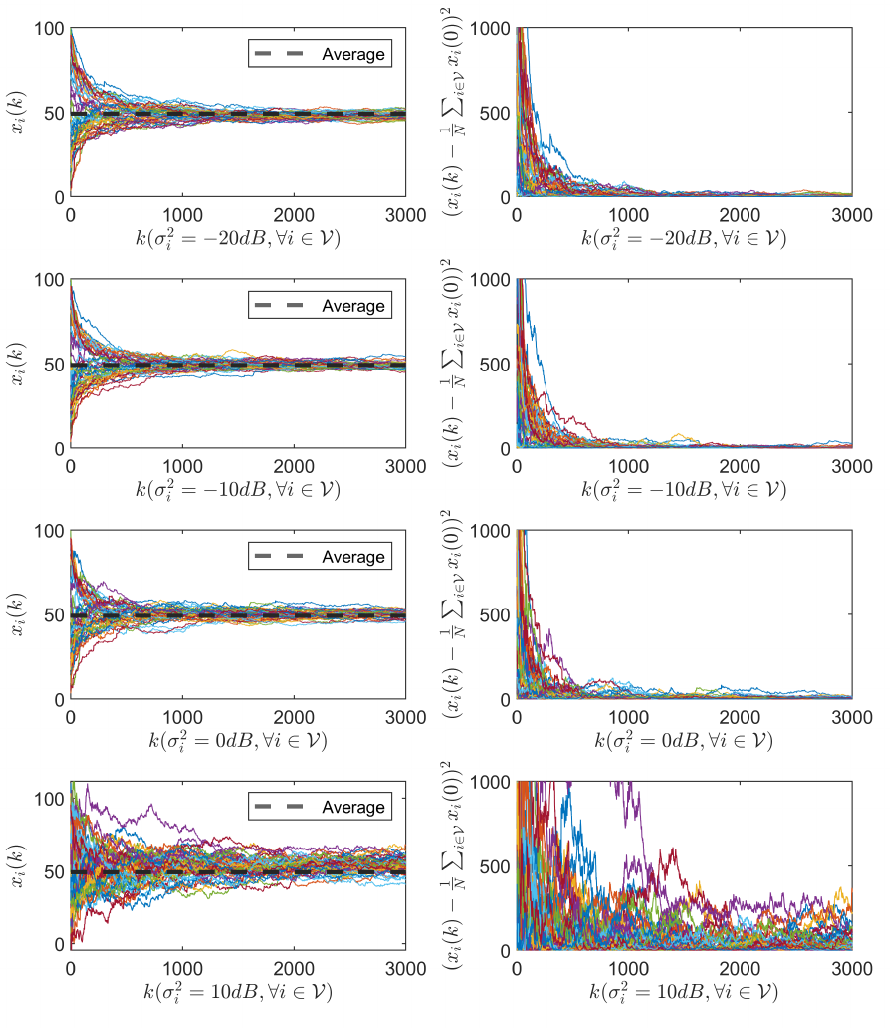}
	\caption{\textcolor{black}{Evolution of information state $x_i(k)$ and the mean square error $\left(x_i(k)-\frac{1}{N}\sum_{i\in\mathcal{V}}x_i(0)\right)^2, \forall i \in \mathcal{V}$ under different noise power ($\Lambda_{ij}=2, \forall i,j\in\mathcal{V}$).}}
	\label{fig:noise}
\end{figure}

\begin{figure}[h]
	\centering
	\includegraphics[width=\linewidth]{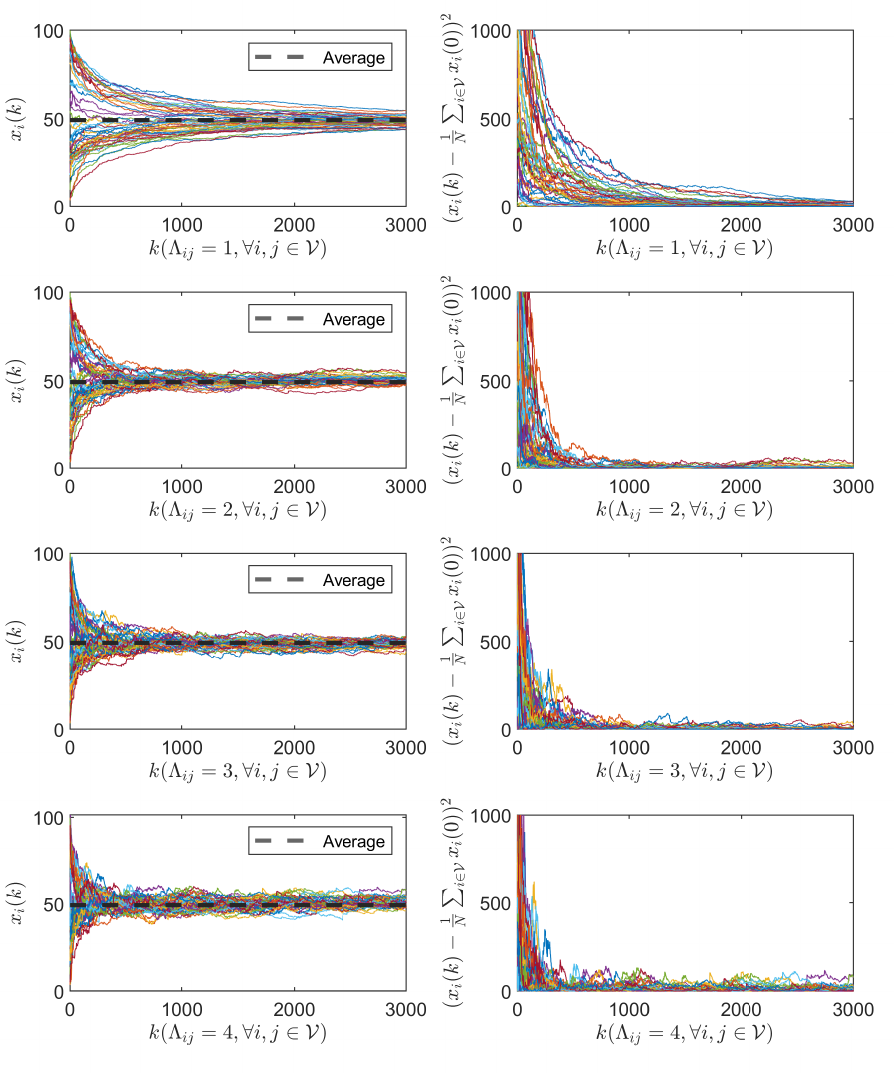}
	\caption{\textcolor{black}{Evolution of information state $x_i(k)$ and the mean square error $\left(x_i(k)-\frac{1}{N}\sum_{i\in\mathcal{V}}x_i(0)\right)^2, \forall i \in \mathcal{V}$ under different channel fading ($\sigma_i^2 = 0dB, \forall i \in \mathcal{V}$).}}
	\label{fig:fading}
\end{figure}

\section{Conclusion}
In this paper, we consider the distributed average consensus problem in a multi-agent system, where over-the-air aggregation is adopted to save communication resources. However, over-the-air aggregation is vulnerable to channel noises and non-coherent transmission. To handle the effect caused by noises and non-coherent transmission, we propose a stochastic approximate based protocol, under which the system can achieve mean square average consensus and almost sure consensus. Furthermore, we extend the analysis to the scenarios with time-varying network topology. Numerical simulations show the effectiveness of the proposed protocol.

\bibliographystyle{unsrt}
\bibliography{ref.bib}

\begin{thebibliography}{10}

\bibitem{ren2007consensus}
Wei Ren.
\newblock Consensus strategies for cooperative control of vehicle formations.
\newblock {\em IET Control Theory \& Applications}, 1(2):505--512, 2007.

\bibitem{olfati2006flocking}
Reza Olfati-Saber.
\newblock Flocking for multi-agent dynamic systems: Algorithms and theory.
\newblock {\em IEEE Transactions on automatic control}, 51(3):401--420, 2006.

\bibitem{tsianos2012consensus}
Konstantinos~I Tsianos, Sean Lawlor, and Michael~G Rabbat.
\newblock Consensus-based distributed optimization: Practical issues and
  applications in large-scale machine learning.
\newblock In {\em {Proceedings of the 50th IEEE Annual Allerton Conference on
  Communication, Control, and Computing}}, pages 1543--1550, 2012.

\bibitem{kingston2006discrete}
Derek~B Kingston and Randal~W Beard.
\newblock Discrete-time average-consensus under switching network topologies.
\newblock In {\em 2006 American Control Conference}, pages 6--pp, 2006.

\bibitem{li2010consensus}
Tao Li and Ji-Feng Zhang.
\newblock Consensus conditions of multi-agent systems with time-varying
  topologies and stochastic communication noises.
\newblock {\em IEEE Transactions on Automatic Control}, 55(9):2043--2057, 2010.

\bibitem{li2013consensus}
Dequan Li, Qipeng Liu, Xiaofan Wang, and Zongli Lin.
\newblock Consensus seeking over directed networks with limited information
  communication.
\newblock {\em Automatica}, 49(2):610--618, 2013.

\bibitem{nedic2016convergence}
Angelia Nedi{\'c} and Ji~Liu.
\newblock On convergence rate of weighted-averaging dynamics for consensus
  problems.
\newblock {\em IEEE Transactions on Automatic Control}, 62(2):766--781, 2016.

\bibitem{saldana2017resilient}
David Saldana, Amanda Prorok, Shreyas Sundaram, Mario~FM Campos, and Vijay
  Kumar.
\newblock Resilient consensus for time-varying networks of dynamic agents.
\newblock In {\em 2017 American control conference}, pages 252--258, 2017.

\bibitem{zhang2020state}
Ji-Lie Zhang, Xiang Chen, and Guoxiang Gu.
\newblock State consensus for discrete-time multiagent systems over
  time-varying graphs.
\newblock {\em IEEE Transactions on Automatic Control}, 66(1):346--353, 2020.

\bibitem{ding2015event}
Derui Ding, Zidong Wang, Bo~Shen, and Guoliang Wei.
\newblock Event-triggered consensus control for discrete-time stochastic
  multi-agent systems: The input-to-state stability in probability.
\newblock {\em Automatica}, 62:284--291, 2015.

\bibitem{ding2017overview}
Lei Ding, Qing-Long Han, Xiaohua Ge, and Xian-Ming Zhang.
\newblock An overview of recent advances in event-triggered consensus of
  multiagent systems.
\newblock {\em IEEE transactions on cybernetics}, 48(4):1110--1123, 2017.

\bibitem{he2013sats}
Jianping He, Peng Cheng, Ling Shi, and Jiming Chen.
\newblock Sats: Secure average-consensus-based time synchronization in wireless
  sensor networks.
\newblock {\em IEEE Transactions on Signal Processing}, 61(24):6387--6400,
  2013.

\bibitem{kuriki2014consensus}
Yasuhiro Kuriki and Toru Namerikawa.
\newblock Consensus-based cooperative formation control with collision
  avoidance for a multi-uav system.
\newblock In {\em 2014 American Control Conference}, pages 2077--2082, 2014.

\bibitem{elijah2015comprehensive}
Olakunle Elijah, Chee~Yen Leow, Tharek~Abdul Rahman, Solomon Nunoo, and
  Solomon~Zakwoi Iliya.
\newblock A comprehensive survey of pilot contamination in massive mimo—5g
  system.
\newblock {\em IEEE Communications Surveys \& Tutorials}, 18(2):905--923, 2015.

\bibitem{kar2008distributed}
Soummya Kar and Jos{\'e}~MF Moura.
\newblock Distributed consensus algorithms in sensor networks with imperfect
  communication: Link failures and channel noise.
\newblock {\em IEEE Transactions on Signal Processing}, 57(1):355--369, 2008.

\bibitem{huang2009coordination}
Minyi Huang and Jonathan~H Manton.
\newblock Coordination and consensus of networked agents with noisy
  measurements: Stochastic algorithms and asymptotic behavior.
\newblock {\em SIAM Journal on Control and Optimization}, 48(1):134--161, 2009.

\bibitem{kar2009distributed}
Soummya Kar and Jos{\'e}~MF Moura.
\newblock Distributed consensus algorithms in sensor networks: Quantized data
  and random link failures.
\newblock {\em IEEE Transactions on Signal Processing}, 58(3):1383--1400, 2009.

\bibitem{lavaei2011quantized}
Javad Lavaei and Richard~M Murray.
\newblock Quantized consensus by means of gossip algorithm.
\newblock {\em IEEE Transactions on Automatic Control}, 57(1):19--32, 2011.

\bibitem{huang2010stochastic}
Minyi Huang, Subhrakanti Dey, Girish~N Nair, and Jonathan~H Manton.
\newblock Stochastic consensus over noisy networks with markovian and arbitrary
  switches.
\newblock {\em Automatica}, 46(10):1571--1583, 2010.

\bibitem{xu2018mean}
Liang Xu, Jianying Zheng, Nan Xiao, and Lihua Xie.
\newblock Mean square consensus of multi-agent systems over fading networks
  with directed graphs.
\newblock {\em Automatica}, 95:503--510, 2018.

\bibitem{zhu2021over}
Guangxu Zhu, Jie Xu, Kaibin Huang, and Shuguang Cui.
\newblock Over-the-air computing for wireless data aggregation in massive iot.
\newblock {\em IEEE Wireless Communications}, 28(4):57--65, 2021.

\bibitem{yang2020federated}
Kai Yang, Tao Jiang, Yuanming Shi, and Zhi Ding.
\newblock Federated learning via over-the-air computation.
\newblock {\em IEEE Transactions on Wireless Communications}, 19(3):2022--2035,
  2020.

\bibitem{amiri2020federated}
Mohammad~Mohammadi Amiri and Deniz G{\"u}nd{\"u}z.
\newblock Federated learning over wireless fading channels.
\newblock {\em IEEE Transactions on Wireless Communications}, 19(5):3546--3557,
  2020.

\bibitem{zhu2020one}
Guangxu Zhu, Yuqing Du, Deniz G{\"u}nd{\"u}z, and Kaibin Huang.
\newblock One-bit over-the-air aggregation for communication-efficient
  federated edge learning: Design and convergence analysis.
\newblock {\em IEEE Transactions on Wireless Communications}, 20(3):2120--2135,
  2020.

\bibitem{su2021data}
Liqun Su and Vincent~KN Lau.
\newblock Data and channel-adaptive sensor scheduling for federated edge
  learning via over-the-air gradient aggregation.
\newblock {\em IEEE Internet of Things Journal}, 9(3):1640--1654, 2021.

\bibitem{xu2021learning}
Chunmei Xu, Shengheng Liu, Zhaohui Yang, Yongming Huang, and Kai-Kit Wong.
\newblock Learning rate optimization for federated learning exploiting
  over-the-air computation.
\newblock {\em IEEE Journal on Selected Areas in Communications},
  39(12):3742--3756, 2021.

\bibitem{molinari2018exploiting}
Fabio Molinari, Slawomir Stanczak, and Jorg Raisch.
\newblock Exploiting the superposition property of wireless communication for
  average consensus problems in multi-agent systems.
\newblock In {\em {Proceedings of European Control Conference }}, pages
  1766--1772, 2018.

\bibitem{molinari2021max}
Fabio Molinari, Navneet Agrawal, S{\l}awomir Sta{\'n}czak, and J{\"o}rg Raisch.
\newblock Max-consensus over fading wireless channels.
\newblock {\em IEEE Transactions on Control of Network Systems}, 8(2):791--802,
  2021.

\bibitem{molinari2022over}
Fabio Molinari, Navneet Agrawal, S{\l}awomir Sta{\'n}czak, and J{\"o}rg Raisch.
\newblock Over-the-air max-consensus in clustered networks adopting half-duplex
  communication technology.
\newblock {\em IEEE Transactions on Control of Network Systems}, 2022.

\bibitem{michelusi2022decentralized}
Nicol{\`o} Michelusi.
\newblock Decentralized federated learning via non-coherent over-the-air
  consensus.
\newblock {\em arXiv preprint arXiv:2210.15806}, 2022.

\bibitem{haykin2008communication}
Simon Haykin.
\newblock {\em Communication systems}.
\newblock John Wiley \& Sons, 2008.

\bibitem{li2018distributed}
Huaqing Li, Qingguo L{\"u}, and Tingwen Huang.
\newblock Distributed projection subgradient algorithm over time-varying
  general unbalanced directed graphs.
\newblock {\em IEEE Transactions on Automatic Control}, 64(3):1309--1316, 2018.

\bibitem{xie2018distributed}
Pei Xie, Keyou You, Roberto Tempo, Shiji Song, and Cheng Wu.
\newblock Distributed convex optimization with inequality constraints over
  time-varying unbalanced digraphs.
\newblock {\em IEEE Transactions on Automatic Control}, 63(12):4331--4337,
  2018.

\bibitem{huang2009stochastic}
Minyi Huang and Jonathan~H Manton.
\newblock Stochastic consensus seeking with noisy and directed inter-agent
  communication: Fixed and randomly varying topologies.
\newblock {\em IEEE Transactions on Automatic Control}, 55(1):235--241, 2009.

\bibitem{polyak1987introduction}
Boris Polyak.
\newblock {\em Introduction to Optimization}.
\newblock New York: Optimization Software, Inc., 1987.

\bibitem{durrett2019probability}
Rick Durrett.
\newblock {\em Probability: theory and examples}, volume~49.
\newblock Cambridge university press, 2019.

\bibitem{robbins1971convergence}
Herbert Robbins and David Siegmund.
\newblock A convergence theorem for non negative almost supermartingales and
  some applications.
\newblock In {\em Optimizing methods in statistics}, pages 233--257. Elsevier,
  1971.

\end{thebibliography}

\newpage
\appendices
\section{Proof of Lemma~\ref{lm:property}}\label{app:lm_property}
\begin{itemize}
        \item[a)] By $\Lambda_{ij} = \Lambda_{ji}$, we have $\bar{a}_{ij} = \bar{a}_{ji}$. Due to $\mathbb{E}[\mathcal{A}(k)] = [\bar{a}_{ij}]$ and the definition of $\mathcal{D}(k)$, $\mathbf{1}_N^T \bar{\mathcal{L}} = 0$ and $\bar{\mathcal{L}}\mathbf{1}_N = 0$ hold.
        \item[b)] First, it is easy to see $\bar{\mathcal{L}}$ is symmetric. By a) we can regard $\bar{\mathcal{L}}$ as the Laplacian matrix of a \textcolor{black}{connected} weighted digraph. Therefore, \eqref{eq:eigenvalue} and \eqref{eq:lambda2} hold~\cite{huang2010stochastic}.
        \item[c)] By the definitions of $\Delta \mathcal{L}(k)$ and $\bar{\mathcal{L}}$, we have $$\mathbb{E}[\Delta\mathcal{L}(k)] = \mathbb{E}[\bar{\mathcal{L}} - \mathcal{L}(k)]= \bar{\mathcal{L}} -\mathbb{E}[\mathcal{L}(k)] = \bar{\mathcal{L}} - \bar{\mathcal{L}} = 0.
        $$

        \item[d)]\textcolor{black}{
        By~\eqref{eq:e_Y1}~-~\eqref{eq:e_Y2} and the definition of $n_i(k)$, $\mathbb{E}[\boldsymbol{v}(k)|\mathcal{F}_k]=0$ holds.
        }

        \item[e)]\textcolor{black}{
        We prove $\mathbb{E}[{x}_i(k)]<\infty, \forall i\in\mathcal{V}, k\geq 0$ and $\mathbb{E}\left[\Vert \boldsymbol{x}(k) \Vert^2\right]<\infty, \forall k\geq 0$ by induction. 
        It has been known that $\mathbb{E}[{x}_i(0)]=x_i(0)<\infty, \forall i\in\mathcal{V}$. Moreover, if $\mathbb{E}[{x}_i(k)]<\infty, \forall i\in\mathcal{V}$, we have
        \begin{equation}
            \begin{split}
               \mathbb{E}[x_i(k+1)] =& \left(1 - \alpha(k) \sum_{j\in\mathcal{V}}\bar{a}_{ij} \right) \mathbb{E}[x_i(k)] \\& + \alpha(k) \left(\mathbb{E}\left[|y_i(k)|^2\right] -\sigma_i^2\right)
               \\=&\left(1 - \alpha(k) \sum_{j\in\mathcal{V}}\bar{a}_{ij} \right) \mathbb{E}[x_i(k)] \\& + \alpha(k) \sum_{j\in\mathcal{V}} \bar{a}_{ij} \mathbb{E}\left[x_j(k)\right]
               \\\leq& \max_{v\in\mathcal{V}} \mathbb{E}\left[x_v(k)\right]<\infty.
            \end{split}
        \end{equation}
        Therefore, $\mathbb{E}[{x}_i(k)]<\infty, \forall i\in\mathcal{V}, \forall k\geq 0$.
\\
\\
        To prove $\mathbb{E}\left[\Vert \boldsymbol{x}(k) \Vert^2\right]<\infty, \forall k\geq 0$, the following lemma is needed.
        \begin{lemma}[Cauchy–Schwarz inequality]\label{lm:cs_inequality}
            \textcolor{black}{$$\left|\mathbb{E}[XY]\right|^2\leq \mathbb{E}[X^2] \mathbb{E}[Y^2],$$} where $X$ and $Y$ are random variables.
        \end{lemma}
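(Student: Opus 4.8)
The plan is to prove the inequality by the classical discriminant argument applied to a nonnegative quadratic form. First I would dispose of the trivial cases. If $\mathbb{E}[X^2]$ or $\mathbb{E}[Y^2]$ is infinite the right-hand side is $+\infty$ and there is nothing to prove, so I may assume both are finite; this also makes $\mathbb{E}[XY]$ well defined and finite, since $|XY|\leq\frac{1}{2}(X^2+Y^2)$ pointwise. If in addition $\mathbb{E}[Y^2]=0$, then $Y=0$ almost surely, hence $XY=0$ almost surely, so both sides of the inequality vanish and it holds with equality.

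For the remaining case $\mathbb{E}[Y^2]\in(0,\infty)$, I would introduce the real-valued function $g(t)\triangleq\mathbb{E}[(X-tY)^2]$ for $t\in\mathbb{R}$. Since $(X-tY)^2\geq 0$ pointwise, $g(t)\geq 0$ for every $t$. Expanding the square and using linearity of expectation gives
\begin{equation}
    g(t)=\mathbb{E}[X^2]-2t\,\mathbb{E}[XY]+t^2\,\mathbb{E}[Y^2],
\end{equation}
a quadratic polynomial in $t$ with strictly positive leading coefficient $\mathbb{E}[Y^2]$. A quadratic $at^2+bt+c$ with $a>0$ is nonnegative on all of $\mathbb{R}$ if and only if its discriminant satisfies $b^2-4ac\leq 0$; applying this with $a=\mathbb{E}[Y^2]$, $b=-2\,\mathbb{E}[XY]$, $c=\mathbb{E}[X^2]$ yields $4(\mathbb{E}[XY])^2-4\,\mathbb{E}[X^2]\,\mathbb{E}[Y^2]\leq 0$, which rearranges to $|\mathbb{E}[XY]|^2\leq\mathbb{E}[X^2]\,\mathbb{E}[Y^2]$.

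There is essentially no obstacle here; the only points requiring a moment's care are the degenerate case $\mathbb{E}[Y^2]=0$ and the finiteness of $g(t)$, both handled above. An equivalent route, if one prefers to avoid the discriminant, is to minimize $g$ directly by choosing $t^\star=\mathbb{E}[XY]/\mathbb{E}[Y^2]$ and noting that $0\leq g(t^\star)=\mathbb{E}[X^2]-(\mathbb{E}[XY])^2/\mathbb{E}[Y^2]$, which gives the same conclusion. This lemma will then be used in the induction of part~e) to bound the cross terms $\mathbb{E}[x_i(k)x_j(l)]$ by $\sqrt{\mathbb{E}[x_i(k)^2]\,\mathbb{E}[x_j(l)^2]}$ and hence deduce finiteness of the second moments of $\boldsymbol{x}(k)$.
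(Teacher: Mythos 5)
Your proof is correct: the discriminant argument on the nonnegative quadratic $g(t)=\mathbb{E}[(X-tY)^2]$ is the standard derivation of the Cauchy–Schwarz inequality, and your handling of the degenerate cases ($\mathbb{E}[Y^2]=0$ or infinite second moments) is careful and sound. The paper itself states this lemma as a classical fact and supplies no proof, so there is nothing to compare against; your argument is exactly what one would expect to fill that gap, and your closing remark about how it feeds into the induction of part~e) matches the paper's intended use.
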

        By the definitions of $\Gamma_{ij}(k)$, $h_{ij}(k)$ and $u$, we have
        \begin{equation}\label{eq:Gamma_square}
        \begin{split}
            \mathbb{E}\left[\Gamma_{ij}^2(k)\right] 
            = & \mathbb{E}\left[(\gamma_i(k)(1-\gamma_j(k))+ \gamma_j(k)(1-\gamma_i(k)))^2\right]\\
            = & \mathbb{E}\left[ \gamma_i(k)^2(1-\gamma_j(k))^2 +\gamma_j^2(k)(1-\gamma_i(k))^2 \right.\\ 
            & \left. + 2\gamma_i(k)\gamma_j(k)(1-\gamma_i(k))(1-\gamma_j(k))\right]\\
            = & \mathbb{E}\left[\gamma_i^2(k)\right]\mathbb{E}\left[(1-\gamma_j(k))^2\right] \\ 
            & + \mathbb{E}\left[\gamma_j^2(k)\right]\mathbb{E}\left[(1-\gamma_i(k))^2\right]\\
            & + 2\mathbb{E}\left[\gamma_i(k)(1-\gamma_i(k)\right]\mathbb{E}\left[\gamma_j(k)(1-\gamma_j(k))\right] \\
            = & p_i(1-p_j)+p_j(1-p_i) < \infty,
        \end{split}
        \end{equation}
        and
        \begin{equation}\label{eq:re_im}
        \begin{split}
            &\mathbb{E}\left[\left(\mathrm{Re}\left[h_{ij}(k)u\right]\mathrm{Re}\left[h_{il}(k)u\right] \right.\right.\\&\quad \left.\left.+\mathrm{Im}\left[h_{ij}(k)u\right] \mathrm{Im}\left[h_{il}(k)u\right]\right)^2 \right]
            \\&= \mathbb{E}\left[\left(\mathrm{Re}\left[h_{ij}(k)u\right]\right)^2\right]\mathbb{E}\left[\left(\mathrm{Re}\left[h_{il}(k)u\right]\right)^2\right]\\&\quad+\mathbb{E}\left[\left(\mathrm{Im}\left[h_{ij}(k)u\right] \right)^2\right]\mathbb{E}\left[\left(\mathrm{Im}\left[h_{il}(k)u\right]\right)^2 \right]\\&\quad+2\mathbb{E}\left[\mathrm{Re}\left[h_{ij}(k)u\right]\mathrm{Im}\left[h_{ij}(k)u\right]\right]\\&\quad\times\mathbb{E}\left[ \mathrm{Re}\left[h_{il}(k)u\right]\mathrm{Im}\left[h_{il}(k)u\right]\right]
            \\&=\mathbb{E}\left[\left(\mathrm{Re}[h_{ij}(k)] \mathrm{Re}[u]-\mathrm{Im}[h_{ij}(k)] \mathrm{Im}[u]\right)^2\right]
            \\&\quad\times\mathbb{E}\left[\left(\mathrm{Re}[h_{il}(k)] \mathrm{Re}[u]-\mathrm{Im}[h_{il}(k)] \mathrm{Im}[u]\right)^2\right]
            \\&\quad+\mathbb{E}\left[\left(\mathrm{Re}[h_{ij}(k)] \mathrm{Im}[u]+\mathrm{Im}[h_{ij}(k)] \mathrm{Re}[u]\right)^2\right]
            \\&\quad\times\mathbb{E}\left[\left(\mathrm{Re}[h_{il}(k)] \mathrm{Im}[u]+\mathrm{Im}[h_{il}(k)] \mathrm{Re}[u]\right)^2\right]
            \\&\quad+2\mathbb{E}\left[(\mathrm{Re}[h_{ij}(k)] \mathrm{Re}[u]-\mathrm{Im}[h_{ij}(k)] \mathrm{Im}[u])\right.\\&\left.\quad\times(\mathrm{Re}[h_{ij}(k)] \mathrm{Im}[u]+\mathrm{Im}[h_{ij}(k)] \mathrm{Re}[u])\right]
            \\&\quad\times\mathbb{E}\left[(\mathrm{Re}[h_{il}(k)] \mathrm{Re}[u]-\mathrm{Im}[h_{il}(k)] \mathrm{Im}[u])\right.\\&\left.\quad\times(\mathrm{Re}[h_{il}(k)] \mathrm{Im}[u]+\mathrm{Im}[h_{il}(k)] \mathrm{Re}[u])\right]
            \\&
            =\frac{1}{2}\Lambda_{ij}\Lambda_{il} <\infty,
        \end{split}
        \end{equation}
         By Lemma~\ref{lm:cs_inequality}, it holds if we assume $\mathbb{E}[x_i^2(k)]<\infty$
        \begin{equation}
            \left| \mathbb{E}[x_i(k)x_j(k)]\right|^2< \infty.
        \end{equation}
        Then, by combining \eqref{eq:Gamma_square} and \eqref{eq:re_im}, we have 
        \begin{equation}\label{eq:Y1}
        \begin{split}
            &\mathbb{E}\left[ \left( Y^\mathrm{(1)}_{i,jl}(k) \right)^2 \right]\\&= \rho^2\mathbb{E}\left[ x_j(k) x_l(k) \right] \mathbb{E}\left[\Gamma_{ij}^2(k)\right]\mathbb{E}\left[\Gamma_{il}^2(k)\right] \\&\quad\times\mathbb{E}\left[\left(\mathrm{Re}\left[h_{ij}(k)u\right]\mathrm{Re}\left[h_{il}(k)u\right] \right. \right.\\&\quad  \left.+\mathrm{Im}\left[h_{ij}(k)u\right] \mathrm{Im}\left[h_{il}(k)u\right]\right)^2 \left.\right]
            \\& =\frac{1}{2}\rho^2 \Lambda_{ij}\Lambda_{il}\mathbb{E}\left[ x_j(k) x_l(k) \right] \\&\quad\times\left(p_i(1-p_j)+p_j(1-p_i)\right)\left(p_i(1-p_l)+p_l(1-p_i)\right)
            \\&\leq M^\mathrm{(1)}_{ijl}\max_{v\in\mathcal{V}}\mathbb{E}\left[  x_v^2(k)  \right] 
            \\&<\infty,
        \end{split}
        \end{equation}
        where $M^\mathrm{(1)}_{ijl}\triangleq \frac{1}{2}\rho^2\Lambda_{ij}\Lambda_{il}\left(p_i(1-p_j)+p_j(1-p_i)\right)$ $\times\left(p_i(1-p_l)+p_l(1-p_i)\right)$.
        \\
        Similarly, we have
        \begin{equation}\label{eq:Y2}
        \begin{split}
            &\mathbb{E}\left[\left(Y_{ij}^{(2)}(k)\right)^2\right] \\&= \rho\mathbb{E}\left[ x_j(k)\right]\mathbb{E}\left[\Gamma_{ij}^2(k)\right] \mathbb{E}\left[\left(\mathrm{Re}\left[h_{ij}(k)u\right] \mathrm{Re}[n_i(k)] \right.\right.\\& \quad\left.+ \mathrm{Im}\left[h_{ij}(k)u\right] \mathrm{Im}[n_i(k)]\right)^2\left.\right]
            \\&=\frac{1}{2}\rho\Lambda_{ij}\sigma_i^2 \mathbb{E}\left[ x_j(k) \right] \left(p_i(1-p_j)+p_j(1-p_i)\right)
            \\&\leq M^\mathrm{(2)}_{ij}\max_{v\in\mathcal{V}} x_v(0) 
            \\&<\infty.
        \end{split}
        \end{equation}
        where $M^\mathrm{(2)}_{ij}\triangleq \frac{1}{2}\rho\Lambda_{ij}\sigma_i^2  \left(p_i(1-p_j)+p_j(1-p_i)\right)$.
        According to the definition of $n_i(k)$, we have
        \begin{equation}\label{eq:e1}
        \begin{split}
            &\mathbb{E}\left[ \left(|n_i(k)|^2-\sigma_i^2\right)^2 \right]= 7\sigma_i^4.
        \end{split}
        \end{equation}
        By the independence of $h_{ij}(k), i, j \in\mathcal{V}$ and $n_i(k), i\in\mathcal{V}$, we have
        \begin{align}
        \begin{split}\label{eq:e2}
            &\mathbb{E}\left[ Y^\mathrm{(1)}_{i,jl}(k) Y^\mathrm{(1)}_{i',jl}(k)  \right]= 0
        \end{split}\\
        \begin{split}\label{eq:e3}
            &\mathbb{E}\left[ Y^\mathrm{(1)}_{i,jl}(k) Y^\mathrm{(1)}_{i,j'l}(k)  \right]= 0
        \end{split}\\
        \begin{split}\label{eq:e4}
            &\mathbb{E}\left[ Y_{ij}^{(2)}(k) Y_{i'j}^{(2)}(k) \right]= 0
        \end{split}\\
        \begin{split}\label{eq:e5}
            &\mathbb{E}\left[ Y_{ij}^{(2)}(k) Y_{ij'}^{(2)}(k) \right]= 0
        \end{split}\\
        \begin{split}\label{eq:e6}
            &\mathbb{E}\left[ Y^\mathrm{(1)}_{i,jl}(k)\left(|n_i(k)|^2-\sigma_i^2\right) \right]= 0
        \end{split}\\
        \begin{split}\label{eq:e7}
            &\mathbb{E}\left[ Y_{ij}^{(2)}(k)\left(|n_i(k)|^2-\sigma_i^2\right) \right]= 0
        \end{split}
        \end{align}
        By~\eqref{eq:Y1}~-~\eqref{eq:e7}, we have
        \begin{equation}\label{eq:v_square}
        \begin{split}
            &\mathbb{E}\left[ v_i^2(k) \right] \\& =\mathbb{E}\left[\left(|n_i(k)|^2-\sigma_i^2\right)^2\right] + \mathbb{E}\left[\sum_{j\in\widetilde{\mathcal{N}}_i}\sum_{l\in\widetilde{\mathcal{N}}_i\backslash j}\left(Y_{i,jl}^{(1)}(k)\right)^2\right] \\&\quad + \mathbb{E}\left[2\sum_{j\in\widetilde{\mathcal{N}}_i}\left(Y_{ij}^{(2)}(k)\right)^2\right] 
            \\& \leq 7\sigma_i^4 + \sum_{j\in\widetilde{\mathcal{N}}_i}\sum_{l\in\widetilde{\mathcal{N}}_i\backslash j}M^\mathrm{(1)}_{ijl}\max_{v\in\widetilde{\mathcal{N}}_i}\mathbb{E}\left[ x_v^2(k)\right] \\&\quad +2\sum_{j\in\widetilde{\mathcal{N}}_i}M^\mathrm{(2)}_{ij}\max_{v\in\widetilde{\mathcal{N}}_i} x_v(0)
            \\& \leq 7\sigma_i^4 + \sum_{j\in\widetilde{\mathcal{N}}_i}\sum_{l\in\widetilde{\mathcal{N}}_i\backslash j}M^\mathrm{(1)}_{ijl}\mathbb{E}\left[ \Vert \boldsymbol{x}(k)\Vert^2 \right] \\&\quad +2\sum_{j\in\widetilde{\mathcal{N}}_i}M^\mathrm{(2)}_{ij}\max_{v\in\widetilde{\mathcal{N}}_i} x_v(0).
        \end{split}
        \end{equation}
        By the definition of $n_i(k)$ and \eqref{eq:e_Y1}~-~\eqref{eq:e_Y2}, we have
        \begin{equation}\label{eq:xv}
        \begin{split}
            &\mathbb{E}\left[ x_i(k)v_i(k) \right]= 0.
        \end{split}
        \end{equation}
        By the definition of $\Delta\mathcal{L}(k)$, we have
        \begin{equation}\label{eq:delta_l}
        \begin{split}
            &\mathbb{E}\left[ \Vert \Delta\mathcal{L}(k) \Vert^2\right] \\&\leq \mathbb{E}\left[ \Vert \Delta\mathcal{L}(k) \Vert_F^2\right] 
            \\&= \mathbb{E}\left[ \sum_{i\in\mathcal{V}}\sum_{j\in\widetilde{\mathcal{N}}_i} (a_{ij}(k)-\bar{a}_{ij})^2\right]
            \\&=\sum_{i\in\mathcal{V}}\sum_{j\in\widetilde{\mathcal{N}}_i} \left(\mathbb{E}\left[ a_{ij}^2(k)\right] -\bar{a}_{ij}^2\right) 
            \\&= \sum_{i\in\mathcal{V}}\sum_{j\in\widetilde{\mathcal{N}}_i} \left(\mathbb{E}\left[ \rho^2 \Gamma^2_{ij}(k)|h_{ij}(k)|^4\right] -\bar{a}_{ij}^2\right)
            \\&= C_L,
        \end{split}
        \end{equation}
        where $C_L\triangleq \sum_{i\in\mathcal{V}}\sum_{j\in\widetilde{\mathcal{N}}_i} \left( 8\Lambda_{ij}^2\rho^2\left(p_i(1-p_j)\right.\right.$ $\left.\left.+p_j(1-p_i)\right)-\bar{a}_{ij}^2\right)$. 
        Then, by~\eqref{eq:v_square}~-~\eqref{eq:delta_l}, we have
        \begin{equation}\label{eq:w_square}
        \begin{split}
           \mathbb{E}\left[ \Vert \boldsymbol{w}(k) \Vert^2 \right] &= \mathbb{E}\left[ \Vert\Delta\mathcal{L}(k)\boldsymbol{x}(k) + \boldsymbol{v}(k)\Vert^2\right]
           \\& = \mathbb{E}\left[ \Vert\Delta\mathcal{L}(k)\boldsymbol{x}(k) \Vert^2\right] + \mathbb{E}\left[\Vert\boldsymbol{v}(k)\Vert^2\right]
           \\&\leq \mathbb{E}\left[ \Vert\Delta\mathcal{L}(k)\Vert^2 \right] \mathbb{E}\left[\Vert\boldsymbol{x}(k) \Vert^2\right] \\&\quad + \mathbb{E}\left[\Vert\boldsymbol{v}(k)\Vert^2\right]
        \end{split}
        \end{equation}
        By the assumption $\mathbb{E}[x_i^2(k)]<\infty$ and~\eqref{eq:w_square}, we have
        \begin{equation}\label{eq:E_x_square}
        \begin{split}
             &\mathbb{E}\left[\Vert \boldsymbol{x}(k+1)\Vert^2\right] \\&= \mathbb{E}\left[ \Vert \left(I_N - \alpha(k)\bar{\mathcal{L}} \right) \boldsymbol{x}(k) + \alpha(k)\boldsymbol{w}(k)\Vert^2\right]
             \\& = \mathbb{E}\left[ \Vert \left(I_N - \alpha(k)\bar{\mathcal{L}} \right) \boldsymbol{x}(k) \Vert^2 \right] + \mathbb{E}\left[ \Vert \alpha(k)\boldsymbol{w}(k)\Vert^2\right]
             \\&\leq \Vert I_N - \alpha(k)\bar{\mathcal{L}}  \Vert^2 \mathbb{E}\left[ \Vert\boldsymbol{x}(k) \Vert^2\right]  +  \alpha^2(k) \mathbb{E}\left[ \Vert \boldsymbol{w}(k)\Vert^2\right]
             \\&\leq  \left(1-2\alpha(k)\lambda_2\left(\bar{\mathcal{L}}\right)+\alpha^2(k)\left\Vert \bar{\mathcal{L}}\right\Vert^2\right)\mathbb{E}\left[ \Vert\boldsymbol{x}(k) \Vert^2\right] \\&\quad+ \alpha^2(k)\mathbb{E}\left[ \Vert \boldsymbol{w}(k)\Vert^2\right] 
             \\& \leq \left(1-2\alpha(k)\lambda_2\left(\bar{\mathcal{L}}\right)+\alpha^2(k)\left\Vert \bar{\mathcal{L}}\right\Vert^2\right.\\&\left.\quad+\alpha^2(k)\left(C_L+C_M^\mathrm{(1)}\right) \right) \mathbb{E}\left[ \Vert\boldsymbol{x}(k) \Vert^2\right] +\alpha^2(k)C_M^\mathrm{(2)}
             \\&\leq \Phi_1(k:0)\Vert\boldsymbol{x}(0)\Vert^2 + \sum_{t=0}^k\Phi_1(k:t+1)\alpha^2(t)C_M^\mathrm{(2)},
        \end{split}
        \end{equation}
        where 
        \begin{align*}
            &C_M^\mathrm{(1)}\triangleq\sum_{j\in\mathcal{V}}\sum_{l\in\mathcal{V}\backslash j}M^\mathrm{(1)}_{ijl},\\
            &C_M^\mathrm{(2)}\triangleq \sum_{i\in\mathcal{V}}\left( 7\sigma_i^4 +2\sum_{j\in\mathcal{V}}M^\mathrm{(2)}_{ij}\max_{v\in\mathcal{V}} x_v(0) \right),\\
            &\Phi_1(k:l)\triangleq \prod_{t=k}^l \left(1-2\alpha(k)\lambda_2\left(\bar{\mathcal{L}}\right)+\alpha^2(k)\left\Vert \bar{\mathcal{L}}\right\Vert^2\right.\\& \qquad\qquad\left.\quad+\alpha^2(k)\left(C_L+C_M^\mathrm{(1)}\right) \right).
        \end{align*}
        Furthermore, due to $\Vert \boldsymbol{x}(0) \Vert^2 < \infty$, $\mathbb{E}\left[\Vert \boldsymbol{x}(k) \Vert^2\right]<\infty, \forall k\geq 0$ holds. Again, by Lemma~\ref{lm:cs_inequality}, we have $\mathbb{E}[x_i(k)x_j(l)]<\infty, \forall i,j\in\mathcal{V}, k,l\geq 0$.}
    \end{itemize}

\section{Proof of Theorem~\ref{thm:weak}}\label{app:thm_weak}
Define $\boldsymbol{\delta}(k) \triangleq (I_N-J)\boldsymbol{x}(k)$, then $V(k)=\boldsymbol{\delta}^T(k)\boldsymbol{\delta}(k)$. Since $\bar{\mathcal{L}} J=0$ and $J \bar{\mathcal{L}} = 0$, we have 
    \begin{equation}\label{eq:delta}
    \begin{split}
       \boldsymbol{\delta}(k+1) &= (I_N - J) \left((I_N-\alpha(k)\bar{\mathcal{L}})\boldsymbol{x}(k)+\alpha(k)\boldsymbol{w}(k)\right)
       \\&= (I_N-\alpha(k)\bar{\mathcal{L}})\boldsymbol{\delta}(k)+\alpha(k)(I_N-J)\boldsymbol{w}(k).
    \end{split}
    \end{equation}
    
    By~\eqref{eq:delta} and the definition of $V(k)$, we have
    \begin{equation}
    \begin{split}
        &V(k+1) \\&= \boldsymbol{\delta}^T(k)(I_N-\alpha(k)\bar{\mathcal{L}})^2\boldsymbol{\delta}(k) \\&\quad+2\alpha(k)\boldsymbol{\delta}^T(k)(I_N-\alpha(k)\bar{\mathcal{L}})^T (I_N-J)\boldsymbol{w}(k)\\&\quad+\alpha^2(k)\boldsymbol{w}^T(k)(I_N-J)^2\boldsymbol{w}(k)
        \\&= V(k) -2\alpha(k)\boldsymbol{\delta}^T(k)\bar{\mathcal{L}}\boldsymbol{\delta}(k) +\alpha^2(k)\boldsymbol{\delta}^T(k)\bar{\mathcal{L}}^2\boldsymbol{\delta}(k)\\&\quad+2\alpha(k)\boldsymbol{\delta}^T(k)(I_N-\alpha(k)\bar{\mathcal{L}})(I_N-J)\boldsymbol{w}(k) \\&\quad+\alpha^2(k)\boldsymbol{w}^T(k)(I_N-J)^2\boldsymbol{w}(k) 
        \\&\leq \left(1-2\alpha(k)\lambda_2(\bar{\mathcal{L}})+\alpha^2(k)\left\Vert\bar{\mathcal{L}}\right\Vert^2 \right)V(k)
        \\&\quad+2\alpha(k)\boldsymbol{\delta}^T(k)(I_N-\alpha(k)\bar{\mathcal{L}})(I_N-J) \boldsymbol{w}(k) \\&\quad+\alpha^2(k)\boldsymbol{w}^T(k)(I_N-J)^2\boldsymbol{w}(k).
    \end{split}
    \end{equation}
    \textcolor{black}{
    By Lemma~\ref{lm:property} c) and d), we have
    \begin{equation}
    \begin{split}
        &\mathbb{E}[\boldsymbol{\delta}^T(k)(I_N-\alpha(k)\bar{\mathcal{L}})(I_N-J)\boldsymbol{w}(k)|\mathcal{F}_k]\\
        &=\boldsymbol{\delta}^T(k)(I_N-\alpha(k)\bar{\mathcal{L}})(I_N-J)\mathbb{E}[\Delta\mathcal{L}(k)\boldsymbol{x}(k) + \boldsymbol{v}(k)|\mathcal{F}_k]\\
        &=\boldsymbol{\delta}^T(k)(I_N-\alpha(k)\bar{\mathcal{L}})(I_N-J)
        \\&\quad\times\left(\mathbb{E}[\Delta\mathcal{L}(k)|\mathcal{F}_k]\boldsymbol{x}(k) + \mathbb{E}[\boldsymbol{v}(k)|\mathcal{F}_k]\right)\\
        &= 0 \quad a.s.,
    \end{split}
    \end{equation}
    which implies that
    \begin{equation}\label{eq:1}
    \begin{split}
        \mathbb{E}[\boldsymbol{\delta}^T(k)(I_N-\alpha(k)\bar{\mathcal{L}})(I_N-J)\boldsymbol{w}(k)] = 0.
    \end{split}
    \end{equation}
    By~\eqref{eq:w_square} and~\eqref{eq:E_x_square}, we have $\mathbb{E}\left[\Vert\boldsymbol{w}(k)\Vert^2\right]\leq \overline{M}_1$,
    where 
    \begin{equation}\label{eq:M_1}
    \begin{split}
        \overline{M}_1=& \left(C_L+C_M^\mathrm{(1)}\right) \left(\sup_{l\geq 0}\Phi_1(l:0)\Vert\boldsymbol{x}(0)\Vert^2 \right.\\& + \left.\sup_{k,l\geq 0}\Phi_1(l:k)C_M^\mathrm{(2)}\sum_{t=0}^\infty\alpha^2(t)\right) + C_M^\mathrm{(2)}.
    \end{split}
    \end{equation}
    Hence, we have
    \begin{equation}
    \begin{split}
        \mathbb{E}[\boldsymbol{w}^T(k)(I_N-J)^2\boldsymbol{w}(k)]&\leq \Vert I_N-J\Vert^2 \mathbb{E}\left[\Vert\boldsymbol{w}(k)\Vert^2\right] \\&\leq \Vert I_N-J\Vert^2\overline{M}_1.
    \end{split}
    \end{equation}
    Then, we have
    \begin{equation}\label{eq1}
    \begin{split}
        &\mathbb{E}[V(k+1)]\\&\leq \left(1-2\alpha(k)\lambda_2(\bar{\mathcal{L}}) + \alpha^2(k)\left\Vert\bar{\mathcal{L}}\right\Vert^2\right)\mathbb{E}[V(k)]\\&\quad+ \alpha^2(k)\Vert I_N-J\Vert^2\overline{M}_1
    \end{split}
    \end{equation}
    }
    \begin{lemma}[\cite{polyak1987introduction}]\label{lm:sequence}
        Let $\{u(k)\}_{k\geq 0}$, $\{b(k)\}_{k\geq 0}$, and $\{q(k)\}_{k\geq 0}$ be real sequences. If $0< b(k) \leq 1, q(k) > 0, \forall k\geq 0$, $\sum_{k=0}^\infty b(k) = \infty$, $\lim_{k\rightarrow\infty}\frac{q(k)}{b(k)}=0$, and
        \begin{equation*}
            u(k+1)\leq (1-b(k))u(k)+q(k),
        \end{equation*}
        then $\limsup_{k\rightarrow\infty}u(k)\leq 0$. In particular, if $u(k)\geq 0, \forall k\geq 0$, then $\lim_{k\rightarrow\infty}u(k)=0$.
    \end{lemma}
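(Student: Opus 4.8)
This lemma is the deterministic tool that closes the recursion~\eqref{eq1}: it will be invoked with $u(k)=\mathbb{E}[V(k)]$, $b(k)=2\alpha(k)\lambda_2(\bar{\mathcal{L}})-\alpha^2(k)\Vert\bar{\mathcal{L}}\Vert^2$, and $q(k)=\alpha^2(k)\Vert I_N-J\Vert^2\overline{M}_1$, where Assumption~\ref{asm:stepsize} is exactly what makes $b(k)\in(0,1]$ for large $k$, $\sum_k b(k)=\infty$, and $q(k)/b(k)\to 0$. To prove the lemma itself I would use the classical $\epsilon$-shift. Fix an arbitrary $\epsilon>0$; since $q(k)/b(k)\to 0$, there is $K=K(\epsilon)$ with $q(k)\le\epsilon\, b(k)$ for all $k\ge K$, so the hypothesis rearranges to
\[
u(k+1)\le(1-b(k))u(k)+\epsilon\,b(k)=(1-b(k))\bigl(u(k)-\epsilon\bigr)+\epsilon,\qquad k\ge K .
\]
Thus the shifted sequence $w(k)\triangleq u(k)-\epsilon$ obeys the homogeneous inequality $w(k+1)\le(1-b(k))w(k)$ for $k\ge K$, with $1-b(k)\in[0,1)$ because $b(k)\in(0,1]$.

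\textbf{Step 2 (Monotone envelope and divergence of $\sum b(k)$).} Next I would introduce the envelope $g(k)\triangleq\max\{w(k),0\}$ for $k\ge K$ and observe that, since $0\le 1-b(k)\le 1$, one has $w(k+1)\le(1-b(k))w(k)\le g(k)$ whatever the sign of $w(k)$; hence $g(k)$ is non-increasing and, being non-negative, converges to some $g^\star\ge 0$. The claim is $g^\star=0$: otherwise $w(k)=g(k)\ge g^\star>0$ for all $k\ge K$, and iterating the homogeneous inequality gives $0<w(k)\le w(K)\prod_{j=K}^{k-1}(1-b(j))$, whereas $\prod_{j=K}^{k-1}(1-b(j))\to 0$ because $\sum_j b(j)=\infty$ and $1-b(j)\le e^{-b(j)}$ (indeed the product is already zero the moment some factor $b(j)=1$) --- a contradiction. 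Therefore $\limsup_k w(k)\le 0$, i.e.\ $\limsup_k u(k)\le\epsilon$; letting $\epsilon\downarrow 0$ gives $\limsup_k u(k)\le 0$, and if in addition $u(k)\ge 0$ for all $k$ then $\liminf_k u(k)\ge 0$ pins the limit to $0$.

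\textbf{Step 3 (Where the care is).} The proof is entirely elementary and uses no probabilistic structure; one could alternatively cite~\cite{polyak1987introduction} or deduce it as a special case of the Robbins--Siegmund lemma. The only bookkeeping I would watch is the boundary value $b(k)=1$, where $1-b(k)=0$ sends $w$ to a non-positive value in a single step: this only accelerates convergence, but it has to be built in so that the branch ``$w(k)\ge g^\star>0$ for all $k\ge K$'' is genuinely impossible, which the bound $\prod(1-b(j))\le\exp(-\sum b(j))$ --- read with the convention that the product vanishes once any factor does --- handles cleanly. Beyond that I do not foresee a real obstacle.
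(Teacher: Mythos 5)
The paper does not prove this lemma at all: it is imported verbatim from \cite{polyak1987introduction} and used as a black box to close the recursion~\eqref{eq1}, so there is no in-paper argument to compare against. Your proof is correct and self-contained. The $\epsilon$-shift in Step~1 is the standard device, and the algebra $(1-b(k))u(k)+\epsilon b(k)=(1-b(k))(u(k)-\epsilon)+\epsilon$ checks out; the envelope $g(k)=\max\{w(k),0\}$ in Step~2 correctly handles the sign issue (the inequality $w(k+1)\le(1-b(k))w(k)$ alone does not preserve positivity, so one cannot naively iterate without the case split you build in), and the contradiction via $\prod_{j\ge K}(1-b(j))\le\exp\bigl(-\sum_{j\ge K}b(j)\bigr)\to 0$ together with the observation that $g^\star>0$ would force $w(k)=g(k)>0$ for all $k\ge K$ is airtight. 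Your attention to the boundary case $b(k)=1$ is appropriate but, as you note, only helps. The only nit is cosmetic: your instantiation of $b(k)$ and $q(k)$ in Step~1 describes how the paper applies the lemma rather than the lemma itself, and could be dropped without loss. In short, you have supplied a valid elementary proof of a result the paper merely cites.
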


    By Assumption~\ref{asm:stepsize} we know that
    \begin{equation}
        \sum_{k=0}^\infty \left(2\alpha(k)\lambda_2(\bar{\mathcal{L}}) - \alpha^2(k)\left\Vert(\bar{\mathcal{L}})\right\Vert^2\right) = \infty,
    \end{equation}
    and there exists $\underline{k}\geq 0$ such that $0\leq2\alpha(k)\lambda_2(\bar{\mathcal{L}}) - \alpha^2(k)\left\Vert(\bar{\mathcal{L}})\right\Vert^2<1, \forall k\geq \underline{k}$.
    
    Moreover, we have
    \begin{equation}
        \lim_{k\rightarrow\infty} \frac{\alpha^2(k)\Vert I_N-J\Vert^2\overline{M}_1}{2\alpha(k)\lambda_2(\bar{\mathcal{L}}) - \alpha^2(k)\left\Vert(\bar{\mathcal{L}})\right\Vert^2} = 0.
    \end{equation}
    Hence, by Lemma~\ref{lm:sequence} and \eqref{eq1}, we have $\lim_{m\rightarrow\infty}\mathbb{E}[V(k)]=0$.

\section{Proof of Theorem~\ref{thm:ms}}\label{app:thm_ms}

    Since $\mathbf{1}^T_N\bar{\mathcal{L}} = 0$
    \begin{equation*}
    \begin{split}
         \sum_{i=1}^N x_i(k+1) &= \sum_{i=1}^N x_i(k) + \sum_{i=1}^N \alpha(k)w_i(k)
         \\&= \sum_{i=1}^N x_i(0) + \mathbf{1}^T_N\sum_{t=0}^k \alpha(t)\boldsymbol{w}(t).
    \end{split}
    \end{equation*}
    Since $\sum_{t=0}^k\alpha(t)\boldsymbol{w}(t)$ is adapted to $\mathcal{F}_k$ and 
    \begin{equation}
    \begin{split}
        &\mathbb{E}\left[ \left. \sum_{t=0}^k\alpha(t)\boldsymbol{w}(t) \right| \mathcal{F}_k \right] 
        \\& = \alpha(k)\mathbb{E}\left[ \left. \boldsymbol{w}(k) \right| \mathcal{F}_k\right] 
        + \sum_{t=0}^{k-1}\alpha(t)\boldsymbol{w}(t)
        \\& = \alpha(k)\left(\mathbb{E}[\Delta\mathcal{L}(k)]\boldsymbol{x}(k) + \mathbb{E}[\left. \boldsymbol{v}(k)\right| \mathcal{F}_k]\right) + \sum_{t=0}^{k-1}\alpha(t)\boldsymbol{w}(t)
        \\& = \sum_{t=0}^{k-1}\alpha(t)\boldsymbol{w}(t).
    \end{split}
    \end{equation}

    By $\mathbb{E}\left[\Vert \boldsymbol{w}(k) \Vert^2\right]\leq \overline{M}_1$, we have
    \begin{equation*}
    \begin{split}
       \sup_{k\geq 0} \mathbb{E}\left[ \left\Vert\sum_{t=0}^k\alpha(t)\boldsymbol{w}(t)\right\Vert^2 \right]& \leq  \sup_{k\geq 0}\sum_{t=0}^k\alpha^2(t)\overline{M}_1\\&< \infty,
    \end{split} 
    \end{equation*}
    then by $L^p$ convergence theorem~\cite[Theorem 4.4.6]{durrett2019probability}, we can know that $\sum_{t=0}^k\alpha(k)\boldsymbol{w}(t)$ converges a.s. as $k\rightarrow\infty$. 
    Then $ \frac{1}{N}\sum_{i=1}^N x_i(k)$ converges to $ x^*$ a.s. as $k\rightarrow\infty$, where
    \begin{equation*}
        x^*=\frac{1}{N}\sum_{i=1}^N x_i(0) + \frac{1}{N}\mathbf{1}^T_N\sum_{t=0}^\infty \alpha(t)\boldsymbol{w}(t),
    \end{equation*}
    which satisfies 
    \textcolor{black}{
    \begin{align*}
        \mathbb{E}[x^*]&=\frac{1}{N}\sum_{i=1}^N x_i(0),\\
        Var(x^*)&=\lim_{k\rightarrow\infty} \mathbb{E}\left[\left\Vert\frac{1}{N}\mathbf{1}^T_N\sum_{t=0}^k \alpha(t)\boldsymbol{w}(t)\right\Vert^2\right]
        \\&= \lim_{k\rightarrow\infty} \frac{1}{N^2}\sum_{t=0}^k \mathbb{E}\left[\left(\sum_{i\in\mathcal{V}}\alpha(t)w_i(t)\right)^2\right]
        \\& \leq \lim_{k\rightarrow\infty} \frac{1}{N}\sum_{t=0}^k \sum_{i\in\mathcal{V}}\alpha^2(t)\mathbb{E}\left[ w_i^2(t)\right]
        \\&= \lim_{k\rightarrow\infty} \frac{1}{N}\sum_{t=0}^k \alpha^2(t)\mathbb{E}\left[ \Vert\boldsymbol{w}(t)\Vert^2\right]
        \\&\leq \frac{\overline{M}_1}{N}\sum_{t=0}^\infty \alpha^2(t)
        < \infty.
    \end{align*}
    }
    Therefore, combining with Theorem~\ref{thm:weak}, the mean square average consensus can be achieved.
    
\section{Proof of Theorem~\ref{thm:almost_sure}}\label{app:thm_as}

First, we have
\begin{equation*}
\begin{split}
    \mathbb{E}\left[ V(k+1)| \mathcal{F}_k \right] &\leq \left(1-2\alpha(k)\lambda_2(\bar{\mathcal{L}})+\alpha^2(k)\left\Vert\bar{\mathcal{L}}\right\Vert^2 \right)V(k)\\&\quad + \alpha^2(k)\Vert I_N-J\Vert^2\overline{M}_1.
\end{split}
\end{equation*}
By Siegmund and Robbins Theorem~\cite{robbins1971convergence}, we have $V(k)$ converges almost surely as $k\rightarrow\infty$.
Then, by Theorem~\ref{thm:weak}, we have
\begin{equation*}
    \lim_{k\rightarrow\infty} V(k) = 0 \quad a.s.
\end{equation*}
Therefore, the system can achieve almost sure consensus.

\section{Proof of Corollary~\ref{coro:weak}}\label{app:coro1}
   Define
    \begin{align*}
        &\bar{\alpha}(k)\triangleq \frac{1}{N}\sum_{i\in\mathcal{V}}\alpha_i(k),\\
        \Delta &A(k)\triangleq\\&\mathrm{diag}\left(\bar{\alpha}(k)-\alpha_1(k), \bar{\alpha}(k)-\alpha_2(k), \ldots, \bar{\alpha}(k)-\alpha_N(k)\right).
    \end{align*}
    We have
    \begin{equation}
        \begin{split}
        &V(k+1) \\&= \boldsymbol{\delta}^T(k)(I_N-\bar{\alpha}(k)\bar{\mathcal{L}})^2\boldsymbol{\delta}(k) 
        \\&\quad+\boldsymbol{\delta}^T(k)\bar{\mathcal{L}}\Delta A(k)(I_N-J)\Delta A(k)\bar{\mathcal{L}}\boldsymbol{\delta}(k) 
        \\&\quad+2\boldsymbol{\delta}^T(k)(I_N-\bar{\alpha}(k)\bar{\mathcal{L}})(I_N-J)\Delta A(k)\bar{\mathcal{L}}\boldsymbol{\delta}(k) 
        \\&\quad+2\boldsymbol{\delta}^T(k)(I_N-\alpha(k)\bar{\mathcal{L}})^T (I_N-J)(\bar{\alpha}(k)I_N-\Delta A(k))\boldsymbol{w}(k)
        \\&\quad+2\boldsymbol{\delta}^T(k)\bar{\mathcal{L}}\Delta A(k)(I_N-J)(\bar{\alpha}(k)I_N-\Delta A(k))\boldsymbol{w}(k)
        \\&\quad+\boldsymbol{w}^T(k)(\bar{\alpha}(k)I_N-\Delta A(k))(I_N-J)(\bar{\alpha}(k)I_N-\Delta A(k))\\&\quad\times\boldsymbol{w}(k)
        \\&\leq (1-s(t))V(k) +\left\Vert \bar{\alpha}(k)I_N-\Delta A(k)\right\Vert^2\Vert I_N-J\Vert^2\overline{M}_1,
    \end{split}
    \end{equation}
    where
    \begin{equation*}
        \begin{split}
            s(k) =& 2\lambda_2(\bar{\mathcal{L}})\bar{\alpha}(k)-\bar{\alpha}^2(k)\Vert \bar{\mathcal{L}} \Vert^2 - \left\Vert\Delta A(k)\right\Vert^2\left\Vert\bar{\mathcal{L}}\right\Vert^2
            \\&- 2\left(1+\max_{i\in\mathcal{V}}\sup_{t\geq 0}\alpha_i(t)\left\Vert \bar{\mathcal{L}}\right\Vert\right)\left\Vert\Delta A(k)\right\Vert \left\Vert\bar{\mathcal{L}}\right\Vert.
        \end{split}
    \end{equation*}
    By $\sum_{k=0}^\infty \alpha_i(k)=\infty$ and $ \sum_{k=0}^\infty \alpha_i^2(k)<\infty, \forall i\in\mathcal{V}$, we know that $\sum_{k=0}^\infty \bar{\alpha}(k)=\infty$ and $ \sum_{k=0}^\infty \bar{\alpha}^2(k)<\infty, \forall i\in\mathcal{V}$. By $\max_{i,j\in\mathcal{V}}\left|\alpha_i(k)-\alpha_j(k)\right| = o\left(\sum_{i\in\mathcal{V}}\alpha_i(k)\right), k\rightarrow\infty$, we know that $\left\Vert\Delta A(k)\right\Vert = o\left(\bar{\alpha}(k)\right), k\rightarrow\infty$. Then, we know there exists $\underline{k}_1 > 0$ such that $0<s(k)\leq 1, \forall k\geq\underline{k}_1$, $\sum_{k=\underline{k}_1}^\infty s(k)=\infty$, and $\frac{\bar{\alpha}^2(k)}{s(k)}\rightarrow 0, k\rightarrow 0$. Then similar to Theorem~\ref{thm:weak}, we have~\eqref{eq:weak2}.

\section{Proof of Corollary~\ref{coro:ms}}\label{app:coro2}

According to~\eqref{eq:protocol2}, we have
\begin{equation*}
    \begin{split}
        \boldsymbol{x}(k+1) =& \left(I_N - A(k)\bar{\mathcal{L}} \right) \boldsymbol{x}(k) + A(k)\boldsymbol{w}(k)\\
        =&\prod_{t=0}^k \left(I_N - A(t)\bar{\mathcal{L}} \right) \boldsymbol{x}(0) \\&+ \sum_{t=0}^k \prod_{i=t+1}^k \left(I_N - A(i)\bar{\mathcal{L}} \right)A(t)\boldsymbol{w}(t).
    \end{split}
\end{equation*}
Then, we have
\begin{equation*}
    \begin{split}
        \sum_{i=1}^N x_i(k+1)=&\mathbf{1}^T\prod_{t=0}^k \left(I_N - A(t)\bar{\mathcal{L}} \right) \boldsymbol{x}(0) \\&+ \mathbf{1}^T\sum_{t=0}^k \prod_{i=t+1}^k \left(I_N - A(i)\bar{\mathcal{L}} \right)A(t)\boldsymbol{w}(t).
    \end{split}
\end{equation*}
    Since $\sum_{t=0}^k \prod_{i=t+1}^k \left(I_N - A(i)\bar{\mathcal{L}} \right)A(t)\boldsymbol{w}(t)$ is adapted to $\mathcal{F}_k$ and 
    \begin{equation*}
    \begin{split}
        &\mathbb{E}\left[ \left. \sum_{t=0}^k \prod_{i=t+1}^k \left(I_N - A(i)\bar{\mathcal{L}} \right)A(t)\boldsymbol{w}(t) \right| \mathcal{F}_k \right] 
        \\& = A(k)\mathbb{E}\left[ \left. \boldsymbol{w}(k) \right| \mathcal{F}_k\right] 
        + \sum_{t=0}^{k-1} \prod_{i=t+1}^{k-1} \left(I_N - A(i)\bar{\mathcal{L}} \right)A(t)\boldsymbol{w}(t)
        \\& = \sum_{t=0}^{k-1} \prod_{i=t+1}^{k-1} \left(I_N - A(i)\bar{\mathcal{L}} \right)A(t)\boldsymbol{w}(t),
    \end{split}
    \end{equation*}
    By $\mathbb{E}\left[\Vert \boldsymbol{w}(k) \Vert^2\right]\leq \overline{M}_1$, we have
    \begin{equation*}
    \begin{split}
       &\sup_{k\geq 0} \mathbb{E}\left[ \left\Vert\sum_{t=0}^k \prod_{i=t+1}^k \left(I_N - A(i)\bar{\mathcal{L}} \right)A(t)\boldsymbol{w}(t)\right\Vert^2 \right]
       \\&= \sup_{k\geq 0} \sum_{t=0}^k\mathbb{E}\left[ \left\Vert \prod_{i=t+1}^k \left(I_N - A(i)\bar{\mathcal{L}} \right)A(t)\boldsymbol{w}(t)\right\Vert^2 \right]
       \\&\leq \sup_{k\geq 0} \left(\sup_{j\geq 0}\left\Vert\prod_{i=j}^k \left(I_N - A(i)\bar{\mathcal{L}} \right)\right\Vert^2 \sum_{t=0}^k\mathbb{E}\left[ \left\Vert A(t)\boldsymbol{w}(t)\right\Vert^2 \right]\right)
       \\&\leq  \overline{M}_1\max_{i\in\mathcal{V}}\sup_{k\geq 0} \left(\sup_{j\geq 0}\left\Vert\prod_{i=j}^k \left(I_N - A(i)\bar{\mathcal{L}} \right)\right\Vert^2 \sum_{t=0}^k\alpha_i^2(t)\right)\\&< \infty,
    \end{split} 
    \end{equation*}
    then by $L^p$ convergence theorem~\cite[Theorem 4.4.6]{durrett2019probability}, we can know that $\sum_{t=0}^k \prod_{i=t+1}^k \left(I_N - A(i)\bar{\mathcal{L}} \right)A(t)\boldsymbol{w}(t)$ converges a.s. as $k\rightarrow\infty$.\\   
    Moreover, since the product of stochastic matrices $I_N - A(k)\bar{\mathcal{L}}, k=0,1,\ldots$, is still a stochastic matrix, the matrix $\prod_{k=0}^\infty (I_N - A(k)\bar{\mathcal{L}})$ is also a stochastic matrix. Therefore, all the elements of the vector $\prod_{k=0}^\infty (I_N - A(k)\bar{\mathcal{L}}) \boldsymbol{x}(0)$ fall into the convex hull formed by $x_i(0), i = 1, 2, \ldots, N$, i.e., the interval $\left[\min_{i\in\mathcal{V}} x_i(0), \max_{i\in\mathcal{V}} x_i(0)\right]$, and hence $\frac{\mathbf{1}^T}{N}\prod_{t=0}^\infty \left(I_N - A(t)\bar{\mathcal{L}} \right) \boldsymbol{x}(0)$ is bounded.\\
    Combining with Corollary~\ref{coro:weak}, $\boldsymbol{x}(k)$ converges to $\mathbf{1}_N x^*$ in mean square, where
    \begin{equation*}
    \begin{split}
        x^*=&\frac{\mathbf{1}^T}{N}\prod_{t=0}^\infty \left(I_N - A(t)\bar{\mathcal{L}} \right) \boldsymbol{x}(0) \\&+ \frac{\mathbf{1}^T}{N}\sum_{t=0}^\infty \prod_{i=t+1}^\infty \left(I_N - A(i)\bar{\mathcal{L}} \right)A(t)\boldsymbol{w}(t).
    \end{split}
    \end{equation*}
    which satisfies 
    \begin{align*}
        &\mathbb{E}[x^*]=\frac{\mathbf{1}^T}{N}\prod_{t=0}^\infty \left(I_N - A(t)\bar{\mathcal{L}} \right) \boldsymbol{x}(0),\\
        &Var(x^*)\\&=\lim_{k\rightarrow\infty} \mathbb{E}\left[\left\Vert\frac{\mathbf{1}^T}{N}\sum_{t=0}^k\prod_{i=t+1}^k \left(I_N - A(i)\bar{\mathcal{L}} \right)A(t)\boldsymbol{w}(t)\right\Vert^2\right]
        \\&= \lim_{k\rightarrow\infty} \frac{1}{N^2}\sum_{t=0}^k \mathbb{E}\left[\left\Vert \mathbf{1}^T\prod_{i=t+1}^k \left(I_N - A(i)\bar{\mathcal{L}} \right)A(t)\boldsymbol{w}(t)\right\Vert^2\right]
        \\& \leq \lim_{k\rightarrow\infty} \frac{1}{N}\sum_{t=0}^k \mathbb{E}\left[\left\Vert \prod_{i=t+1}^k \left(I_N - A(i)\bar{\mathcal{L}} \right)A(t)\boldsymbol{w}(t)\right\Vert^2\right] \\&\leq\frac{\overline{M}_1}{N}\sup_{k,t\geq 0}\left\Vert\prod_{t}^k \left(I_N - A(t)\bar{\mathcal{L}}\right) \right\Vert^2 \max_{i\in\mathcal{V}}\sum_{t=0}^\infty\alpha_i^2(t)\\&< \infty.
    \end{align*}
    Therefore, the mean square consensus can be achieved.

\section{Proof of Corollary~\ref{coro:as}}\label{app:coro3}
First, we have
    \begin{equation*}
    \begin{split}
        \mathbb{E}\left[ V(k+1)| \mathcal{F}_k \right] &\leq (1-s(t)+2\lambda_2(\bar{\mathcal{L}})\bar{\alpha}(k))V(k) \\&\quad + \left\Vert \bar{\alpha}(k)I_N-\Delta A(k)\right\Vert^2\Vert I_N-J\Vert^2\overline{M}_1\\ &\quad -2\lambda_2(\bar{\mathcal{L}})\bar{\alpha}(k)V(k).
    \end{split}
    \end{equation*}
    Moreover, the condition~c) in Corollary~\ref{coro:as} implies the condition~c) in Corollary~\ref{coro:weak}. 
    Then similar to Theorem~\ref{thm:almost_sure}, by Siegmund and Robbins Theorem~\cite{robbins1971convergence}, we have $V(k)$ converges almost surely as $k\rightarrow\infty$.
    Then, by Corollary~\ref{coro:weak}, we have
    \begin{equation*}
        \lim_{k\rightarrow\infty} V(k) = 0 \quad a.s.
    \end{equation*}
    Therefore, the system can achieve almost sure consensus.

\section{Proof of Theorem~\ref{thm:4.1}}\label{app:thm4.1}
    By~\eqref{eq:delta}, we have
    \begin{equation}
        \begin{split}
        &\delta((m+1)L)\\&=P((m+1)L:mL)\delta(mL)+W((m+1)L:mL),
        \end{split}
    \end{equation}
    where $P(l:k)=\prod_{i=l}^{k}(I_N-\alpha(i)\bar{\mathcal{L}}(i))$ and $W(l:k) = \sum_{i=k}^l P(l:i+1)\alpha(i)(I_N-J)\boldsymbol{w}(i)$.
    
    According to Assumption~\ref{asm:stepsize2},  there exists a constant $C$ and a positive integer $m_0$ such that $\alpha(mL)\leq C\alpha((m+1)L)$ and $\alpha(mL)\leq 1, \forall m\geq m_0$. Then we have 
    \begin{equation}\label{eq:P}
    \begin{split}
        &\left\Vert P^T((m+1)L:mL)P((m+1)L:mL)\right.\\&
        \left. - I_N + \sum_{i=mL}^{(m+1)L-1}\alpha(i)\left(\bar{\mathcal{L}}(i) + \bar{\mathcal{L}}^T(i)\right) \right\Vert
        \\&=\left\Vert P^T((m+1)L:mL)P((m+1)L:mL) \right.\\&\quad
        \left. - I_N+ 2\sum_{i=mL}^{(m+1)L-1}\alpha(i)\bar{\mathcal{L}}(i) \right\Vert
        \\&\leq \alpha^2(mL)(2^{2L}-2L-1)(\max\{\sup_{k\geq 0}\Vert \bar{\mathcal{L}}(k)\Vert,1\})^{2L}
        \\&\leq\alpha^2((m+1)L)M_L,
    \end{split}
    \end{equation}
    where $M_L = C^2(2^{2L}-2L-1)(\max\{\sup_{k\geq 0}\Vert \bar{\mathcal{L}}(k)\Vert,1\})^{2L}$.

By \eqref{eq:v_square}~-~\eqref{eq:w_square}, we have
\begin{equation}\label{eq:E_x_square_2}
\begin{split}
     &\mathbb{E}\left[\Vert \boldsymbol{x}(k+1)\Vert^2\right] \\&= \mathbb{E}\left[ \Vert \left(I_N - \alpha(k)\bar{\mathcal{L}}(k) \right) \boldsymbol{x}(k) + \alpha(k)\boldsymbol{w}(k)\Vert^2\right]
     \\& = \mathbb{E}\left[ \Vert \left(I_N - \alpha(k)\bar{\mathcal{L}}(k) \right) \boldsymbol{x}(k) \Vert^2 \right] + \alpha^2(k)\mathbb{E}\left[ \Vert \boldsymbol{w}(k)\Vert^2\right]
     \\&\leq \Vert I_N - \alpha(k)\bar{\mathcal{L}}(k)  \Vert^2 \mathbb{E}\left[ \Vert\boldsymbol{x}(k) \Vert^2\right]  + \alpha^2(k)\mathbb{E}\left[ \Vert \boldsymbol{w}(k)\Vert^2\right]
     \\&\leq \left(\Vert I_N - \alpha(k)\bar{\mathcal{L}}(k)\Vert^2+\alpha^2(k)\left(C_L+C_M^\mathrm{(1)}\right) \right) \mathbb{E}\left[ \Vert\boldsymbol{x}(k) \Vert^2\right] \\&\quad + \alpha^2(k)C_M^\mathrm{(2)}
\end{split}
\end{equation}
By~\eqref{eq:P} and~\eqref{eq:E_x_square_2}, we have
\begin{equation}
\begin{split}
        &\mathbb{E}\left[\Vert \boldsymbol{x}((m+1)L)\Vert^2\right] \\&\leq \prod_{i=mL}^{(m+1)L-1}\left(\Vert I_N - \alpha(i)\bar{\mathcal{L}}(i)\Vert^2 + \alpha^2(i)\left(C_L+C_M^\mathrm{(1)}\right) \right) \\&\quad \times\mathbb{E}\left[ \Vert\boldsymbol{x}(mL) \Vert^2\right] + \sum_{i=mL}^{(m+1)L-1}\alpha^2(i)C_M^\mathrm{(2)} \Phi_3((m+1)L-1:i)
        \\&\leq \left(1-2\lambda_2\left(\sum_{i=mL}^{(m+1)L-1}\bar{\mathcal{L}}(i)\right)\alpha((m+1)L) \right.\\&\left. \quad +\alpha^2((m+1)L)B_1 \right)\mathbb{E}\left[\Vert\boldsymbol{x}(mL)\Vert^2\right] \\&\quad+ \sum_{i=mL}^{(m+1)L-1}\alpha^2(i)C_M^\mathrm{(2)} \Phi_3((m+1)L-1:i)
        \\&\leq \Phi_2(m:0)\Vert\boldsymbol{x}(0)\Vert^2 \\&\quad +\sum_{l = 0}^m \Phi_2(m:l)\sum_{i=lL}^{(l+1)L-1}\alpha^2(i)C_M^\mathrm{(2)} \Phi_3((m+1)L-1:i),
\end{split}
\end{equation}
where 
\begin{align*}
    &B_1\triangleq M_L + C^2\left(C_L+C_M^\mathrm{(1)}\right),\\
    &\Phi_2(k:l)\triangleq  \prod_{j=l}^k \left(1-2\inf_{j\geq 0}\lambda_2\left(\sum_{i=jL}^{(j+1)L-1}\bar{\mathcal{L}}(i)\right)\alpha((j+1)L)\right.\\&\qquad\qquad\left.+\alpha^2((j+1)L)B_1 \right),\\
    &\Phi_3(k:l) \triangleq \prod_{j=k}^{l}\left(\Vert I_N - \alpha(j)\bar{\mathcal{L}}(j)\Vert^2+\alpha^2(j)\left(C_L+C_M^\mathrm{(1)}\right) \right).
\end{align*}
Moreover, for any $k_0 \in \{0, 1, 2, \ldots, L-1\}$, we have
\begin{equation}
\begin{split}
    &\mathbb{E}\left[\Vert \boldsymbol{x}(k_0+mL)\Vert^2\right] \\
    &\leq \prod_{i=mL}^{k_0+mL-1}\left(\Vert I_N - \alpha(i)\bar{\mathcal{L}}(i)\Vert^2 +\alpha^2(i)\left(C_L+C_M^\mathrm{(1)}\right) \right) \\&\quad \times\mathbb{E}\left[ \Vert\boldsymbol{x}(mL) \Vert^2\right] + \sum_{i=mL}^{k_0+mL-1}\alpha^2(i)C_M^\mathrm{(2)} \Phi_3(k_0+mL-1:i)
    \\&\leq \prod_{i=mL}^{k_0+mL-1}\left(\Vert I_N - \alpha(i)\bar{\mathcal{L}}(i)\Vert^2 +\alpha^2(i)\left(C_L+C_M^\mathrm{(1)}\right) \right) \\&\quad \times\left(\Phi_2(m-1:0)\Vert\boldsymbol{x}(0)\Vert^2 \right.\\&\quad \left.+\sum_{l = 0}^{m-1} \Phi_2(m-1:l)\sum_{i=lL}^{(l+1)L-1}\alpha^2(i)C_M^\mathrm{(2)} \Phi_3(mL-1:i)\right) \\& \quad+ \sum_{i=mL}^{k_0+mL-1}\alpha^2(i)C_M^\mathrm{(2)} \Phi_3(k_0+mL-1:i).
\end{split}
\end{equation}
Then we have $\mathbb{E}\left[\Vert \boldsymbol{w}(k_0+mL) \Vert^2\right]\leq \overline{M}_2$, where
\begin{equation}\label{eq:M_2}
\begin{split}
    \overline{M}_2&=\left(C_L+C_M^\mathrm{(1)}\right) \sup_{m\geq 0}\left(\sup_{k_0\geq 0}\prod_{i=mL}^{k_0+mL-1}\left(\Vert I_N - \alpha(i)\bar{\mathcal{L}}(i)\Vert^2 \right. \right.\\&\quad\left.\left.+\alpha^2(i)\left(C_L+C_M^\mathrm{(1)}\right) \right)\left(\Phi_2(m-1:0)\Vert\boldsymbol{x}(0)\Vert^2 \right.\right.\\&\quad \left.\left.+\sum_{l = 0}^{m-1} \Phi_2(m-1:l)\sum_{i=lL}^{(l+1)L-1}\alpha^2(i)C_M^\mathrm{(2)} \Phi_3(mL-1:i)\right)\right. \\& \quad \left.+ \sup_{k_0\geq 0}\sum_{i=mL}^{k_0+mL-1}\alpha^2(i)C_M^\mathrm{(2)} \Phi_3(k_0+mL-1:i)\right) \\&\quad + C_M^\mathrm{(2)}.
\end{split}
\end{equation}
    
    By~\eqref{eq:delta} and the definition of $V(k)$, we have
    \begin{equation}\label{ieq:2}
    \begin{split}
        &V((m+1)L) \\&= \delta^T(mL)P^T((m+1)L:mL)P((m+1)L:mL)\delta(mL)\\&\quad+2\delta^T(mL)P^T((m+1)L:mL)W((m+1)L:mL)\\&\quad+W((m+1)L:mL)^T W((m+1)L:mL)
        \\&= \delta^T(mL)\left(P^T((m+1)L:mL)P((m+1)L:mL)\right.\\&\quad\left.-I_N + 2\sum_{i=mL}^{(m+1)L-1}\alpha(i)\bar{\mathcal{L}}(i)\right)\delta(mL) + V(mL)\\&\quad - 2\delta^T(mL)\left(\sum_{i=mL}^{(m+1)L-1}\alpha(i)\bar{\mathcal{L}}(i)\right)\delta(mL)\\&\quad+2\delta^T(mL)P^T((m+1)L:mL)W((m+1)L:mL)\\&\quad+W((m+1)L:mL)^T W((m+1)L:mL)
        \\&\leq V(mL) + \alpha^2((m+1)L)M_L V(mL)\\&\quad - 2\delta^T(mL)\left(\sum_{i=mL}^{(m+1)L-1}\alpha(i)\bar{\mathcal{L}}(i)\right)\delta(mL) 
    \\&\quad+2\delta^T(mL)P^T((m+1)L:mL) W((m+1)L:mL)\\&\quad+W((m+1)L:mL)^T W((m+1)L:mL)
        \\&\leq V(mL) + \alpha^2((m+1)L)M_L V(mL)\\&\quad - 2\alpha((m+1)L)\delta^T(mL)\left(\sum_{i=mL}^{(m+1)L-1}\bar{\mathcal{L}}(i)\right)\delta(mL) \\&\quad+2\delta^T(mL)P^T((m+1)L:mL)W((m+1)L:mL)\\&\quad+W((m+1)L:mL)^T W((m+1)L:mL).
    \end{split}
    \end{equation}

    Similar to~\eqref{eq:1}, we have
    \begin{equation}
        \mathbb{E}[\delta^T(mL)P^T((m+1)L:mL)W((m+1)L:mL)] = 0
    \end{equation}
    
    By $\mathbb{E}\left[\Vert \boldsymbol{w}(k_0+mL) \Vert^2\right]\leq \overline{M}_2$, we have
    \begin{equation}
    \begin{split}
         &\mathbb{E}\left[\Vert W((m+1)L:mL)\Vert^2\right]\\
         &= \mathbb{E}\left[\left\Vert\sum_{i=mL}^{(m+1)L-1} P((m+1)L-1:i)\alpha(i)(I_N-J)\boldsymbol{w}(i)\right\Vert^2\right]\\
         &=\sum_{i=mL}^{(m+1)L-1}  \mathbb{E}\left[\left\Vert P((m+1)L-1:i)\alpha(i)(I_N-J)\boldsymbol{w}(i)\right\Vert^2\right]\\
         &\leq \sum_{i=mL}^{(m+1)L-1}  \alpha^2(i)\Vert I_N-J\Vert^2 \Vert P((m+1)L-1:i)\Vert^2\\&\quad\times\mathbb{E}\left[\Vert\boldsymbol{w}(i)\Vert^2\right]\\
         &= \sum_{i=mL}^{(m+1)L-1}  \alpha^2(i)\Vert I_N-J\Vert^2 \Vert P((m+1)L-1:i)\Vert^2\\&\quad\times\mathbb{E}\left[\Vert\boldsymbol{w}(i)\Vert^2\right]\\
         &\leq N_L \sum_{i=mL}^{(m+1)L-1}\alpha^2(i),
    \end{split}
    \end{equation}
    where $N_L \triangleq \overline{M}_2 \Vert I_N-J\Vert^2 \sup_{m\geq 0}\Vert P((m+1)L-1:i)\Vert^2$.

    Then for all $m\geq m_0$, we have
    \begin{equation}\label{ieq:1}
    \begin{split}
        &\mathbb{E}[V((m+1)L)]\\&\leq \left(1-2\lambda_2\left(\sum_{i=mL}^{(m+1)L-1}\bar{\mathcal{L}}(i)\right)\alpha((m+1)L)\right.\\&\quad\left.+\alpha^2((m+1)L)M_L\right)\mathbb{E}[V(mL)]+N_L \sum_{i=mL}^{(m+1)L-1}\alpha^2(i)
        \\&\leq \left(1-2\inf_{m\geq 0}\lambda_2\left(\sum_{i=mL}^{(m+1)L-1}\bar{\mathcal{L}}(i)\right)\alpha((m+1)L)\right.\\&\quad\left.+\alpha^2((m+1)L)M_L\right)\mathbb{E}[V(mL)]+N_L \sum_{i=mL}^{(m+1)L-1}\alpha^2(i),
    \end{split}
    \end{equation}

    By Lemma~\ref{lm:sequence} and \eqref{ieq:1}, we have $\lim_{m\rightarrow\infty}\mathbb{E}[V(mL)]=0$.
    
    Then, for any given $\epsilon>0$, there exists an $m_1>0$ such that $\mathbb{E}[V(mL)]\leq \epsilon, \forall m\geq m_1$ and $\alpha^2(t)<\epsilon, \forall t\geq m_1 L$.
    Let $m_k = \lfloor k/L \rfloor$, then for any $k\geq m_1 L$ we have $m_k\geq m_1$ and
    \begin{equation*}
        0\leq k - m_k L < L.
    \end{equation*}
    
    From the definition of $V(k)$, we have
    \begin{equation}
    \begin{split}
        \mathbb{E}[V(k)]\leq& \phi(k:m_k L)\mathbb{E}[V(m_k L)] \\& + \beta\sum_{i=m_k L}^{k-1}\phi(k-1:i)\alpha^2(i), \forall k\geq m_1 L,
    \end{split}
    \end{equation}
    where $\beta=\Vert I_N-J\Vert^2\overline{M}_1$, $\phi(k:l) = \prod_{i=l}^{k-1}(1-2\lambda_2(\bar{\mathcal{L}}(i))\alpha(i)+\alpha^2(i)\Vert\bar{\mathcal{L}}(i)\Vert^2), \forall k>0, 0\leq i<k$, and $\phi(i,i)=1, \forall i\geq 0$. There exists a constant $\gamma\geq 1$ such that $\phi(k:l)\leq \gamma^{k-l}, \forall k\geq l\geq 0$. Therefore, we have
    \begin{equation}
        \mathbb{E}[V(k)]\leq \gamma^L\epsilon + \gamma^L\beta\sum_{i=m_k L}^{k-1}\alpha^2(i)
        \leq \gamma^L(1+\beta L)\epsilon, \forall k\geq m_1 L.
    \end{equation}
    Then \eqref{eq:lyapunov2} holds due to the arbitrariness of $\epsilon$.

\section{Proof of Theorem~\ref{thm:4.2}}\label{app:thm4.2}

    Since $\mathbf{1}^T_N\mathcal{L}(k) = 0$
    \begin{equation*}
    \begin{split}
         \sum_{i=1}^N x_i(k+1) &= \sum_{i=1}^N x_i(k) + \alpha(k)\sum_{i=1}^N w_i(k)
         \\&= \sum_{i=1}^N x_i(0) + \sum_{t=0}^k\alpha(t)\sum_{i=1}^N w_i(t).
    \end{split}
    \end{equation*}
    By Theorem 4.4.6 in~\cite{durrett2019probability}, we can know that $\sum_{t=0}^k\alpha(t)\sum_{i=1}^N w_i(t)$ converges in mean square as $k\rightarrow\infty$. Then we have
    \begin{equation*}
    \begin{split}
          x^*=\frac{1}{N}\sum_{i=1}^N x_i(0) + \frac{1}{N}\sum_{t=0}^\infty\alpha(t)\sum_{i=1}^N w_i(t),
    \end{split}
    \end{equation*}
    and 
    \textcolor{black}{
    \begin{align*}
        \mathbb{E}[x^*]&=\frac{1}{N}\sum_{i=1}^N x_i(0),\\
        Var(x^*) &=\mathbb{E}\left[\left(\frac{1}{N}\sum_{t=0}^\infty\alpha(t)\sum_{i=1}^N w_i(t)\right)^2\right]
        \\&\leq \frac{\overline{M}_2}{N}\sum_{t=0}^\infty \alpha^2(t)
        < \infty.
    \end{align*}
    }

\section{Proof of Lemma~\ref{lm:as}}\label{app:lm_as}

    By Theorem~\ref{thm:weak}, we have 
    \begin{equation}\label{eq:weak}
        \lim_{k\rightarrow\infty} \mathbb{E}[V(k)] = 0.
    \end{equation}
    
    By $\mathbb{E}\left[\Vert \boldsymbol{w}(k_0+mL) \Vert^2\right]\leq \overline{M}_2$, there exists a constant $\tilde{N}_L >0$ such that
    \begin{equation}\label{ieq:3}
    \begin{split}
        &\sup_{m\geq 0}\mathbb{E}\left[\Vert W(k_0+(m+1)L_0:k_0+m L_0)\Vert^2|\mathcal{F}_{k_0+m L_0} \right]\\&\leq\tilde{N}_L \sup_{k\geq 0, m>0}\mathbb{E}\left[\Vert \boldsymbol{w}(k+m)\Vert^2|\mathcal{F}_{k} \right] 
        \sum_{i=k_0+m L_0}^{k_0+(m+1)L_0-1}\alpha^2(i)
    \end{split}
    \end{equation}
     
     Similar to~\eqref{ieq:2}, we have
     \begin{equation}
     \begin{split}
         &V(k_0+(m+1)L_0)
        \\&\leq \left(1-2\inf_{m\geq 0}\lambda_2\left(\sum_{i=k_0+mL_0}^{k_0+(m+1)L_0-1}\bar{\mathcal{L}}(i)\right)\alpha(k_0+(m+1)L_0)\right.\\&\quad\left.+\alpha^2(k_0+(m+1)L_0)M_L\right)V(k_0+mL)\\&\quad - 2\alpha(k_0+(m+1)L_0)\delta^T(k_0+mL_0)\\&\quad\times\left(\sum_{i=k_0+mL_0}^{k_0+(m+1)L_0-1}\bar{\mathcal{L}}(i)\right)\delta(k_0+mL_0) \\&\quad+2\delta^T(k_0+mL_0)P^T(k_0+(m+1)L_0:k_0+mL_0)\\&\quad\times W(k_0+(m+1)L_0:k_0+mL_0)\\&\quad+\Vert W(k_0+(m+1)L_0:k_0+mL_0) \Vert^2.
     \end{split}
     \end{equation} 
     Since $V(k_0+m L_0)$ is adapted to $\mathcal{F}_{k_0+m L_0}$, we have
     \begin{equation}\label{ieq:4}
     \begin{split}
         &\mathbb{E}\left[V(k_0+(m+1)L_0)|\mathcal{F}_{k_0+m L}\right]\\&\leq \left(1+\alpha^2(k_0+(m+1)L_0)M_L\right)V(k_0+mL_0) \\&\quad +\tilde{N}_L \sup_{k\geq 0, m>0}\mathbb{E}\left[\Vert \boldsymbol{w}(k+m)\Vert^2|\mathcal{F}_{k} \right] 
        \sum_{i=k_0+m L_0}^{k_0+(m+1)L_0-1}\alpha^2(i).
     \end{split}
     \end{equation}
    By Theorem 4.2.12 in~\cite{durrett2019probability} and \eqref{eq:weak}, we have
    \begin{equation*}
         \lim_{m\rightarrow\infty} V(k_0+mL_0) = 0 \quad a.s.
    \end{equation*}

\section{Proof of Theorem~\ref{thm:4.3}}\label{app:thm4.3}

By Assumption~\ref{asm:joint_connectivity}, we have $\inf_{m\geq 0}\lambda_2\left(\sum_{i=l+mL}^{l+(m+1)L-1}\mathcal{L}(i)\right)>0, \forall l\in\{0,1,\ldots, L-1\}$. Then by Lemma~\ref{lm:as}, we have
    \begin{equation*}
        \lim_{m\rightarrow\infty}V(l+mL)=0\quad a.s., \forall l\in\{0,1,\ldots, L-1\},
    \end{equation*}
    which implies
    \begin{equation*}
         \lim_{k\rightarrow\infty}V(k)=0\quad a.s.
    \end{equation*}
    Then by Theorem 4.4.6 in~\cite{durrett2019probability}, we have
    $\sum_{t=0}^k\alpha(t)\sum_{i=1}^N w_i(t)$ converges almost surely as $k\rightarrow\infty$. Together with \eqref{ieq:3} and \eqref{ieq:4}, we have $x_i(k),\forall i\in\mathcal{V}$ converge almost surely as $k\rightarrow\infty$.
    
\end{document}